\newtheorem{theorem}{Theorem}
\newtheorem{lemma}{Lemma}
\newtheorem{proposition}{Proposition}
\newtheorem{definition}{Definition}
\title{\LARGE \bf
Optimal Causal Rate-Constrained Sampling of the Wiener Process
}
\author{Nian Guo and Victoria Kostina
\thanks{N. Guo and V. Kostina are with the Department
of Electrical Engineering, California Institute of Technology, Pasadena, CA, 91125 USA. E-mail: \{nguo,vkostina\}@caltech.edu. This work was supported in part by the National
Science Foundation (NSF) under grant CCF-1751356.}
}
\begin{document}

\maketitle
\thispagestyle{empty}
\pagestyle{empty}

\maketitle
\begin{abstract}
We consider the following communication scenario. An encoder causally observes the Wiener process and decides when and what to transmit about it. A decoder makes real-time estimation of the process using causally received codewords. We determine the causal encoding and decoding policies that jointly minimize the mean-square estimation error, under the long-term communication rate constraint of $R$ bits per second. We show that an optimal encoding policy can be implemented as a causal sampling policy followed by a causal compressing policy. We prove that the optimal encoding policy samples the Wiener process once the innovation passes either $\sqrt{\frac{1}{R}}$ or $-\sqrt{\frac{1}{R}}$, and compresses the sign of the innovation (SOI) using a 1-bit codeword. The SOI coding scheme achieves the operational distortion-rate function, which is equal to $D^{\mathrm{op}}(R)=\frac{1}{6R}$. Surprisingly, this is significantly better than the distortion-rate tradeoff achieved in the limit of infinite delay by the best non-causal code. This is because the SOI coding scheme leverages the free timing information supplied by the zero-delay channel between the encoder and the decoder. The key to unlock that gain is the event-triggered nature of the SOI sampling policy.  In contrast, the distortion-rate tradeoffs achieved with deterministic sampling policies are much worse: we prove that the causal informational distortion-rate function in that scenario is as high as $D_{\mathrm{DET}}(R) = \frac{5}{6R}$. It is achieved by the uniform sampling policy with the sampling interval $\frac{1}{R}$. In either case, the optimal strategy is to sample the process as fast as possible and to transmit 1-bit codewords to the decoder without delay. Finally, we show that the SOI coding scheme also minimizes the mean-square cost of a continuous-time control system driven by the Wiener process, and controlled via rate-constrained impulses.
\end{abstract}
\begin{IEEEkeywords}
Causal lossy source coding, sequential estimation, sampling.
\end{IEEEkeywords}

\section{Introduction}
\subsection{System Model}\label{IA}
Consider the system in Fig.~\ref{Fig1}. A source outputs a continuous-time standard Wiener process $\{W_t\}_{t= 0}^T$, within the time horizon $[0,T]$. An encoder observes the process and decides to disclose information about it at a sequence of non-decreasing codeword-generating time stamps
\begin{equation} \label{timestamp}
0\leq \tau_1\leq \tau_2\leq \dots\leq \tau_N\leq T.    
\end{equation}
These time stamps can be random and they can causally depend on the Wiener process. Consequently, the total number of time stamps $N$ can also be random. At time ${\tau}_i$, the encoder chooses to generate a binary codeword $U_i$, with a length ${\ell}_i\in \mathbb {Z}^+$, based on the past observed process $\{W_t\}_{t=0}^{\tau_i}$. Then, the codeword $U_i$ is passed through a noiseless digital channel to the decoder without delay. Upon receiving the codeword $U_i$ at time ${\tau}_i$, based on all the received codewords $U^i$ and the codeword-generating time stamps $\{\tau_1,\dots,\tau_i\}$, the decoder updates its running estimate of the Wiener process, yielding $\{\hat {W}_t\}_{t=\tau_i}^T$. The decoder updates its estimate $\{\hat {W}_t\}_{t={\tau}_{i+1}}^T$ once the next codeword $U_{i+1}$ is received at $\tau_{i+1}$.

\tikzstyle{int}=[draw, fill=white!20, minimum size=2.6em]
\tikzstyle{init} = [pin edge={to-,thin,black}]
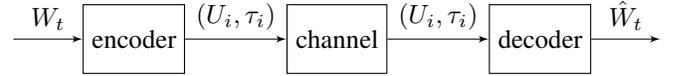
\begin{figure}[h!]
\centering
\begin{tikzpicture}[node distance=2.7cm,auto,>=latex']
   \node [int] (a) {encoder};
    \node (b) [left of=a,node distance=1.6cm, coordinate] {a};
    \node [int] (c) [right of=a] {channel};
    \node (d) [left of=c,node distance=2.1cm, coordinate] {c};
    \node [int] (e) [right of=c] {decoder};
    \node [coordinate] (end) [right of=e, node distance=1.6cm]{};
    \path[->] (b) edge node {$W_t$} (a);
    \path[->] (a) edge node {$(U_i,\tau_i)$} (c);
    \path[->] (c) edge node {$(U_i,\tau_i)$} (e);
    \draw[->] (e) edge node {$\hat W_t$} (end) ;
\end{tikzpicture}
\caption{System Model.} \label{Fig1}
\end{figure}

The communication between the encoder and the decoder is subject to a constraint on the long-term average transmission rate,
\begin{equation}\label{comm_cons}
    \frac{1}{T}\mathbb E\left[\sum_{i=1}^N\ell_i\right]\leq R ~\text{(bits per sec)}.
\end{equation}

The \emph{distortion} is measured by the long-term mean-square error (MSE) between $W_t$ and $\hat W_t$, $0\leq t\leq T$,
\begin{equation} \label{MSE}
      \frac{1}{T}\mathbb E \left[\int_{0}^T (W_t-\hat{W}_t)^2 dt\right]\leq d.
\end{equation}
We aim to find the jointly optimal encoding and decoding policies that achieve the best tradeoffs between the rate in \eqref{comm_cons} and the MSE in \eqref{MSE}.

\subsection{Literature Review}
Finding sampling policies at the encoder and estimation policies at the decoder to jointly minimize the end-to-end distortion under transmission constraints falls into the realm of optimal scheduling and sequential estimation problems. These problems are often encountered in network control systems, and has attracted significant research interest with the development of robotics, the Internet of things, and the smart grid.

\AA str\"{o}m and Bernhardsson \cite{1} compared uniform and symmetric threshold sampling policies\footnote{The symmetric threshold sampling corresponds to sampling the process if its current value exceeds or falls short of the previous sample by exactly a certain threshold.} (referred to as Riemann and Lebesgue sampling, respectively) in continuous-time first-order stochastic systems with a Wiener process disturbance, and showed that the Lebesgue sampling gives a lower distortion than the Riemann sampling under the same average sampling frequency. 
Imer and Ba\c{s}ar \cite{2} considered causal estimation of i.i.d. processes under MSE and the constraint on the total number of transmissions over a finite time horizon, and showed via dynamic programming, that the time-varying symmetric threshold sampling policy is optimal for i.i.d. Gaussian processes \cite{2}.
For causal estimation of multidimensional discrete-time Gauss-Markov processes, Cogill et al. \cite{cogill} aimed to find a sampling policy that minimizes a cost function combining the average weighted MSE and the average transmission cost over the infinite horizon. Cogill et al. \cite{cogill} proposed a threshold policy that transmits a sample once the absolute value of the squared error exceeds some constant, and proved that this suboptimal policy leads to a cost that is within a factor of 6 of the optimal achievable cost. 
Using dynamic programming and majorization theory, Lipsa and Martins \cite{3} proved that a time-varying symmetric threshold policy and a Kalman-like filter jointly minimize a discounted cost function consisting of MSE and a communication cost, for scalar discrete-time Gauss-Markov processes over a finite time horizon.
For partially observed discrete-time Gauss-Markov processes, Wu et al. \cite{Wu} fixed an event-triggered policy, where the encoder transmits only if the L-infinity norm of the Mahalanobis transformation of the measurement innovation exceeds a constant, and derived both the accurate and approximate (assuming Gaussian samples) minimum MSE (MMSE) estimator to combine with that sampling policy. Wu et al. \cite{Wu} also derived the relation between the transmission frequency and the threshold of the event-triggered policy. 
Rabi et al. \cite{4} formulated the problem of causal estimation of continuous-time scalar linear diffusion processes under the constraint on the total number of transmissions over a finite time horizon as an optimal stopping time problem. Rabi et al. \cite{4} solved the optimal stopping time problem iteratively to show that the optimal deterministic sampling policy for the Wiener process is a uniform sampling policy, and that the optimal event-triggered sampling policy is a time-varying symmetric threshold policy. Rabi et al. \cite{4} derived a dynamic program that can be used to numerically compute the optimal sampling policy for the Ornstein-Uhlenbeck process. 
Nar and Ba\c{s}ar \cite{5} extended the optimal stopping time problem in \cite{4} to the multidimensional Wiener process, and proved that a symmetric threshold policy remains optimal over both finite and infinite time horizons. In particular, Nar and Ba\c{s}ar \cite{5} showed that the optimal threshold over the infinite horizon is a constant depending on the average sampling frequency. 
For autogressive Markov processes driven by an i.i.d. process with unimodal and symmetric distribution, Charkravorty and Mahajan \cite{chk2} used ideas from renewal theory to prove that there is no loss of optimality if one focuses on sampling policies with time-homogeneous thresholds over the infinite time horizon. Charkravorty and Mahajan \cite{chk2} also proved that a symmetric threshold sampling policy together with an Kalman-like estimator achieves the distortion-transmission function, where the distortion-transmission function is defined as the minimum achievable expected average (or discounted) distortion subject to the expected average (or discounted) number of transmissions. 
For the same scenario as in \cite{chk2}, Molin and Hirche \cite{molin} proposed an iterative algorithm to show that the optimal event-triggered policy converges to a symmetric threshold policy.

In contrast to the scenarios in \cite{1}-\cite{molin}, where the communication channel is assumed to be perfect, \cite{6}-\cite{Ren} consider imperfect communication channels, such as a channel with an i.i.d. delay \cite{6}, a channel with i.i.d. Gamma noise \cite{7}, a fading channel \cite{Ren}, and a packet-drop channel \cite{chk3}.
Sun et al. \cite{6} proved that a symmetric threshold policy remains optimal even when the samples of the Wiener process experience an i.i.d. random transmission delay, but the threshold depends on the distribution of channel delay and is different from the one in \cite{5}. 
Using dynamic programming, Gao et al. in \cite{7} derived the optimal sampling, encoding and decoding policies for event-triggered sampling of an i.i.d. Laplacian source with subsequent transmission over a channel with a Gamma additive noise, under an average power constraint.
For discrete-time first-order autogressive Markov processes considered in \cite{chk2}-\cite{molin}, Ren et al. \cite{Ren} introduced a fading channel between the encoder and the decoder, where a successful transmission depends on both the channel gains and the transmission power, and found the optimal encoding and decoding policies that minimize an infinite horizon cost function combining the MSE and the power usage.
For first-order autogressive sources considered in \cite{chk2}\cite{molin}\cite{Ren}, Chakravorty and Mahajan \cite{chk3} further proved that the optimal estimation policy is a Kalman-like filter and the optimal sampling policy is symmetric threshold policy when the communication channel is a packet-drop channel with Markovian states, where the packet-drop probability depends on the channel state and the transmission power chosen by the encoder. 

Nayyar et al. \cite{Nayyar} considered a scenario where the encoder relies on the energy harvested from the environment to transmit messages to the estimator, with each transmission costing 1 unit of energy, and  proved that the optimal sampling strategy is a symmetric threshold policy, provided that the finite-state Markov source has symmetric and unimodal distribution and the distortion measure is either the Hamming distortion function or the power of the estimation error $|x-y|^p$. 
For the non-causal lossy source coding of a uniformly sampled Wiener process, Kipnis et al. \cite{8} derived the trade-offs between the sampling frequency, the communication bitrate and the estimation MSE, achievable in the limit of infinite delay.

Kofman and Braslavsky \cite{Kofman} designed a quantized event-triggered controller for noiseless partially observed continuous-time LTI systems with an unknown initial state to ensure asymptotic convergence of the system to the origin with zero average rate, seemingly violating the data-rate theorem. Similar to \cite{Kofman}, the fact that sampling time stamps of event-triggered policies carry information is also exploited in \cite{Pearson}-\cite{Kho}. Pearson et al. \cite{Pearson} considered encoding the deterministic and possibly nonuniformly sampled states of noiseless continuous-time LTI systems into symbols in a finite alphabet with a \emph{free} symbol representing the absence of transmission. For discrete-time linear systems with additive disturbances, Khina et al. \cite{23} considered a setting where at each discrete-time instant, the encoder chooses to either transmit 1 bit or transmit the free symbol, and designed an adaptive scalar quantizer with three bins using a Lloyd-Max algorithm with the quantization bin of the largest probability corresponding to the free symbol. Ling \cite{Ling} designed a periodic event-triggered quantization policy to stabilize continuous-time LTI systems subject to i.i.d. feedback dropouts, bounded network delay and bounded noise, which leads to a stabilizing rate that is lower than the one the data-rate theorem requires for time-triggered policies. Khojasteh et al. \cite{Kho} considered sampling noiseless continuous-time LTI systems when the state estimation error exceeds an exponentially decaying function, and found that the information transmission rate required for stabilizing systems can be any positive value for small enough delays and starts to increase when the delay exceeds a critical value.  Quantized event-triggered control has also been studied for continuous-time LTI systems with bounded disturbances \cite{Lehmann}, for partially-observed continuous-time LTI systems without noise \cite{Tawani} and with bounded noise \cite{Abdel}, for discrete-time noiseless linear systems \cite{Yoshi}, and for partially observed continuous-time LTI systems with time-varying network delay \cite{MIN}. Event-triggered control schemes to guarantee exponential stabilization were designed both for continuous-time LTI systems with bounded disturbances under a bounded rate constraint \cite{Talla2016} and for noiseless continuous-time LTI systems under time-varying rates constraints and channel blackouts \cite{Talla2019}.

\subsection{Contribution}
In this paper, we adopt an information-theoretic approach to continuous-time causal estimation, by considering the optimal tradeoff between the achievable MSE and the average number of bits communicated. This is different from the models studied in \cite{1}-\cite{Nayyar}, where communication cost is measured by the number of transmissions, and each infinite-precision transmission can carry an infinite amount of information. For communication over digital channels, a bitrate constraint, routinely considered in information theory, is more appropriate. Our setting is also different from \cite{8} in that we do not ignore delay: our distortion at time $t$ is measured with respect to the actual value of the process at time $t$; whereas \cite{8} permits an infinite delay, following a standard assumption in information theory. In contrast to the works \cite{Kofman}-\cite{Talla2019} that do not claim or consider the optimality of the proposed event-triggered policies, we show the optimality of the SOI coding scheme for our problem setting in Section~\ref{IA}.

We first show that an optimal encoding policy that achieves the operational distortion-rate function (ODRF) can be implemented as a causal sampling policy coupled with a compressing policy. Then, we prove that the optimal encoding policy is a symmetric threshold sampling policy with threshold $\pm \sqrt{\frac{1}{R}}$ and a 1-bit SOI compressor. The optimal decoding policy causally estimates the Wiener process by summing up the received innovations. This coding scheme, termed the SOI coding scheme, achieves the ODRF $D^{\mathrm{op}}(R)=\frac{1}{6R}$. 

In the SOI coding scheme, the encoder continuously tracks the process, generating a bit once the process passes the threshold. To reconstruct the process, both those bits and their time stamps are required at the decoder. In the scenario where, due to implementation constraint, the sampler is process-agnostic, or the decoder has no access to timing information, one has to adopt a deterministic sampling policy. We prove that a uniform sampling policy with the sampling interval $\frac{1}{R}$ achieves the informational distortion-rate function (IDRF), which is equal to $D_{\mathrm{DET}}(R)=\frac{5}{6R}$. To define the IDRF for the deterministic sampling policies, we change the rate constraint \eqref{comm_cons} to a directed mutual information rate constraint, which serves as an information-theoretic lower bound to \eqref{comm_cons}. This is a consequence of our real-time distortion constraint. Had we allowed delay, coding gains would have been possible by, for example, jointly compressing blocks of those bits. To confirm  that the IDRF is a meaningful gauge of what is achievable in the zero-delay causal compression, we implement the greedy Lloyd-Max compressor \cite{23} to compress the innovations $W_{\tau_i}-\hat W_{\tau_{i-1}}$, and verify that the performance of the resulting scheme is close to the IDRF. 

To study the tradeoffs between the sampling frequency and the rate per sample under a rate per second constraint $R$, we define operational and informational distortion-frequency-rate function (ODFRF and IDFRF). The ODFRF and the IDFRF are both minimized by the maximum sampling frequency $R$ and the minimum rate $1$ bit/sample, implying that sampling the process as fast as possible under the rate constraint and transmitting 1-bit codewords to the decoder without delay is optimal.

Surprisingly, the distortion achieved by the SOI coding scheme is smaller than the distortion achieved by the best non-causal codes. The reason is that in the SOI coding scheme, the encoder and the decoder know the random sampling time stamps perfectly, whereas in classical non-causal coding, this free timing information is not considered. 

We also show that the SOI coding scheme continues to be optimal when there is a fixed channel delay between the codeword-generating time and the codeword-delivery time. We show that if the decoder is allowed to wait for only the next codeword before decoding, the MSE can be further decreased.  

Finally, we prove that the SOI coding scheme is also optimal in a rate-constrained event-triggered control scenario with a continuous-time stochastic plant driven by the Wiener process and controlled via impulse control. The SOI code minimizes the mean-square cost between the state of the stochastic plant and the desirable state $0$.

 A part of this work will be presented at the 57th Annual Allerton Conference \cite{Nian}; the conference version does not contain Section~\ref{Control_sec} or any proofs.

\subsection{Paper organization}
In Section~\ref{SecII}, we define causal codes, distortion-rate and distortion-frequency-rate functions. In Section \ref{MR}, we state the main results of this paper, including the optimal causal sampling and compressing policies and the tradeoffs between the sampling frequency and the rate per sample. In Section~\ref{PMR}, we show the proof of the main results. In Section~\ref{V}, we discuss the distortion-rate tradeoffs when delays are allowed at both the encoder and the decoder, at the decoder only, and at the communication channel. In Section~\ref{Control_sec}, we show the optimal causal sampling and compressing policies in a rate-constrained event-triggered control system.

\subsection{Notations}
We denote by $\{W_t\}_{t=\tau_i}^{\tau_{i+1}}$ and $\{W_t\}_{\tau_i < t < \tau_{i+1}}$ the parts of the Wiener process within time intervals $[\tau_i,\tau_{i+1}]$, and $(\tau_i,\tau_{i+1})$, respectively. For $M\in\mathbb Z^+$, $[M]\triangleq\{1,\dots,M\}$. For a possibly infinite sequence $x=\{x_1,x_2,\dots\}$, we write $x^i=\{x_1,x_2,\dots,x_i\}$ to denote the vector of its first $i$ elements.

\section{Distortion-rate functions}\label{SecII}
In this section, we define the operational and the informational causal distortion-rate functions, and we show that an optimal encoder can be separated into a sampler followed by a compressor. 

\subsection{Encoding and decoding policies}\label{SecIIA}
The standard Wiener process is defined as follows. 
\begin{definition}\label{def1}
(standard Wiener process, e.g. \cite{13}) A standard Wiener process $\{W_t\}_{t\geq0}$ is a stochastic process characterized by the following three properties:

(i) \emph{time-homogeneity}: for all non-negative $s$ and $t$, $W_s$ and $W_{s+t}-W_t$ have the same distribution ($W_0=0$);

(ii) \emph{independent increments}: $W_{t_i}-W_{s_i}~(i\geq1)$ are independent whenever the intervals $(s_i,t_i]$ are disjoint;

(iii) $W_t$ follows the Gaussian distribution $\mathcal{N}(0, t)$.
\end{definition}
Throughout, we assume that both encoder and decoder know the initial state $W_0=0$ at $\tau_0=0$.

Next, we formally define the \emph{encoding} and \emph{decoding} policies\footnote{We refer to encoding and decoding \emph{policies} to emphasize their causal nature.}. 
Denote the set of continuous functions on the time interval $[0,t]$ by $\mathcal{C}_{[0,t]}$. Define the Wiener process stopped at a stopping time $\tau$ (e.g. \cite[Eq. 3.9]{25}) as:
\begin{equation}
W_t(\tau) =  
\begin{cases}
    W_t & \text{if}~ t\leq\tau\\
    W_{\tau} & \text{if}~ t>\tau. 
\end{cases}
\end{equation}
\begin{definition}\label{DEF2}(An $(R,d,T)$ causal code)
An $(R,d,T)$ causal code for the Wiener process $\{W_t\}_{t = 0}^T$ is a pair of encoding and decoding policies defined as follows.

The \emph{encoding} policy consists of

(i) the causal \emph{sampling} policy $\pi_T = \{\tau_1,\tau_2,\dots\}$ that decides the codeword-generating time stamps in \eqref{timestamp}
that are stopping times of the filtration $\sigma(\{W_t\}_{t=0}^T)$, and

(ii) the \emph{compressing} policy $f_T = \{\mathsf{f}_1,\mathsf{f}_2,\dots\}$\footnote{In some scenarios, we allow randomness in the mapping $\mathsf f_i$, replacing the deterministic mapping $\mathsf f_i$ in (5) by a transition probability kernel.},
\begin{equation} \label{enc}
    \mathsf f_i\colon \mathcal{C}_{[0,T]} \rightarrow \left[2^{\ell i}\right].
\end{equation}
 The codeword generated at time $\tau_i$ is $U_i = \mathsf f_i\left(\left\{ W_t(\tau_i)\right\}_{t=0}^T\right)$. The codewords' lengths must satisfy the long-term average rate constraint \eqref{comm_cons}. 

The \emph{decoding} policy causally maps the received codewords and the codeword-generating time stamps to a continuous-time process estimate $\{\hat W_t\}_{t=0}^T$ using
\begin{equation}\label{opt_dec}
    \hat W_t  \triangleq \mathbb E[W_t|U^i,\tau^i,
    t<\tau_{i+1}],~t\in[\tau_{i},\tau_{i+1}).
\end{equation}

Together, the encoding and the decoding policies must satisfy the long-term MSE constraint in \eqref{MSE}.
\end{definition}
In this work, we focus on the causal sampling policies satisfying the following natural assumptions:
\begin{itemize}
    \item[(i)] The sampling interval between any two consecutive stopping times, $\tau_{i+1}-\tau_i$, satisfies 
\begin{equation}\label{si}
    \mathbb E[\tau_{i+1}-\tau_i]<\infty,~i=0,1,\dots
\end{equation}
\item[(ii)] For all $i=0,1,\dots$, the conditional pdf $f_{\tau_{i+1}|\tau_i}$ exists.
\end{itemize}

The decoding policy in \eqref{opt_dec} forces the estimate $\hat W_t$ to be equal to the conditional expectation of $W_t$ given all the received information and the information that the next sample has not been transmitted yet. As we will show in the proof of Theorem~\ref{Thm4}, given the optimal sampling policy, the optimal decoding policy $\hat W_t$, $t\in[\tau_i,\tau_{i+1})$ can be simplified to the following equation,
\begin{equation}\label{sim_opt_dec}
    \hat W_t=\hat W_{\tau_i}\triangleq\mathbb E[W_{\tau_i}|U^i,\tau^i], ~t\in[\tau_i,\tau_{i+1}).
\end{equation}
Allowing more freedom in the design of a decoding policy cannot yield a lower MSE because \eqref{opt_dec} is the MMSE estimator of $W_t$ during $t\in[\tau_i,\tau_{i+1})$. This is a consequence of the zero-delay MSE constraint \eqref{MSE} at the decoder. As we explain in Section \ref{V.D.} below, had we allowed delay at the decoder, we could have improved performance by  e.g. using linear interpolation between recovered samples at the decoder. 

\subsection{Operational distortion-rate function}\label{SecIIB}
We now define the operational distortion-rate function.
\begin{definition}\label{DEF3}
(Operational distortion-rate function (ODRF))
The ODRF is the minimum distortion compatible with rate $R$ achievable by causal rate-$R$ codes in the limit of infinite time horizon:
\begin{equation}\label{1}
    D^{\mathrm{op}}(R) \triangleq \limsup_{T\rightarrow\infty} \inf\{d: \exists~ (R,d,T)~ \text{causal code}\}.
\end{equation}
\end{definition}
Equivalently, the ODRF is 
\begin{equation}
    D^{\mathrm{op}}(R)=\limsup_{T\rightarrow\infty}\inf_{\substack{\pi_T\in\Pi_T\\f_T\in F_T\colon\\ \eqref{comm_cons}}} \frac{1}{T}\mathbb E\left[\sum_{i=0}^N\int_{\tau_i}^{\tau_{i+1}}(W_t-\hat W_t)^2 dt\right],
\end{equation}
where $\tau_{N+1}\triangleq T$, and $\Pi_T$, $F_T$ denote the sets of all sampling and all compressing policies over the time horizon T respectively.

It turns out that the ODRF can be decomposed into the distortion due to sampling and the distortion due to quantization. 
\begin{proposition}\label{prop2}
The ODRF for the Wiener process can be written as
\begin{subequations}\label{2}
\begin{align} \label{6a}
    D^{\mathrm{op}}(R) =& \limsup_{T\rightarrow\infty}\inf_{\pi_T\in\Pi_T}\frac{1}{T}\Bigg\{\mathbb E  \Biggl[\sum_{i=0}^{N}\int_{\tau_i}^{\tau_{i+1}} (W_t - W_{\tau_i})^2 dt\Biggr]\\ \label{6b}
    &+\inf_{\substack{f_T\in F_T:\\~\eqref{comm_cons}}} \mathbb E \Biggl[ \sum_{i=1}^{N} (\tau_{i+1}-\tau_i)(W_{\tau_i}-\hat{W}_{\tau_i})^2 \Biggr]\Bigg\},
\end{align}
\end{subequations}
where $\hat W_{\tau_i}$ is given in \eqref{sim_opt_dec}.
Furthermore, if randomized compressing policies are allowed, there is no loss of optimality if at time $\tau_i$, a compressing policy only takes into account the innovation $W_{\tau_i}-\hat W_{\tau_{i-1}}$, past codewords $U^{i-1}$ and timing information $\tau^i$, rather than the whole process up to time $\tau_i$, as permitted by $\text{Definition}~\ref{DEF2}$. 
\end{proposition}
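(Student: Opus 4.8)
The plan is to fix the sampling policy $\pi_T$ and show that, with the MMSE decoder \eqref{opt_dec}, the squared error accumulated over the intervals $[\tau_i,\tau_{i+1})$ splits additively into a ``sampling'' part that depends on $\pi_T$ only and a ``quantization'' part over which one minimizes in $f_T$; dividing by $T$, taking $\inf_{\pi_T}$ and $\limsup_{T\to\infty}$ then yields \eqref{2}, the quantization sum starting at $i=1$ because $W_{\tau_0}=\hat W_{\tau_0}=0$. Both inequalities rest on the decomposition $W_t-\hat W_t=(W_t-W_{\tau_i})+(W_{\tau_i}-\hat W_{\tau_i})+(\hat W_{\tau_i}-\hat W_t)$ on $[\tau_i,\tau_{i+1})$, with $\hat W_{\tau_i}=\mathbb E[W_{\tau_i}\mid U^i,\tau^i]$, together with the strong Markov property at the stopping time $\tau_i$.

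For ``$\le$'' I would use the (generally suboptimal) \emph{constant} decoder $\hat W_t\equiv\hat W_{\tau_i}$ on $[\tau_i,\tau_{i+1})$, so the last bracket above vanishes. Squaring the remaining two terms and integrating over the interval produces the sampling integral $\int_{\tau_i}^{\tau_{i+1}}(W_t-W_{\tau_i})^2\,dt$, the quantization term $(\tau_{i+1}-\tau_i)(W_{\tau_i}-\hat W_{\tau_i})^2$, and the cross term $2(W_{\tau_i}-\hat W_{\tau_i})\int_{\tau_i}^{\tau_{i+1}}(W_t-W_{\tau_i})\,dt$. Conditioning the cross term on $\sigma(\{W_s\}_{s\le\tau_i})$: the prefactor is measurable there with conditional mean zero given $(U^i,\tau^i)$, while by the strong Markov property $\int_{\tau_i}^{\tau_{i+1}}(W_t-W_{\tau_i})\,dt=\int_0^{\sigma_i}\widetilde W_s\,ds$ with $\widetilde W$ a Wiener process independent of that $\sigma$-field and $\sigma_i=\tau_{i+1}-\tau_i$; using \eqref{si} together with optional stopping for the martingale $t\widetilde W_t-\int_0^t\widetilde W_s\,ds=\int_0^t s\,d\widetilde W_s$, the conditional expectation of $\int_0^{\sigma_i}\widetilde W_s\,ds$ given the pre-$\tau_i$ history reduces to a function of $(U^i,\tau^i)$ alone, so a second conditioning on $(U^i,\tau^i)$ annihilates the cross term. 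The sampling integral being inert under $f_T$, infimizing over $f_T$ gives ``$\le$''.

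For ``$\ge$'' I would write the MMSE distortion on each interval as $\mathbb E\!\int_{\tau_i}^{\tau_{i+1}}\mathrm{Var}(W_t\mid\mathcal G_t)\,dt$, where $\mathcal G_t$ is generated by $(U^i,\tau^i)$ and the ``no-sample-yet'' event $\{\tau_{i+1}>t\}$, and show that this last event is, after the shift at $\tau_i$, mean-preserving for the increment $\widetilde W_s$ — again the strong-Markov decoupling — so that $\hat W_t=\hat W_{\tau_i}$ and the error decomposes exactly as in the proof of ``$\le$''. Integrating and infimizing over $f_T$ reproduces the right-hand side of \eqref{2}, matching the upper bound.

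For the ``furthermore'' claim, fix $\pi_T$: since $\hat W_{\tau_{i-1}}$ is a deterministic function of $(U^{i-1},\tau^{i-1})$, the value $W_{\tau_i}$ is a deterministic function of the triple $(W_{\tau_i}-\hat W_{\tau_{i-1}},U^{i-1},\tau^i)$, so that triple is a sufficient statistic for $W_{\tau_i}$. Given any randomized $\mathsf f_i$ reading the whole stopped path, replace it by the map that computes the triple, draws a surrogate path from the conditional law of $\{W_t(\tau_i)\}$ given the triple, and feeds the surrogate to $\mathsf f_i$; by construction the surrogate has the same conditional law as the true path given the triple, so the joint law of $(W_{\tau_i},U^i,\tau^i)$ — hence $\hat W_{\tau_i}$ and the entire functional in \eqref{6b}, which depends on $f_T$ only through that law — is preserved, and \eqref{6a} is untouched; iterating in $i$ finishes. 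The main obstacle is the strong-Markov decoupling invoked in both bounds — showing, under only the mild regularity imposed on the stopping times, that the ``no-sample-yet'' event neither biases the post-$\tau_i$ increment nor correlates it with the pre-$\tau_i$ quantization error; this is exactly what could fail for pathological, e.g.\ strongly asymmetric, sampling policies, and the argument must rule such policies out of the optimum.
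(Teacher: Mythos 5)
Your skeleton is the same as the paper's: expand $(W_t-\hat W_{\tau_i})^2$ on each interval into a sampling term, a quantization term and a cross term, note that the compressor influences only the quantization term, and handle the ``furthermore'' claim by arguing that everything the encoder knows beyond the triple $(W_{\tau_i}-\hat W_{\tau_{i-1}},U^{i-1},\tau^i)$ is useless. Your surrogate-path resampling is in fact a concrete implementation of the paper's argument, which invokes the standard lossy-coding fact that side information available only at the encoder can be marginalized out without changing the achievable distortion, applied inductively with common information $C=(U^{i-1},\tau^i)$ and source $X=W_{\tau_i}-\hat W_{\tau_{i-1}}$; that part of your proposal is fine in spirit.

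The genuine gap is the one you flag yourself and do not close. For ``$\ge$'' you need the no-sample-yet event to be mean-preserving, i.e.\ $\mathbb E[W_t\mid U^i,\tau^i,\tau_{i+1}>t]=\hat W_{\tau_i}$, so that \eqref{opt_dec} collapses to \eqref{sim_opt_dec}. This is false for general admissible policies: an asymmetric stopping rule for $\tau_{i+1}$ (which the encoder may even choose as a function of the sign of the known error $W_{\tau_i}-\hat W_{\tau_i}$) biases the conditional increment, and you offer no mechanism for ``ruling such policies out of the optimum.'' The paper never proves this pointwise statement; instead, in the proof of Theorem~\ref{Thm4} it lower-bounds the MMSE by handing the decoder the entire trajectory $\{W_s\}_{s\le\tau_i}$, uses the cited result \cite[Cor.~1.1]{Nian2} to dispose of the conditioning on $\{t<\tau_{i+1}\}$ under full information together with the strong Markov property, reduces to the frequency-constrained sampling problem solved in \cite{5}, and then verifies that the SOI scheme with the constant decoder attains this bound --- it is this achievability step that retroactively justifies \eqref{sim_opt_dec} and hence the decomposition \eqref{2}. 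Your cross-term argument for ``$\le$'' has the same soft spot: optional stopping only annihilates the martingale part $\int_0^{\sigma_i}s\,d\widetilde W_s$, while the remaining piece $\mathbb E[\sigma_i\widetilde W_{\sigma_i}\mid\mathcal F_{\tau_i}]$ is a general $\mathcal F_{\tau_i}$-measurable random variable that the sampler can correlate with $W_{\tau_i}-\hat W_{\tau_i}$, so it does not ``reduce to a function of $(U^i,\tau^i)$ alone'' and the cross term need not vanish policy-by-policy (the paper simply asserts this orthogonality; a fully rigorous route is again through the genie lower bound plus achievability rather than through a per-policy identity). So the proposal has the right structure, but the key idea that makes \eqref{2} an equality over all of $\Pi_T$ --- the full-information lower bound matched by the SOI scheme --- is missing.
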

\begin{proof}
Appendix \ref{A}.
\end{proof}
In \eqref{6a}, $W_{\tau_i}$ is the MMSE estimator of $W_t$ at $t\in [\tau_i,\tau_{i+1})$, given the past lossless samples $\{W_{\tau_j}\}_{j=1}^{i}$ and the codeword-generating time stamps $\tau^i$. The expectation in \eqref{6a} is the sampling distortion due to causally estimating the Wiener process from its lossless samples $\{W_{\tau_j}\}_{j=1}^{i}$ taken under the sampling policy $\pi_T$. 

The expectation in \eqref{6b} is the mean-square quantization error of the samples, accumulated over sampling intervals of length $\tau_{i+1}-\tau_i,~i=1,\dots,N$. According to the compressing policy described in $\text{Proposition}~\ref{prop2}$, the minimization problem in \eqref{6b} is the operational zero-delay causal distortion-rate function of the discrete-time stochastic process formed by the samples. Furthermore, the encoding policy can be implemented as a sampler followed by a compressor. See Fig.~\ref{Fig2}.
\tikzstyle{int}=[draw, fill=white!20, minimum size=2.7em]
\tikzstyle{init} = [pin edge={to-,thin,black}]
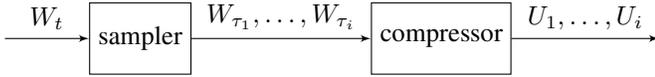
\begin{figure}[h!]
\centering
\begin{tikzpicture}[node distance=4cm,auto,>=latex']
    \node [int] (a) {sampler};
    \node (b) [left of=a,node distance=1.8cm, coordinate] {a};
    \node [int] (c) [right of=a] {compressor};
    \node [coordinate] (end) [right of=c, node distance=2.9cm]{};
    \path[->] (b) edge node {$W_t$} (a);
    \path[->] (a) edge node {$W_{\tau_1},\dots,W_{\tau_i}$} (c);
    \path[->] (c) edge node {$U_1,\dots,U_i$} (end) ;
\end{tikzpicture}
\caption{Decomposition of the encoder. }
\label{Fig2}
\end{figure}
The sampler takes measurements of the Wiener process under a sampling policy and outputs samples without delay to the compressor. Upon receiving a new sample, the compressor immediately generates a codeword under the compressing policy described in $\text{Proposition}~ \ref{prop2}$.

\subsection{Informational distortion-rate function}\label{SecIIC}
The \emph{directed information} $I(X^n\rightarrow Y^n)$ from a sequence $X^n$ to a sequence $Y^n$ is defined as \cite{14}
\begin{equation}
    I(X^n\rightarrow Y^n) = \sum_{i=1}^n I(X^i;Y_i|Y^{i-1}).
\end{equation}
The directed information captures the information due to the causal dependence of $Y^n$ on $X^n$.

A sampling policy $\pi_T = \{\tau_1,\tau_2,\dots\}$ is \emph{deterministic} if its sampling time stamps \eqref{timestamp} are deterministic. We denote the set of all deterministic sampling policies by $\Pi^{\mathrm{DET}}_T$. Under a deterministic sampling policy, the total number of samples $N$ within the time horizon $[0,T]$ is constant.
\begin{definition}\label{IDRF} (Informational distortion-rate function (IDRF)) The IDRF for the Wiener process under deterministic sampling policies can be written as
\begin{subequations}\label{idrf}
\begin{align} \nonumber 
    &D_{\mathrm{DET}}(R)\triangleq\\ \label{10a}
    & \limsup_{T\rightarrow\infty}\inf_{\pi_T\in \Pi^{\mathrm{DET}}_T}\frac{1}{T}\Bigg\{\mathbb E  \biggl[\sum_{i=0}^{N}\int_{\tau_i}^{\tau_{i+1}} (W_t - W_{\tau_i})^2 dt\biggr]+\\  \label{9b}
    &\inf_{\substack{\bigotimes_{i=1}^{N}P_{\hat W_{\tau_i}|W^{\tau_i},\hat{W}^{\tau_{i-1}}}:\\\frac{ I(W^{\tau_{N}}\rightarrow\hat W^{\tau_{N}})}{T}\leq R}}  \mathbb E \biggl[\sum_{i=1}^{N} (\tau_{i+1}-\tau_i)(W_{\tau_i}-\hat{W}_{\tau_i})^2 \biggr]\Bigg\},
\end{align}
\end{subequations}
\end{definition}
The minimization problem \eqref{9b} in $D_{\mathrm{DET}}(R)$ is the causal IDRF for the discrete-time stochastic process formed by the samples. Note that \eqref{9b} is minimized over the directed information rate, which gives an information-theoretic lower bound to the rate considered in \eqref{comm_cons}. Thus, the following relation holds according to \cite[Eq. (43)]{28}.
\begin{equation}\label{oplarger}
    D^{\mathrm{op}}_{\mathrm{DET}}(R)\geq D_{\mathrm{DET}}(R),
\end{equation}
where $D^{\mathrm{op}}_{\mathrm{DET}}(R)$ is the ODRF for deterministic sampling policies defined by \eqref{2} with the minimization constraint in \eqref{6a} replaced by $\pi_T\in \Pi^{\mathrm{DET}}_T$.

\subsection{Operational and informational distortion-frequency-rate function}\label{SecIID}
According to $\text{Proposition}~ \ref{prop2}$, an optimal encoder can be implemented as a sampler followed by a compressor. To gain insight into the tradeoffs between the sampling frequency $f$ at the sampler and the rate per sample $R_s$ at the compressor, we define an $(f,R_s,d,T)$ causal code.
\begin{definition}
(An $(f,R_s,d,T)$ causal code) An $(f,R_s,d,T)$ causal code for the Wiener process $\{W_t\}_{t=0}^T$ is a triplet of causal sampling, compressing and decoding policies:

(i) the causal sampling policy\footnote{The causal sampling policy is defined in Definition~\ref{DEF2}(i)} $\pi_T = \{\tau_1,\tau_2,\dots\}$ satisfies the average sampling frequency constraint 
\begin{equation}\label{f}
  \frac{1}{T} \mathbb E{[N]}=f; 
\end{equation}

(ii) the compressing policy $f_T=\{\mathsf f_1,\mathsf f_2,\dots\}$\footnote{Here we slightly abuse the notation: we have used $f_T$ in Definition~\ref{DEF2}(ii), and have shown in Proposition~\ref{prop2} that the compressing policy $f_T$ can be simplified to \eqref{f_new}.} is
\begin{equation}\label{f_new}
    \mathsf f_i: \mathbb R \times \mathbb R^{i-1} \times \mathbb R^i \rightarrow \left[2^{\ell_i}\right].
\end{equation}
The codeword generated at time $\tau_i$ is $U_i=\mathsf f_i\left(W_{\tau_i},U^{i-1},\tau^i\right)$. The codewords' lengths must satisfy 
\begin{equation}\label{18a}
 \frac{1}{\mathbb E[N]}\mathbb E\left[\sum_{i=1}^N\ell_i\right]\leq R_s~ (\text{bits per sample});   
\end{equation}

(iii) the decoding policy causally maps the received codewords and the codeword-generating time stamps to a continuous-time process estimate $\{\hat W_t\}_{t=0}^T$ using \eqref{opt_dec}.

Together, the causal sampling, compressing and decoding policies must satisfy the long-term MSE constraint in \eqref{MSE}.
\end{definition}

We define the operational distortion-frequency-rate function.
\begin{definition}
(Operational distortion-frequency-rate function(ODFRF)) The ODFRF is the minimum distortion achievable by causal frequency-$f$ and rate-$R_s$ codes in the limit of infinite time horizon:
\begin{equation}
    D^{\mathrm{op}}(f,R_s) \triangleq \limsup_{T\rightarrow\infty} \inf\{d:\exists~ (f,R_s,d,T)~ \text{causal~code}\}.
\end{equation}
\end{definition}

Using the method used to decompose $D^{\mathrm{op}}(R)$ in Proposition~\ref{prop2}, we can write $D^{\mathrm{op}}(f,R_s)$ as
\begin{subequations}\label{odfrf}
\begin{align}\nonumber
  &D^{\mathrm{op}}(f,R_s) = \\ \label{odfrf_a}
  &\limsup_{T\rightarrow\infty} 
\inf_{\substack{\pi_T\in\Pi_T\colon\\ \eqref{f}}}\frac{1}{T}\Bigg\{\mathbb E  \Biggl[\sum_{i=0}^{N}\int_{\tau_i}^{\tau_{i+1}} (W_t - W_{\tau_i})^2 dt\Biggr]\\ \label{odfrf_b}
    &+\inf_{\substack{f_T\in F_T:\\\eqref{18a}}} \mathbb E \Biggl[ \sum_{i=1}^{N} (\tau_{i+1}-\tau_i)(W_{\tau_i}-\hat{W}_{\tau_i})^2 \Biggr]\Bigg\},
\end{align}
\end{subequations}
where the expectation in \eqref{odfrf_a} is the sampling distortion, and the expectation in \eqref{odfrf_b} is the mean-square quantization error of the samples weighted by the lengths of sampling intervals $\tau_{i+1}-\tau_i$, $i=1,\dots,N$.

We define the informational distortion-frequency-rate function for deterministic sampling policies. The informational equivalent of $D^{\mathrm{op}}(f,R_s)$ replaces \eqref{18a} by the constraint on the directed information, that is, for deterministic sampling policies, 
\begin{equation}\label{18b}
    \frac{1}{N}I(W^{\tau_N}\rightarrow\hat W^{\tau_N})\leq R_s.
\end{equation}
\begin{definition}
(Informational distortion-frequency-rate function (IDFRF)) The IDFRF for the Wiener process under deterministic sampling policies can be written as
\begin{subequations}\label{idfrf}
\begin{align}\nonumber
    &D_{\mathrm{DET}}(f,R_s) \triangleq\\  \label{idfrf_a} 
    &\limsup_{T\rightarrow\infty}
    \inf_{\substack{\pi_T\in \Pi^{\mathrm{DET}}_T\colon\\ \eqref{f}}}\frac{1}{T}\Bigg\{\mathbb E  \biggl[\sum_{i=0}^{N}\int_{\tau_i}^{\tau_{i+1}} (W_t - W_{\tau_i})^2 dt\biggr]\\  \label{idfrf_b}
    &+\inf_{\substack{\bigotimes_{i=1}^{N}P_{\hat W_{\tau_i}|W^{\tau_i},\hat{W}^{\tau_{i-1}}}:\\ \eqref{18b}}}  \mathbb E \biggl[\sum_{i=1}^{N} (\tau_{i+1}-\tau_i)(W_{\tau_i}-\hat{W}_{\tau_i})^2 \biggr]\Bigg\}
\end{align}
\end{subequations}
\end{definition}
Similar to $D_{\mathrm{DET}}(R)$ in Definition~\ref{DEF3}, \eqref{idfrf_b} is the IDRF for the Gauss-Markov process formed by the samples, but it is worth noticing that the rate considered in \eqref{idfrf_b} is the rate per sample $R_s$ rather than the rate per second $R$ considered in \eqref{9b}.

\section{Main Results}\label{MR}
The first theorem of this section shows the optimal causal sampling and compressing policies that achieve $D^{\mathrm{op}}(R)$.
\begin{theorem}\label{Thm4}
In causal coding of the Wiener process, the optimal causal sampling policy is the following symmetric threshold sampling policy:
 \begin{equation}\label{opt_sym}
     \tau_{i+1} = \inf\left\{t\geq \tau_{i}: |W_{t}-W_{\tau_i}|\geq \sqrt{\frac{1}{R}}\right\},\; i = 0,1,2,\dots
 \end{equation}
 
The optimal compressing policy is a 1-bit sign-of-innovation (SOI) compressor:
 \begin{equation} \label{SOI}
U_i=
    \begin{cases} 
      1 & \text{if}\quad W_{\tau_{i+1}}-W_{\tau_{i}}\geq 0  \\
      0 & \text{if}\quad W_{\tau_{i+1}}-W_{\tau_{i}}< 0.
   \end{cases}
\end{equation}

 The SOI coding scheme achieves the ODRF:
\begin{equation}\label{58}
 D^{\mathrm{op}}(R)=\frac{1}{6R}.
 \end{equation}
\end{theorem}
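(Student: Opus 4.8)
The plan is to build on the decomposition of Proposition~\ref{prop2}, which expresses $D^{\mathrm{op}}(R)$ as an infimum over sampling policies of the \emph{sampling distortion} \eqref{6a} plus, nested inside, the rate-constrained \emph{quantization distortion} \eqref{6b}. I would establish the two matching bounds separately.

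\textit{Converse, $D^{\mathrm{op}}(R)\ge\frac{1}{6R}$.} The term \eqref{6b} is nonnegative, so it suffices to lower-bound \eqref{6a} over all feasible sampling policies. Since every codeword length satisfies $\ell_i\ge1$, the rate constraint \eqref{comm_cons} forces $\mathbb E[N]\le RT$. Fix a sampling policy and set $\Delta_i\triangleq W_{\tau_{i+1}}-W_{\tau_i}$ for $i=0,\dots,N$ with $\tau_{N+1}\triangleq T$. By the strong Markov property, past $\tau_i$ the process is a fresh Brownian motion $B$, and $\tfrac16 B_t^4-\int_0^t B_s^2\,ds$ is a martingale; optional stopping (legitimate under the moment assumption \eqref{si}, else the left side is $+\infty$ and the bound is trivial) gives $\mathbb E\big[\int_{\tau_i}^{\tau_{i+1}}(W_t-W_{\tau_i})^2dt\mid\mathcal F_{\tau_i}\big]=\tfrac16\mathbb E[\Delta_i^4\mid\mathcal F_{\tau_i}]$, while $\mathbb E[\Delta_i^2\mid\mathcal F_{\tau_i}]=\mathbb E[\tau_{i+1}-\tau_i\mid\mathcal F_{\tau_i}]$ so $\sum_{i=0}^N\mathbb E[\Delta_i^2]=T$. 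Two applications of Cauchy--Schwarz (pathwise, then in expectation) give
\begin{equation}\label{eq:conv}
\mathbb E\!\left[\sum_{i=0}^N\int_{\tau_i}^{\tau_{i+1}}(W_t-W_{\tau_i})^2\,dt\right]=\frac16\,\mathbb E\!\left[\sum_{i=0}^N\Delta_i^4\right]\ge\frac16\,\mathbb E\!\left[\frac{\big(\sum_{i=0}^N\Delta_i^2\big)^2}{N+1}\right]\ge\frac{\big(\mathbb E\big[\sum_{i=0}^N\Delta_i^2\big]\big)^2}{6\,\mathbb E[N+1]}\ge\frac{T^2}{6(RT+1)}.
\end{equation}
Dividing by $T$ and taking $\limsup_{T\to\infty}$ yields $D^{\mathrm{op}}(R)\ge\frac{1}{6R}$.

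\textit{Achievability, $D^{\mathrm{op}}(R)\le\frac{1}{6R}$.} Use the threshold policy \eqref{opt_sym} with $a\triangleq\sqrt{1/R}$ and the SOI compressor \eqref{SOI}. Because Brownian paths are continuous, each threshold-crossing increment equals $\pm a$ \emph{exactly}, so the one-bit codeword conveys it losslessly; hence $\hat W_{\tau_i}=W_{\tau_i}$, the quantization term \eqref{6b} vanishes, and $\hat W_t=W_{\tau_i}$ on $[\tau_i,\tau_{i+1})$ (the conditional mean of $W_t-W_{\tau_i}$ given no crossing by time $t$ is $0$ by symmetry). By the strong Markov property the interarrival times are i.i.d.\ copies of the exit time $T_1$ of Brownian motion from $(-a,a)$, with $\mathbb E[T_1]=a^2$ and all moments finite; renewal theory gives $\tfrac1T\mathbb E[N]\to1/a^2=R$, i.e.\ one bit per sample at rate $R$ (for finite $T$ one inflates $a$ negligibly so that \eqref{comm_cons} holds exactly). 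Solving $\tfrac12u''(x)=-x^2$ with $u(\pm a)=0$ gives $\mathbb E\big[\int_0^{T_1}B_t^2\,dt\big]=u(0)=a^4/6$, so the renewal--reward theorem makes the sampling distortion rate $\frac{a^4/6}{a^2}=\frac{1}{6R}$, which by Proposition~\ref{prop2} is the total distortion of the scheme. This matches the converse and proves \eqref{58}.

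\textit{Main obstacle.} The conceptual heart is the converse display \eqref{eq:conv}: spotting the martingale that converts the time-integrated squared error of a cycle into $\tfrac16\mathbb E[\Delta_i^4]$, and then coupling $\sum\mathbb E[\Delta_i^2]=T$ with $\mathbb E[N]\le RT$ through the two Cauchy--Schwarz steps. The surrounding work — justifying the optional-sampling and renewal identities for a random number of cycles and for the truncated terminal cycle $[\tau_N,T]$ under \eqref{si} (with uniform integrability whenever the relevant moments are finite), and the $O(1)$ renewal-overshoot bookkeeping so the finite-horizon rate constraint holds exactly in the achievability part — is routine but must be carried out with care.
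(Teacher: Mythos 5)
Your proposal takes a genuinely different route from the paper in two places, and both are attractive: the converse via the local martingale $\tfrac16 B_t^4-\int_0^t B_s^2\,dt$, Wald's identity, and two Cauchy--Schwarz steps is self-contained, whereas the paper simply cites the sampling-frequency-constrained result of Nar and Ba\c{s}ar \cite[Eq.~(20)]{5}; and your achievability via the exit-time ODE and renewal--reward matches the paper's (which again leans on \cite{5}). However, the converse has a genuine gap. The decoder in Definition~\ref{DEF2} is \eqref{opt_dec}: on $[\tau_i,\tau_{i+1})$ it conditions not only on $(U^i,\tau^i)$ but also on the event $\{t<\tau_{i+1}\}$. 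Under a general causal sampling policy (e.g.\ a one-sided or asymmetric threshold) this event depends on the path after $\tau_i$ and is informative, so the true MMSE on $[\tau_i,\tau_{i+1})$ can be strictly smaller than the quantity $\tfrac16\,\mathbb E\bigl[(W_{\tau_{i+1}}-W_{\tau_i})^4\mid\mathcal F_{\tau_i}\bigr]$ that your martingale computation bounds; the latter is the distortion of a decoder \emph{forced} to output $W_{\tau_i}$. Your chain therefore does not by itself rule out sampling policies that convey extra information to the estimator through the mere absence of a sample, which is precisely the mechanism that makes event-triggered sampling powerful in this paper.

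Appealing to Proposition~\ref{prop2} does not close this gap: the decomposition \eqref{6a}--\eqref{6b} is stated with the simplified decoder \eqref{sim_opt_dec}, and the paper's proof of Proposition~\ref{prop2} explicitly defers the justification of replacing \eqref{opt_dec} by \eqref{sim_opt_dec} to the proof of Theorem~\ref{Thm4}, so invoking it here is circular. The paper handles exactly this point at the equality step \eqref{thm1_dec_c} by citing \cite[Cor.~1.1]{Nian2}, which shows that, once one optimizes over sampling policies under $\mathbb E[N]\le RT$, conditioning the estimator on $\{t<\tau_{i+1}\}$ yields no improvement. Your proof needs this result (or an equivalent argument) inserted before the martingale/Cauchy--Schwarz computation; your symmetry remark correctly disposes of the analogous issue only on the achievability side, where the sampling policy is the symmetric threshold and the conditional mean given no crossing is indeed $W_{\tau_i}$. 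The remaining bookkeeping you flag (optional stopping for the truncated terminal cycle, the finite-horizon rate adjustment) is routine, as you say.
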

\begin{proof}
Section~\ref{IV.A}. 
\end{proof}
Together with the optimal encoding policy in Theorem~\ref{Thm4}, the optimal decoding policy \eqref{opt_dec} accumulates the received noiseless innovations to estimate the current value of the process.

The next theorem shows the optimal deterministic sampling policy that achieves $D_{\mathrm{DET}}(R)$.
\begin{theorem}\label{Thm5}
In causal coding of the Wiener process, the uniform sampling with the sampling interval equal to
\begin{equation}
     \tau_{i+1}-\tau_i = \frac{1}{R},~i=0,1,2,\dots,
 \end{equation}
 achieves
\begin{equation}\label{Det}
   D_{\mathrm{DET}}(R) =\frac{5}{6R}.
\end{equation}
\end{theorem}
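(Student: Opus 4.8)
The plan is to use the decomposition \eqref{idrf} of $D_{\mathrm{DET}}(R)$ into a sampling term and a quantization term, to evaluate both for the uniform policy $\tau_i=i/R$, and then to establish matching lower bounds valid for every deterministic sampling policy.

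First I would evaluate the uniform policy with interval $\Delta=1/R$, which yields $N\sim RT$ samples over $[0,T]$. The sampling term is elementary: for a standard Wiener process $\int_{\tau_i}^{\tau_{i+1}}\mathbb E[(W_t-W_{\tau_i})^2]\,dt=\int_0^{\Delta}s\,ds=\Delta^2/2$, so this term divided by $T$ tends to $N\Delta^2/(2T)=\Delta/2=1/(2R)$. For the quantization term, the crucial observation is that under uniform sampling the samples $\{W_{\tau_i}\}$ form a discrete-time random walk (Gauss-Markov with unit autoregression) with i.i.d.\ $\mathcal N(0,\Delta)$ increments, while the directed-information budget $\tfrac1T I(W^{\tau_N}\!\to\hat W^{\tau_N})\le R$ amounts to exactly one bit per sample. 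I would then pass the one-step prediction error $S_i\triangleq W_{\tau_i}-\hat W_{\tau_{i-1}}$, whose conditional variance is $D_{i-1}+\Delta$ with $D_i\triangleq\mathbb E[(W_{\tau_i}-\hat W_{\tau_i})^2]$, through a rate-one Gaussian test channel; this realizes the recursion $D_i=(D_{i-1}+\Delta)/4$, whose geometrically attracting fixed point is $D_\infty=\Delta/3$. Since the transient decays geometrically, the quantization term divided by $T$ tends to $D_\infty=1/(3R)$, and the total is $1/(2R)+1/(3R)=5/(6R)$.

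For the converse I would first show that no compressor beats $1/(3R)$ on the uniform grid, by the standard sequential rate-distortion argument: writing $W_{\tau_i}=W_{\tau_{i-1}}+Z_{i-1}$ with $Z_{i-1}\sim\mathcal N(0,\Delta)$ independent of $(W^{\tau_{i-1}},\hat W^{\tau_{i-1}})$, the chain rule for directed information, data processing, the maximum-entropy bound $h(W_{\tau_i}\mid\hat W^{\tau_i})\le\tfrac12\log(2\pi eD_i)$, and the conditional entropy-power inequality for $h(W_{\tau_i}\mid\hat W^{\tau_{i-1}})$ telescope into $I(W^{\tau_N}\!\to\hat W^{\tau_N})\ge\sum_i\tfrac12\log(1+\Delta/D_i)+O(\log N)$; combined with the budget $I(W^{\tau_N}\!\to\hat W^{\tau_N})\le N+o(N)$ and Jensen's inequality applied to the convex decreasing map $d\mapsto\log(1+\Delta/d)$, this forces $\tfrac1N\sum_iD_i\ge\Delta/3$, so the uniform policy meets $5/(6R)$ with equality. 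To conclude $D_{\mathrm{DET}}(R)=5/(6R)$ I would repeat this derivation for a general deterministic policy with intervals $\Delta_i$ and rates $\rho_i$, $\sum_i\rho_i\le RT$, obtaining the lower bound $\tfrac1T\sum_i(\tfrac12\Delta_i^2+\Delta_iD_i)$ subject to $D_i\ge(D_{i-1}+\Delta_{i-1})2^{-2\rho_i}$, and then minimizing it jointly over the sampling geometry and the rate allocation.

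The most delicate step is this last minimization — verifying that the minimizer is the uniform grid with interval $1/R$ and one bit per sample — together with the limiting arguments making the $O(\log N)$ boundary terms of the telescoping sum and the transient of the recursion $D_i=(D_{i-1}+\Delta_{i-1})2^{-2\rho_i}$ negligible after dividing by $T$ and letting $T\to\infty$. The entropy-power step is itself benign, since it enters only through the upper bound $h(W_{\tau_i}\mid\hat W^{\tau_i})\le\tfrac12\log(2\pi eD_i)$, so no lower bound on a differential entropy that could be $-\infty$ is ever needed.
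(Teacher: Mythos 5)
Your achievability computation on the uniform grid is essentially right (the rate-one Gaussian test channel gives the recursion $D_i=(D_{i-1}+\Delta)/4$ with fixed point $\Delta/3$, matching the paper's $\tfrac{1}{f(2^{2R_s}-1)}$ at $R_s=1$), but the converse as you set it up has a genuine flaw. In your final minimization of $\tfrac1T\sum_i\bigl(\tfrac12\Delta_i^2+\Delta_iD_i\bigr)$ subject only to $D_i\geq(D_{i-1}+\Delta_{i-1})2^{-2\rho_i}$ and $\sum_i\rho_i\leq RT$, the minimizer is \emph{not} the uniform grid with one bit per sample. Taking uniform intervals $\Delta=\rho/R$ with $\rho$ bits per sample, your own bound evaluates (in steady state) to $\tfrac{\rho}{2R}+\tfrac{\rho}{R(2^{2\rho}-1)}$, which decreases as $\rho\downarrow0$ to $\tfrac{\log_2 e}{2R}\approx\tfrac{0.72}{R}<\tfrac{5}{6R}$. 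So the infimum of your lower bound is approached by sampling ever faster with vanishing rate per sample, and the converse cannot close at $\tfrac{5}{6R}$. The missing ingredient is the per-sample constraint $R_s\geq1$ (equivalently, sampling frequency at most $R$), which in the paper comes from the operational fact that every codeword carries at least one bit, $\mathbb E[N]\leq\mathbb E[\sum_i\ell_i]$ as in \eqref{NL}, and is built explicitly into \eqref{IDRF_IDFRF} and Theorem~\ref{thm6}. Only with $R_s\geq1$ is $D_{\mathrm{DET}}(f,R_s)=\tfrac1{2f}+\tfrac1{f(2^{2R_s}-1)}$ minimized over $\{fR_s\leq R,\ R_s\geq 1\}$ at $f=R$, $R_s=1$, yielding $\tfrac{5}{6R}$.

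Beyond that, the step you defer as ``the most delicate'' is in fact the entire content of the paper's proof, not a verification: one must (i) solve the finite-$N$ problem over non-uniform intervals and distortion allocations, which the paper does via Lagrangian duality for the causal IDRF of the Gauss--Markov samples (Lemma~\ref{6}, using the dual form from \cite{10}) together with a Schur-convexity/majorization argument showing that equal interior intervals are optimal (Lemma~\ref{Lemm2}), plus a matching upper bound (Lemma~\ref{Lemm3}); and (ii) justify interchanging the $N\to\infty$ limit with the infimum over the sampling geometry and rate split, which the paper handles through a $\Gamma$-convergence/equicoercivity argument (Lemma~\ref{Lemm5}) — this is a different and more substantial issue than the $O(\log N)$ boundary terms and transients you mention. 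As written, your proposal establishes the uniform-grid achievability and a per-grid converse, but neither the optimality of uniform sampling among all deterministic policies nor the correct value of the optimum is actually proved.
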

\begin{proof}
Section~\ref{IV.D}.
\end{proof}

\begin{theorem}\label{thm6}
In causal coding of the Wiener process, the ODRF satisfies
\begin{subequations}\label{25aa}
\begin{align}\label{25a_a}
   D^{\mathrm{op}}(R) =& \min_{\substack{f>0,R_s\geq 1\colon\\fR_s\leq R}}D^{\mathrm{op}}(f,R_s),\\ \label{25a_b}
   =&D^{\mathrm{op}}(R,1),
\end{align}
\end{subequations}
and the IDRF under deterministic sampling policies satisfies
\begin{subequations}\label{25bb}
\begin{align}  \label{25a_f}
   D_{\mathrm{DET}}(R) =& \min_{\substack{f>0,R_s\geq 1\colon\\ fR_s\leq R}} D_{\mathrm{DET}}(f,R_s)\\ \label{25a_fb}
   =&D_{\mathrm{DET}}(R,1).
\end{align}
\end{subequations}

\end{theorem}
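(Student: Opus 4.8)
The statements \eqref{25aa}, \eqref{25bb} assert that each distortion-frequency-rate function, minimized over $f>0$, $R_s\geq1$ with $fR_s\leq R$, equals the corresponding distortion-rate function, with the minimum sitting at the corner $(f,R_s)=(R,1)$. The plan is a two-sided squeeze: a ``resource-relabeling'' inequality valid at every admissible $(f,R_s)$, plus an achievability evaluation at $(R,1)$ that invokes the values $D^{\mathrm{op}}(R)=\frac1{6R}$ (Theorem~\ref{Thm4}) and $D_{\mathrm{DET}}(R)=\frac5{6R}$ (Theorem~\ref{Thm5}). For the relabeling step, fix $f>0$, $R_s\geq1$ with $fR_s\leq R$. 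At each horizon $T$, an $(f,R_s,d,T)$ causal code satisfies
\[
\frac1T\mathbb E\Bigl[\sum_{i=1}^{N}\ell_i\Bigr]=\frac{\mathbb E[N]}{T}\cdot\frac1{\mathbb E[N]}\,\mathbb E\Bigl[\sum_{i=1}^{N}\ell_i\Bigr]\leq fR_s\leq R,
\]
so it is also an $(R,d,T)$ causal code; taking $\inf$ over $d$ and then $\limsup_{T\to\infty}$ gives $D^{\mathrm{op}}(R)\leq D^{\mathrm{op}}(f,R_s)$, hence $D^{\mathrm{op}}(R)\leq\inf_{f>0,\,R_s\geq1,\,fR_s\leq R}D^{\mathrm{op}}(f,R_s)\leq D^{\mathrm{op}}(R,1)$. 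The deterministic/informational versions go the same way: for a deterministic policy with $N/T=f$, constraint \eqref{18b} is equivalent to $I(W^{\tau_N}\to\hat W^{\tau_N})/T\leq fR_s\leq R$, so at each $T$ the expression inside the $\limsup$ in \eqref{idfrf} is pointwise at least the one in \eqref{idrf}---the outer infimum runs over fewer policies, and for each such policy the inner infimum runs over fewer kernels---which yields $D_{\mathrm{DET}}(R)\leq\inf_{fR_s\leq R}D_{\mathrm{DET}}(f,R_s)\leq D_{\mathrm{DET}}(R,1)$.

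To close the squeeze, I would show that $(R,1)$ is achievable with distortion $D^{\mathrm{op}}(R)$. The SOI scheme of Theorem~\ref{Thm4} spends exactly one bit per sample, and its inter-sample intervals are first-passage times of $\{W_t\}$ to $\pm\sqrt{1/R}$, each of mean $1/R$; since it obeys \eqref{comm_cons}, its realized frequency $f_T=\frac1T\mathbb E[N]$ satisfies $f_T\leq R$, and $f_T\to R$ by the elementary renewal theorem. Pad the scheme with auxiliary samples at deterministic times, each carrying one extra bit (e.g.\ the sign of the innovation since the previous sample of either kind): this brings the realized frequency up to exactly $R$ without changing the per-sample rate, and since the estimate \eqref{opt_dec} is a conditional expectation, the extra conditioning cannot raise the MSE. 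Hence the padded code is an $(R,1,d_T,T)$ code with $\limsup_{T\to\infty}d_T\leq\frac1{6R}=D^{\mathrm{op}}(R)$, so $D^{\mathrm{op}}(R,1)\leq D^{\mathrm{op}}(R)$, and combined with the first paragraph's bound,
\[
D^{\mathrm{op}}(R)\leq\inf_{f>0,\,R_s\geq1,\,fR_s\leq R}D^{\mathrm{op}}(f,R_s)\leq D^{\mathrm{op}}(R,1)\leq D^{\mathrm{op}}(R),
\]
so all coincide, the infimum is attained at $(R,1)$, and \eqref{25aa} follows. For \eqref{25bb} I would repeat the argument with the uniform sampler of Theorem~\ref{Thm5}, which samples at the deterministic times $\{i/R\}$ (so $f=R$); the quantizer that is optimal for $D_{\mathrm{DET}}(R)$ uses directed information at most $R$ per second, hence at most one bit per sample, so it is admissible for $R_s=1$; auxiliary samples coded by the one-bit SOI rule keep the per-sample directed information at most $1$ (a weighted average of quantities each $\leq1$) while, as above, not increasing the distortion, which equals $\frac5{6R}=D_{\mathrm{DET}}(R)$.

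The hard part is bookkeeping at the corner, not anything conceptual. The definition of an $(f,R_s,d,T)$ code requires \eqref{f} with \emph{equality}, but the SOI scheme delivers only $f_T\leq R$ and the uniform sampler only $\lfloor RT\rfloor/T$; matching the target frequency forces the padding, and in the informational case one must then check that padding (and the integer-rounding of the per-sample directed information) does not push $R_s$ above $1$. These rest on two elementary facts I would isolate: (i) the MMSE decoder \eqref{opt_dec} is monotone under additional conditioning, so extra samples and codewords never hurt; and (ii) splicing into a per-sample-rate-$\leq1$ stream further samples each carrying at most one bit preserves the average $\leq1$. Everything else is definitions plus the values from Theorems~\ref{Thm4} and~\ref{Thm5}; note that stating the relabeling inequality at a \emph{fixed} $(f,R_s)$ sidesteps any interchange of $\limsup_T$ with the outer infimum over $(f,R_s)$, and the rounding of $fT$ and of the directed information is asymptotically negligible in the $\limsup$.
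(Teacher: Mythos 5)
Your treatment of the operational half \eqref{25aa} is essentially sound and close in spirit to the paper: the relabeling inequality gives $D^{\mathrm{op}}(R)\leq\inf_{fR_s\leq R}D^{\mathrm{op}}(f,R_s)\leq D^{\mathrm{op}}(R,1)$, and corner achievability follows from the SOI scheme of Theorem~\ref{Thm4} (your padding to meet the equality constraint \eqref{f} is a reasonable way to handle a technicality the paper glosses over; the paper instead evaluates $D^{\mathrm{op}}(f,R_s)=\frac{1}{6f}$ at every feasible point and then minimizes). Since Theorem~\ref{Thm4} is proved independently of Theorem~\ref{thm6}, there is no circularity in that half.

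The deterministic half \eqref{25bb} has a genuine gap. Your closing step invokes Theorem~\ref{Thm5}, i.e.\ that $D_{\mathrm{DET}}(R)=\frac{5}{6R}$ and is achieved by uniform sampling at frequency $R$; but in the paper Theorem~\ref{Thm5} is itself deduced from \eqref{25bb} together with Lemma~\ref{Lemm4} (see Section~\ref{IV.D} and the chain \eqref{377}), so using it here is circular unless you supply an independent proof. Moreover, the assertion that ``the quantizer optimal for $D_{\mathrm{DET}}(R)$ uses directed information at most $R$ per second, hence at most one bit per sample, so it is admissible for $R_s=1$'' presupposes that the optimizing deterministic policy in \eqref{idrf} samples at frequency exactly $R$ -- but \eqref{idrf} imposes no frequency constraint, only a per-second directed-information constraint, and nothing forces per-sample directed information to be at least one bit; ruling out that other frequency/rate allocations do better is precisely the content of the statement. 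The relabeling argument at fixed $(f,R_s)$ only yields the easy inequality $D_{\mathrm{DET}}(R)\leq\min_{fR_s\leq R}D_{\mathrm{DET}}(f,R_s)$, equivalent to $\limsup_N\inf\leq\inf\limsup_N$ in \eqref{IDRF_IDFRF}; the reverse inequality is exactly the interchange that the paper flags as not following directly and resolves by first computing $D_{\mathrm{DET}}(f,R_s)$ in closed form \eqref{underlineDET} via convex duality and Schur-convexity/majorization (Lemmas~\ref{Lemm2}--\ref{Lemm4}) and then establishing equicoercivity and $\Gamma$-convergence of $\underline{D}_N(\cdot,\cdot)$ (Lemma~\ref{Lemm5}). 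Your proposal does not engage with any of this, and the difficulty it dismisses as ``bookkeeping at the corner'' is in fact the substance of the proof of \eqref{25bb}.
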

\begin{proof}
See Section~\ref{IV-B} for the proof of \eqref{25aa}. See Section~\ref{IV.B} for the proof of \eqref{25bb}.
\end{proof}
Using Theorem~\ref{thm6}, we can formulate the \emph{working principle} of an optimal encoding policy as follows. A sampler takes measurements of the Wiener process as fast as possible subject to a rate constraint, and the most recent sample is used to generate a 1-bit codeword, which is transmitted to the decoder without delay. In the setting of $\text{Theorem}~\ref{Thm4}$, the 1-bit SOI compressor associated with the symmetric threshold sampling policy uses the most recent sample to calculate the innovation and to produce a 1-bit codeword. In the setting of Theorem~\ref{Thm5}, although evaluating $D_{\mathrm{DET}}(R)$ does not give us an operational compressing policy, we know that the stochastic kernel that achieves the causal IDRF for discrete-time Gauss-Markov processes formed by the samples under uniform sampling policies has the form $\bigotimes_{i=1}^\infty P_{\hat{W}_{\tau_i}|W_{\tau_i}-\hat{W}_{\tau_{i-1}},\hat{W}_{\tau_{i-1}}}$ \cite[Eq. (5.12)]{27}, suggesting that at the encoder, it is sufficient to compress the quantization innovation $ W_{\tau_i}-\hat W_{\tau_{i-1}}$ only. The decoder computes the estimate $\hat W_{\tau_i}$ as $\hat W_{\tau_i} = \hat W_{\tau_{i-1}}+\mathsf{q}_i(W_{\tau_i}-\hat W_{\tau_{i-1}})$, where $\mathsf{q}_i = \mathsf{g}_i\circ \mathsf{f}_i$, $\mathsf{f}_i\left(W_{\tau_i}-\hat{W}_{{\tau}_{i-1}}\right)\in\left[2^{\ell_i}\right]$ is the $i$-th binary codeword, and $\mathsf{g}_i(c)\in\mathbb R$ is the quantization representation point corresponding to $c\in \left[2^{\ell_i}\right]$. In practice, one can use the \emph{greedy Lloyd-Max compressor} \cite{23} that runs the Lloyd-Max algorithm for the quantization innovation in each step based on the prior probability of the quantization innovation. Specifically, the prior for $(i+1)$-th step is the pdf of the quantization innovation $W_{\tau_{i+1}}-\hat W_{\tau_{i}}$, which can be computed as the convolution of the pdfs of the quantization error $W_{\tau_{i}}-\hat W_{\tau_i}$ and the process increment $W_{\tau_{i+1}}-W_{\tau_i}$. The globally optimal scheme has a negligible gain over the greedy Lloyd-Max algorithm even in the finite time horizon \cite{23}.

\begin{figure}[h!]
\centering
 \includegraphics[width=0.45\textwidth]{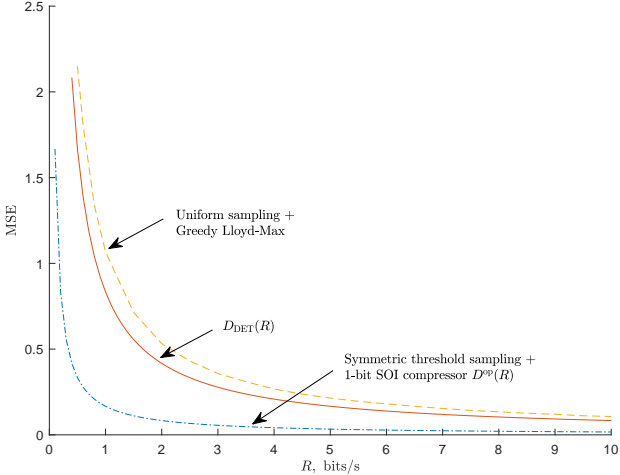}
  \caption{MSE versus rate}
  \label{Fig3}
\end{figure}

Fig.~\ref{Fig3} displays distortion-rate tradeoffs obtained in $\text{Theorems}~\ref{Thm4}~\text{and} ~\ref{Thm5}$, as well as a numerical simulation of the uniform sampling in $\text{Theorem}~ \ref{Thm5}$ with the greedy Lloyd-Max quantization of innovations. The symmetric threshold sampling policy followed by a 1-bit SOI compressor leads to a much lower MSE than uniform sampling. Indeed, according to $\text{Theorems}~\ref{Thm4}~\text{and} ~\ref{Thm5}$, $\frac{D_{\mathrm{DET}}(R)}{D^{\mathrm{op}}(R)}=5$, and $D^{\mathrm{op}}_{\mathrm{DET}}(R)$ for the uniform sampling is even higher than $D_{\mathrm{DET}}(R)$ by \eqref{oplarger}. Note that the greedy Lloyd-Max curve is rather close to the $D_{\mathrm{DET}}(R)$ curve, indicating that the IDRF is a meaningful gauge of what is attainable in zero-delay continuous-time causal compression.

The optimal sampling policies of $\text{Theorems}~\ref{Thm4}~\text{and}~ \ref{Thm5}$, i.e. the symmetric threshold and the uniform sampling policies, are the same as the corresponding optimal sampling policies that achieve the minimum sampling distortion \cite[Sec. 3.1]{4} \cite{5} subject to an average sampling frequency constraint \eqref{f} with $f=R$.
The value of $D^{\mathrm{op}}(R)$ \eqref{58} achieved by the symmetric threshold sampling policy is the same as the sampling distortion, since the 1-bit SOI compressor is able to compress each innovation noiselessly due to the size-2 alphabet of the innovations, resulting in zero quantization distortion \eqref{6b}. In contrast, for deterministic sampling policies, quantization distortion is unavoidable, since the samples are Gaussian. If we only consider the constraint on the sampling frequency, the optimal deterministic sampling policy for the Wiener process is uniform sampling \cite[Sec. 3.1]{4}. Nevertheless, the result in Theorem~\ref{Thm5} implies that uniform sampling is still optimal in the IDRF sense, whether or not the quantization distortion is taken into account.

\section{Proofs of the Main Results}\label{PMR}
\subsection{Proof of Theorem~\ref{Thm4}}\label{IV.A}
To prove that the SOI coding scheme in Theorem~\ref{Thm4} achieves the ODRF, we first derive a lower bound to the ODRF, and then we show that this lower bound is achieved by the SOI coding scheme. The MSE achievable by causal rate-$R$ codes is lower bounded in the following way,
\begin{subequations}
\allowdisplaybreaks
\begin{align} \label{thm1_dec_b} 
    &\inf_{\substack{ \pi_T\in\Pi_T,\\ f_T\in F_T\colon\\ \eqref{comm_cons}}}\frac{1}{T}\mathbb{E}\biggl[\sum_{i=0}^N\int_{\tau_i}^{\tau_{i+1}} (W_t- \mathbb E[W_t|U^i,\tau^i,t<\tau_{i+1}])^2dt\biggr]\\ \nonumber
    \geq &\inf_{\substack{ \pi_T\in\Pi_T\colon\\ \frac{\mathbb E[N]}{T}\leq R}}\frac{1}{T}\mathbb{E}\biggl[\sum_{i=0}^N\int_{\tau_i}^{\tau_{i+1}} (W_t- \mathbb E[W_t|\{W_s\}_{s=0}^{\tau_i},\tau^i,\\ \label{thm1_dec_b_f}  
    &t<\tau_{i+1}])^2dt\biggr]\\ \label{thm1_dec_c} 
    =&\inf_{\substack{ \pi_T\in\Pi_T\colon\\ \frac{1}{T}\mathbb E[N]\leq R}}\frac{1}{T}\mathbb{E}\biggl[\sum_{i=0}^N\int_{\tau_i}^{\tau_{i+1}} (W_t- \mathbb E[W_t|W_{\tau_i},\tau_i])^2dt\biggr]\\ \label{thm1_dec_d} 
    =&\inf_{\substack{ \pi_T\in\Pi_T\colon\\ \frac{1}{T}\mathbb E[N]\leq R}}\frac{1}{T}\mathbb{E}\biggl[\sum_{i=0}^N\int_{\tau_i}^{\tau_{i+1}} (W_t- W_{\tau_i})^2dt\biggr],
\end{align}
\end{subequations}
where \eqref{thm1_dec_b_f} holds since $\sigma(U^i)\subset\sigma(\{W_t\}_{t=0}^{\tau_i})$ and 
\begin{align}\label{NL}
    \mathbb E[N]\leq \mathbb E\left[\sum_{i=1}^N\ell_i\right] ;
\end{align}
\eqref{thm1_dec_c} holds due to \cite[Cor. 1.1]{Nian2} where the Wiener process satisfies the regularity conditions in \cite{Nian2}; \eqref{thm1_dec_d} is due to the strong Markov property of the Wiener process. 

It remains to show that the lower bound \eqref{thm1_dec_d} is achieved by the SOI coding scheme. First, we notice that the optimization problem in \eqref{thm1_dec_d} corresponds to determining the optimal sampling policy that minimizes the MSE subject to an average sampling frequency constraint, where $N$ can be considered as the total number of samples taken within $[0,T]$. According to \cite[Eq. (20)]{5}, the optimal sampling policy that achieves the $\limsup_{T\rightarrow\infty}$ of \eqref{thm1_dec_d} is given by \eqref{opt_sym}.
Second, we know that each innovation 
\begin{equation}
    \Delta W_{i}\triangleq W_{\tau_{i+1}}-W_{\tau_i},~ i=0,1,2,\cdots,
\end{equation}
is equiprobabily distributed on a size-2 alphabet $\left\{\pm \sqrt{\frac{1}{R}}\right\}$. Thus, $\Delta W_{i}$ can be noiselessly encoded using 1-bit codewords $U^i$, while satisfying the inequality in \eqref{NL} with equality. Therefore, the $\limsup_{T\rightarrow\infty}$ of \eqref{thm1_dec_d} is achieved by the SOI coding scheme.

From the equality in \eqref{thm1_dec_c} and the fact that the SOI coding scheme attains \eqref{thm1_dec_b}, we conclude that the optimal decoding policy \eqref{opt_dec} can indeed be simplified to \eqref{sim_opt_dec} given the optimal encoding policy in Theorem~\ref{Thm4}.

\subsection{Proof of Theorem~\ref{thm6} \eqref{25aa}}\label{IV-B}
$D^{\mathrm{op}}(f,R_s)$ is lower-bounded by the sampling distortion \eqref{odfrf_a}. This lower bound is achieved by a symmetric threshold sampling policy with thresholds $\pm \sqrt{\frac{1}{f}}$, and a 1-bit SOI compressor, where the symmetric threshold sampler achieves the minimum of \eqref{odfrf_a}, which is equal to $\frac{1}{6f}$ \cite[Eq. (20)]{5}. Since the 1-bit SOI compressor results in zero quantization distortion \eqref{odfrf_b},
\begin{equation}\label{58_f}
    D^{\mathrm{op}}(f,R_s) = \frac{1}{6f},
\end{equation}
for any $R_s\geq 1$. Plugging \eqref{58_f} into the minimization problem in \eqref{25a_a}, we obtain
\begin{subequations}\label{25a_a_min}
\begin{align}
    \min_{\substack{f>0,R_s\geq 1\colon\\fR_s\leq R}} D^{\mathrm{op}}(f,R_s) &= D^{\mathrm{op}}(R,1),\\
    D^{\mathrm{op}}(R,1)&= \frac{1}{6R}.
\end{align}
\end{subequations}

Comparing \eqref{25a_a_min} to \eqref{58}, we conclude that \eqref{25aa} holds.

\subsection{Proof of Theorem~\ref{thm6} \eqref{25bb} }\label{IV.B}

Since the samples taken under a deterministic sampling policy form a Gauss-Markov process, we first compute $D_{\mathrm{DET}}(f,R_s)$ building on existing results on the causal IDRF \eqref{idfrf_b} of discrete-time Gauss-Markov processes.

\begin{lemma}\label{6}
The IDFRF under deterministic sampling policies can be written as
\begin{subequations} \label{24}
\begin{align}\label{37aa}
&D_{\mathrm{DET}}(f,R_s) = \limsup_{N\rightarrow\infty} D_N(f,R_s),\\ \label{24bb}
&D_N(f,R_s)=\inf_{\substack{T^N\geq 0\colon\\ \eqref{f_equi}}}\frac{f}{N}\left(\sum_{i=0}^{N}\frac{T_i^2}{2}+\min_{\substack{D^N\geq 0\colon\\\eqref{dfrs_cons}}} \sum_{i=1}^{N}T_iD_i\right),
\end{align}
\end{subequations}
where the minimization constraints in \eqref{24} are
\begin{equation}\label{f_equi}
    \frac{1}{N}\sum_{i=0}^N T_i=\frac{1}{f},
\end{equation}
and
\begin{subequations}\label{dfrs_cons}
\begin{align}\label{dfrs_cons_a}
&z\left(D^N\right)\triangleq\frac{1}{N}\left(\sum_{i=1}^{N-1}\log\left(1+\frac{T_i}{D_i}\right)+\log\left(\frac{T_0}{D_N}\right)\right)\leq 2R_s,\\ \label{min2}
&D_{i-1}+T_{i-1}\geq D_i, i=1,\dots,N.
\end{align}
\end{subequations}
\end{lemma}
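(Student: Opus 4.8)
The plan is to reduce the infimum over deterministic sampling policies in \eqref{idfrf} to an infimum over the vector of inter-sample intervals, to evaluate the sampling distortion in closed form, and to recognize the inner quantization term as the causal IDRF of the Gauss--Markov process formed by the samples. Fix $f>0$ and a horizon $T$ with $fT\in\mathbb Z^+$. Since a deterministic policy has a deterministic number of samples, the frequency constraint \eqref{f} forces $N=fT$, and the policy is then equivalent to a choice of inter-sample intervals $T_i\triangleq\tau_{i+1}-\tau_i\ge 0$, $i=0,\dots,N$ (with $\tau_0=0$ and $\tau_{N+1}\triangleq T$), subject to $\sum_{i=0}^N T_i=T$, which is \eqref{f_equi}. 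Since $W_t-W_{\tau_i}$ is Gaussian with variance $t-\tau_i$, $\mathbb E\!\left[\int_{\tau_i}^{\tau_{i+1}}(W_t-W_{\tau_i})^2\,dt\right]=\int_0^{T_i}\! s\,ds=\tfrac{T_i^2}{2}$, so after normalization the sampling term in \eqref{idfrf_a} contributes $\tfrac1T\sum_{i=0}^N\tfrac{T_i^2}{2}=\tfrac fN\sum_{i=0}^N\tfrac{T_i^2}{2}$.

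For the inner term, set $X_i\triangleq W_{\tau_i}$; then $X_0=0$ and $X_i=X_{i-1}+V_i$ with $V_i\sim\mathcal N(0,T_{i-1})$ independent of $X^{i-1}$, so $X^N$ is a (nonstationary) Gauss--Markov process and \eqref{idfrf_b} is its causal IDRF under the weighted sum distortion $\sum_{i=1}^N T_iD_i$, $D_i\triangleq\mathbb E[(X_i-\hat X_i)^2]$, subject to $\tfrac1N I(X^N\to\hat X^N)\le R_s$. Invoking the structure of the optimal test channel for sequential coding of Gauss--Markov sources \cite[Eq. (5.12)]{27}, it is without loss of optimality to take $\hat X_i=\mathbb E[X_i\mid\hat X^i]$ and the forward channels Gaussian; then $D_i=\mathrm{Var}(X_i\mid\hat X^i)$ is deterministic, the one-step prediction identity gives $\mathrm{Var}(X_i\mid\hat X^{i-1})=D_{i-1}+T_{i-1}$ with the convention $D_0\triangleq 0$ (so $\mathrm{Var}(X_1\mid\hat X^0)=T_0$), hence $I(X_i;\hat X_i\mid\hat X^{i-1})=\tfrac12\log\frac{D_{i-1}+T_{i-1}}{D_i}$, and feasibility forces $0<D_i\le D_{i-1}+T_{i-1}$, i.e. \eqref{min2}. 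Because the optimal channel makes $\hat X_i$ depend on $(X^i,\hat X^{i-1})$ only through $(X_i,\hat X_{i-1})$, we get the Markov chain $X^{i-1}-(X_i,\hat X^{i-1})-\hat X_i$, so $I(X^i;\hat X_i\mid\hat X^{i-1})=I(X_i;\hat X_i\mid\hat X^{i-1})$, and summing and telescoping $\sum_{i=1}^N\log\tfrac{P_i}{D_i}$ with $P_1=T_0$, $P_i=D_{i-1}+T_{i-1}$ gives
\begin{equation}\label{eq:dirinfoplan}
I(X^N\to\hat X^N)=\sum_{i=1}^N I(X_i;\hat X_i\mid\hat X^{i-1})=\frac12\left(\sum_{i=1}^{N-1}\log\!\Big(1+\tfrac{T_i}{D_i}\Big)+\log\tfrac{T_0}{D_N}\right)=\frac N2\, z(D^N).
\end{equation}
Hence $\tfrac1N I(X^N\to\hat X^N)\le R_s$ is equivalent to $z(D^N)\le 2R_s$, so the inner infimum in \eqref{idfrf_b} equals $\min_{D^N\ge 0\colon\,\eqref{dfrs_cons}}\sum_{i=1}^N T_iD_i$; the minimum is attained because, on the sublevel sets of the objective, the constraint $z(D^N)\le 2R_s$ confines each $D_i$ with $T_i>0$ to a compact subinterval of $(0,\infty)$.

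Assembling the two terms, $\inf_{\pi_T\colon\,\eqref{f}}\tfrac1T\{\,\cdots\}=\inf_{T^N\ge0\colon\,\eqref{f_equi}}\tfrac fN\big(\sum_{i=0}^N\tfrac{T_i^2}{2}+\min_{D^N\ge0\colon\,\eqref{dfrs_cons}}\sum_{i=1}^N T_iD_i\big)=D_N(f,R_s)$ whenever $N=fT$, and taking $\limsup$ over $T$ --- equivalently over $N=fT\to\infty$ --- yields $D_{\mathrm{DET}}(f,R_s)=\limsup_{N\to\infty}D_N(f,R_s)$, which is \eqref{24}. The main obstacle is the middle step: rigorously justifying the reduction to the MMSE estimator and Gaussian test channels for the \emph{nonstationary} Gauss--Markov source under a \emph{weighted-sum} (rather than per-letter) distortion, and deriving \eqref{eq:dirinfoplan}. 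The achievability direction is a direct Gaussian test-channel construction matching any feasible $D^N$; the converse needs the usual chain --- conditioning reduces differential entropy, the Gaussian maximizes it for a given variance, and the one-step prediction bound $\mathrm{Var}(X_i\mid\hat X^{i-1})\ge D_i$ --- which is precisely what we import from \cite{27}. The order-of-limits identification (legitimate because $T\mapsto N=fT$ is a bijection onto $\mathbb Z^+$ along the admissible horizons) and the attainment of the inner minimum are routine.
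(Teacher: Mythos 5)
Your proposal is correct and follows essentially the same route as the paper's proof: decompose the distortion into the sampling term $\mathbb E\bigl[\int_{\tau_i}^{\tau_{i+1}}(W_t-W_{\tau_i})^2dt\bigr]=T_i^2/2$ plus the causal IDRF of the Gauss--Markov sample process, reduce the latter to the parametric minimization over the distortion vector $D^N$ subject to \eqref{dfrs_cons}, and convert $\limsup_{T\to\infty}$ into $\limsup_{N\to\infty}$ using $N=fT$. The only difference is that the paper imports the parametric form \eqref{dfrs_cons} wholesale from the sequential rate-distortion dual in \cite[Eq. (18)]{10}, whereas you reconstruct it by telescoping the directed information under the optimal test-channel structure of \cite[Eq. (5.12)]{27} (deferring, as the paper effectively does, the Gaussian/MMSE converse to that reference), which changes nothing of substance.
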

\begin{proof}
Appendix \ref{D}.
\end{proof}

The optimization variable $T^N$ in \eqref{24} is the vector of sampling intervals $T^N = \{T_0, T_1, \ldots, T_N\}$, where 
\begin{equation}
\begin{aligned}
    &T_i = \tau_{i+1} - \tau_i, ~i=0,\dots,N-1,\\
    &T_N = T-\tau_N,
\end{aligned}
\end{equation}
that determine a deterministic sampling policy. The optimization variable $D^N$ in \eqref{24} is the vector of sample distortions $D^N = \{D_1,\dots,D_N\}$.

Note that $D_{\mathrm{DET}}(R)$ in \eqref{idrf} is related to $D_{\mathrm{DET}}(f,R_s)$ in \eqref{24} as follows,
\begin{equation}\label{IDRF_IDFRF}
    D_{\mathrm{DET}}(R) = \limsup_{N\rightarrow\infty} \inf_{\substack{f>0,R_s\geq 1\colon\\ fR_s\leq R}} D_{N}(f,R_s).
\end{equation}
We observe that \eqref{25a_f} does not directly follow \eqref{IDRF_IDFRF}, since the right-hand side of \eqref{25a_f} switches the order of $\limsup$ and $\inf$ in \eqref{IDRF_IDFRF}.

We will use Lemmas~\ref{Lemm2}~to~\ref{Lemm5} that follow to prove \eqref{25a_f} in Theorem~\ref{thm6}.
\begin{lemma}\label{Lemm2}
$D_N(f,R_s)$ is lower-bounded as
\begin{subequations}\label{476}
\begin{align}
&D_N(f,R_s) \geq \underline{D}_N(f,R_s),\\ \nonumber
&\triangleq\inf_{\substack{T_0\geq0,T_N\geq 0\\ T_0+T_N\leq \frac{N}{f}}}
    \frac{f}{2} \Biggl(\frac{T_0^2+T_N^2+2\log e\lambda^*(f,R_s,N)}{N}\\\label{38bb}
    &+\frac{N-1}{N}T^*(f,N)\sqrt{T^*(f,N)^2+4\log e\lambda^*(f,R_s,N)}\Biggr),
\end{align}
\end{subequations}
where $T^*(f,N)$ is given by,
\begin{equation}\label{T_seq}
      T^*(f,N) \triangleq \frac{N}{f(N-1)}-\frac{T_0+T_N}{N-1}, i=1,\dots,N-1,
\end{equation}
and $\lambda^*(f,R_s,N)\geq 0$ is the unique solution to
\begin{equation} \label{lambda_exist}
    z\left(D^{N*}\right) = 2R_s,
\end{equation}
with $D^N$ in \eqref{dfrs_cons_a} replaced by
\begin{subequations}\label{28ab}
\begin{align}
    &D_i^* = \frac{-T_i+\sqrt{T_i^2+4\log e\lambda^*(f,R_s,N)}}{2},~i=1,\dots,N-1,\\
    &D_N^* = \frac{\lambda^*(f,R_s,N)\log e}{T_N},
\end{align}
\end{subequations}
and $T_i$, $i=1,\dots,N-1$ in \eqref{dfrs_cons_a} replaced by $T^*(f,N)$ in \eqref{T_seq}.
\end{lemma}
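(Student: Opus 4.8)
The claim \eqref{476} is a purely deterministic statement about the optimization \eqref{24bb}, so the plan is a short chain of relaxations closed off by Jensen's inequality. Fix $N\ge 2$ and, until the very last step, regard the boundary intervals $T_0,T_N$ (with $T_0+T_N\le N/f$) as fixed; the interior intervals $T_1,\dots,T_{N-1}$ then range over the simplex $\{T_i\ge 0:\sum_{i=1}^{N-1}T_i=(N-1)T^*(f,N)\}$, with $T^*(f,N)$ as in \eqref{T_seq} by \eqref{f_equi}. First I would drop the monotonicity constraint \eqref{min2} from the inner minimization in \eqref{24bb}: enlarging the feasible set for $D^N$ can only lower the inner minimum, so it suffices to lower bound $\frac fN\bigl(\sum_{i=0}^N T_i^2/2+\min\{\sum_{i=1}^N T_iD_i:D^N\ge 0,\ z(D^N)\le 2R_s\}\bigr)$ with $z$ given by \eqref{dfrs_cons_a} alone. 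This inner problem is convex (linear objective; $z$ is convex in $D^N$ since $x\mapsto\log(1+a/x)$ and $x\mapsto-\log x$ are convex on $(0,\infty)$) and strictly feasible, so for every $\lambda\ge 0$ weak Lagrangian duality yields the lower bound $\min_{D^N\ge 0}\bigl[\sum_{i=1}^N T_iD_i+\lambda\bigl(\sum_{i=1}^{N-1}\log(1+T_i/D_i)+\log(T_0/D_N)-2NR_s\bigr)\bigr]$, which decouples over coordinates: the $i$-th minimand for $i\le N-1$, namely $T_iD_i+\lambda\log(1+T_i/D_i)$, is minimized at $D_i=\tfrac12(-T_i+\sqrt{T_i^2+4\lambda\log e})$, i.e.\ at \eqref{28ab} with $\lambda^*$ replaced by $\lambda$, and the $N$-th minimand at $D_N=\lambda\log e/T_N$.

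Next, substituting these minimizers back and using $1+T_i/D_i=(T_i+\sqrt{T_i^2+4\lambda\log e})^2/(4\lambda\log e)$ and $T_iD_i=\tfrac12(-T_i^2+T_i\sqrt{T_i^2+4\lambda\log e})$, the $i$-th term ($i\le N-1$) together with its sampling contribution $T_i^2/2$ collapses — the $T_i^2$'s cancel — to
\[
\phi_\lambda(T_i)\triangleq\tfrac12\,T_i\sqrt{T_i^2+4\lambda\log e}+2\lambda\log\bigl(T_i+\sqrt{T_i^2+4\lambda\log e}\bigr)-\lambda\log(4\lambda\log e).
\]
Collecting everything, the quantity to be bounded is at least $\frac fN\bigl(\tfrac{T_0^2+T_N^2}{2}+\sum_{i=1}^{N-1}\phi_\lambda(T_i)+\lambda\log e+\lambda\log\tfrac{T_0T_N}{\lambda\log e}-2N\lambda R_s\bigr)$, for every $\lambda\ge 0$.

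The crux — and the step I expect to carry the real work — is to show $\phi_\lambda$ is convex on $(0,\infty)$ for each $\lambda\ge 0$. A direct computation gives $\phi_\lambda'(T)=\bigl(T^2+2\lambda(\log e+1)\bigr)/\sqrt{T^2+4\lambda\log e}$ and hence $\phi_\lambda''(T)=T\bigl(T^2+2\lambda(3\log e-1)\bigr)/(T^2+4\lambda\log e)^{3/2}$, which is positive because $3\log e-1>0$ (the case $\lambda=0$ being trivial, $\phi_0(T)=T^2/2$). The cancellations producing $\phi_\lambda$ and its derivatives are delicate and the final sign hinges on this numerical constant, which is why I regard this as the technical heart of the argument. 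Given convexity and the evident symmetry of the interior terms, Jensen's inequality over the simplex yields $\sum_{i=1}^{N-1}\phi_\lambda(T_i)\ge(N-1)\,\phi_\lambda\bigl(T^*(f,N)\bigr)$.

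Finally, the resulting bound depends only on $\lambda,T_0,T_N$ and holds for all $\lambda\ge 0$; I would take $\lambda=\lambda^*(f,R_s,N)$, the unique root of \eqref{lambda_exist} — unique because, at the uniform interior profile, $\lambda^*\mapsto z(D^{N*})$ is continuous and strictly decreasing from $+\infty$ to $-\infty$. Invoking \eqref{lambda_exist} a second time, in the form $(N-1)\log\bigl(1+T^*(f,N)/D_1^*\bigr)+\log(T_0/D_N^*)=2NR_s$, cancels $-2N\lambda^*R_s$ against the logarithmic parts of $(N-1)\phi_{\lambda^*}\bigl(T^*(f,N)\bigr)$ and $\lambda^*\log\tfrac{T_0T_N}{\lambda^*\log e}$, and collapses the bound to exactly the right-hand side of \eqref{38bb} evaluated at $(T_0,T_N)$. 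Since this holds for every feasible $T^N$ and the bound depends on $T^N$ only through $(T_0,T_N)$, taking the infimum over $\{T_0,T_N\ge 0:T_0+T_N\le N/f\}$ yields \eqref{476}.
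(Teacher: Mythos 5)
Your proof is correct, but it takes a genuinely different route from the paper's. The paper also starts by deleting the constraint \eqref{min2}, but then solves the relaxed inner problem \emph{exactly}: it verifies convexity and Slater's condition, invokes strong duality and complementary slackness to substitute the optimizer \eqref{28ab} back, and arrives at the reduced objective $g(T_1,\dots,T_{N-1})$ in \eqref{g}, in which the multiplier $\lambda^*$ depends implicitly on the interior intervals; it must therefore compute $\partial\lambda^*/\partial T_i$ by implicit differentiation (\eqref{lambda_deri}--\eqref{fpd}) and prove that $g$ is Schur-convex, concluding via majorization that the uniform interior profile \eqref{T_seq} minimizes it. You instead keep the multiplier $\lambda\geq 0$ fixed and use only weak Lagrangian duality, so the bound decouples coordinatewise, ordinary convexity of $\phi_\lambda$ plus Jensen replaces the Schur-convexity/majorization step, and only at the end do you choose $\lambda=\lambda^*(f,R_s,N)$, defined by \eqref{lambda_exist} at the uniform profile, and use that defining equation to cancel the logarithmic terms against $-2N\lambda R_s$, recovering \eqref{38bb} exactly. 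Since Lemma~\ref{Lemm2} asserts only a lower bound, weak duality suffices, so your route dispenses with Slater/strong duality and with the implicit differentiation of $\lambda^*$; what the paper's route buys is the exact solution of the relaxed problem for each fixed $(T_0,T_N)$ (reused in the achievability argument of Lemma~\ref{Lemm3}), whereas yours is the leaner argument for the bound itself. One correction: your formula for $\phi_\lambda'$ drops the $\log e$ factor arising from differentiating the base-$2$ logarithm; the correct computation gives $\phi_\lambda'(T)=\sqrt{T^2+4\lambda\log e}$ and hence $\phi_\lambda''(T)=T/\sqrt{T^2+4\lambda\log e}\geq 0$, so the convexity you need is immediate and does not hinge on the constant $3\log e-1$ (with natural logarithms your expressions reduce to these), and the rest of your cancellation bookkeeping goes through unchanged.
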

\begin{proof}
Appendix~\ref{proofLemm2}.
\end{proof}

\begin{lemma}\label{Lemm3}
$D_N(f,R_s)$ is upper-bounded as
\begin{subequations}\label{DNupper}
\begin{align}
   &D_N(f,R_s) \leq \bar D_N(f,R_s),\\  \nonumber
   &\triangleq\frac{N}{f(N+1)^2}+\frac{\log e\lambda^*(f,R_s,N)f}{N}\\ \label{DNupper_1}
   &+\frac{N-1}{2(N+1)}\sqrt{\left(\frac{N}{f(N+1)}\right)^2+4\log e\lambda^*(f,R_s,N)}, 
\end{align}
\end{subequations}
where $\lambda^*(f,R_s,N)\geq 0$ is the unique solution to \eqref{lambda_exist} with $D^N$ in \eqref{dfrs_cons_a} replaced by \eqref{28ab} and $T_i$, $i=0,\dots,N$ in \eqref{dfrs_cons_a} equal to
\begin{equation}\label{uniform_T}
    T_0=T_1=\dots=T_N = \frac{N}{f(N+1)}.
\end{equation}
\end{lemma}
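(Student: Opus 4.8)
The plan is to upper-bound $D_N(f,R_s)$ by evaluating the nested optimization \eqref{24bb} at a single explicitly chosen feasible point: since $D_N(f,R_s)$ is an infimum over sampling-interval vectors $T^N$ satisfying \eqref{f_equi} followed by a minimum over distortion vectors $D^N$ satisfying \eqref{dfrs_cons}, any feasible pair furnishes an upper bound. For $T^N$ I would take the uniform allocation \eqref{uniform_T}, $T_0=\dots=T_N=T^*$ with $T^*\triangleq\frac{N}{f(N+1)}$, which satisfies the frequency constraint \eqref{f_equi} by inspection. For $D^N$ I would take the reverse-waterfilling vector $D^{N*}$ of \eqref{28ab}; this is the Karush--Kuhn--Tucker point of the inner convex program $\min_{D^N\geq 0\colon\eqref{dfrs_cons_a}}\sum_{i=1}^N T_iD_i$ obtained by dropping the monotonicity constraint \eqref{min2}, with $\lambda^*$ the Lagrange multiplier tuned so the rate constraint binds, i.e. \eqref{lambda_exist}.

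I would then verify that $(T^N,D^{N*})$ is feasible. For the scalar $\lambda^*$: with $T_i=T^*$, the map $\lambda\mapsto z(D^{N*})$ of \eqref{dfrs_cons_a} and \eqref{28ab} is continuous and strictly decreasing on $(0,\infty)$ (each $D_i^*$ is increasing in $\lambda$), tending to $+\infty$ as $\lambda\to 0^+$ and to $-\infty$ as $\lambda\to\infty$, so \eqref{lambda_exist} has a unique root $\lambda^*>0$; moreover $D_i^*\ge 0$ because $\sqrt{T^{*2}+4\log e\,\lambda^*}\ge T^*$, and \eqref{dfrs_cons_a} holds with equality by construction. It remains to check the monotonicity constraint \eqref{min2}, $D_{i-1}+T_{i-1}\ge D_i$. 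For the interior indices $i=2,\dots,N-1$ this is immediate, since $D_1^*=\dots=D_{N-1}^*$ (equal $T_i$) and \eqref{min2} reduces to $T^*\ge 0$. For the boundary indices $i=1$ and $i=N$, an elementary computation shows \eqref{min2} at these indices is equivalent to $\log e\,\lambda^*\le 2T^{*2}$; since $\lambda^*$ decreases in $R_s$ and the threshold $\log e\,\lambda^*=2T^{*2}$ corresponds to $z(D^{N*})=\frac{N-2}{N}$, hence to $R_s=\frac{N-2}{2N}<\frac12$, this inequality holds strictly for every $R_s\ge 1$.

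With feasibility established, I would substitute $T_i=T^*$ and $D^{N*}$ into $\frac{f}{N}\bigl(\sum_{i=0}^N\frac{T_i^2}{2}+\sum_{i=1}^N T_iD_i^*\bigr)$. Using the identities $D_i^*(D_i^*+T^*)=\log e\,\lambda^*$ for $i\le N-1$ and $T^*D_N^*=\log e\,\lambda^*$, the quadratic-in-$T^*$ terms collapse so that the bracket equals $T^{*2}+\frac{N-1}{2}T^*\sqrt{T^{*2}+4\log e\,\lambda^*}+\log e\,\lambda^*$; multiplying by $\frac{f}{N}$ and inserting $T^*=\frac{N}{f(N+1)}$ reproduces exactly the three summands of \eqref{DNupper_1}. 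Since this value is attained by the feasible pair $(T^N,D^{N*})$ inside the optimization \eqref{24bb}, we conclude $D_N(f,R_s)\le\bar D_N(f,R_s)$.

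I expect the only real obstacle to be the feasibility check of \eqref{min2} at the boundary indices: the distortions $D_1^*$ and $D_N^*$ are tied to the boundary term $\log(T_0/D_N)$ of \eqref{dfrs_cons_a} rather than to a term of the form $\log(1+T_i/D_i)$, so the equal-$D_i^*$ shortcut is unavailable and one must push through the explicit inequality and confirm that the standing assumption $R_s\ge 1$ --- via the monotonicity of $\lambda^*$ in $R_s$ --- keeps $D^{N*}$ inside the constraint set. The remaining ingredients, namely the monotonicity of $z$ in $\lambda$ and the closed-form evaluation of the objective, are routine.
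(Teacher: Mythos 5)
Your proposal is correct and follows essentially the same route as the paper: exhibit the uniform intervals \eqref{uniform_T} together with the distortions \eqref{28ab} at the rate-binding multiplier $\lambda^*(f,R_s,N)$ as a feasible point of \eqref{24bb}, verify the dropped constraint \eqref{min2} (trivially at interior indices, via a bound on $\lambda^*$ at $i=1$ and $i=N$), and evaluate the objective, which indeed collapses to the three summands of \eqref{DNupper_1}. The only difference is the boundary feasibility check: the paper derives the quantitative bound $\lambda^*(f,R_s,N)\log e\le \frac{N^2}{2f^2(N+1)^2}$ from the fixed-point equation \eqref{HN_X} (a bound it reuses later in the proof of Lemma~\ref{Lemm5}), whereas you identify the exact threshold $\log e\,\lambda^*(f,R_s,N)\le 2\bigl(\frac{N}{f(N+1)}\bigr)^2$ and exclude it using the monotonicity of $z$ in $\lambda$ together with $R_s\ge 1$; both verifications are valid.
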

\begin{proof}
Appendix~\ref{proofLemm3}.
\end{proof}

\begin{lemma}\label{Lemm4}
\begin{equation}\label{underlineDET}
    D_{\mathrm{DET}}(f,R_s) = \frac{1}{2f}+\frac{1}{f(2^{2R_s}-1)},
\end{equation}
where \eqref{underlineDET} can be achieved by a uniform sampling policy with sampling intervals equal to
\begin{equation}
    T_i = \frac{1}{f},~ i = 0,1,\dots
\end{equation}
\end{lemma}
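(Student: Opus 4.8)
The plan is to sandwich $D_N(f,R_s)$ between the lower bound $\underline{D}_N(f,R_s)$ of Lemma~\ref{Lemm2} and the upper bound $\bar{D}_N(f,R_s)$ of Lemma~\ref{Lemm3}, and to show that both bounds converge, as $N\to\infty$, to $\frac{1}{2f}+\frac{1}{f(2^{2R_s}-1)}$. Since $\underline{D}_N(f,R_s)\leq D_N(f,R_s)\leq \bar{D}_N(f,R_s)$ for every $N$, this forces $\lim_{N\to\infty}D_N(f,R_s)$ to exist and to equal that common value, and then \eqref{37aa} yields the claimed formula for $D_{\mathrm{DET}}(f,R_s)$. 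Achievability is immediate: the sampling intervals \eqref{uniform_T}, $T_i=\frac{N}{f(N+1)}$, are precisely the ones for which Lemma~\ref{Lemm3} produces $\bar{D}_N(f,R_s)$, and $\frac{N}{f(N+1)}\to\frac1f$, so the uniform policy $T_i=\frac1f$ attains $D_{\mathrm{DET}}(f,R_s)$ in the limit.

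The first step is to determine the asymptotics of the implicitly defined multiplier $\lambda^*(f,R_s,N)$, which enters both bounds. Using \eqref{28ab}, the map $\lambda\mapsto z\!\left(D^{N*}\right)$ is continuous, strictly increasing, and unbounded above, so \eqref{lambda_exist} has a unique solution; moreover $z\!\left(D^{N*}\right)\geq \frac{N-1}{N}\log\!\left(1+\frac{T_i}{D_1^*}\right)$ with $D_1^*$ decreasing in $\lambda$, which gives a bound on $\lambda^*$ that is uniform in $N$ once the sampling intervals are bounded, and likewise a bound away from $0$. In the regime where all the relevant sampling intervals converge to $\frac1f$ — true for the uniform choice in Lemma~\ref{Lemm3}, and, as argued below, for the near-optimal choice in Lemma~\ref{Lemm2} — the boundary term $\frac1N\log\!\left(\frac{T_0}{D_N^*}\right)$ in \eqref{dfrs_cons_a} vanishes because $D_N^*$ is bounded away from $0$ and $\infty$, so \eqref{lambda_exist} reduces in the limit to $\log\!\left(1+\frac{1/f}{D_{1,\infty}}\right)=2R_s$, i.e. $D_{1,\infty}=\frac{1}{f(2^{2R_s}-1)}$. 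Squaring the defining relation in \eqref{28ab} gives the identity $\log e\,\lambda^* = D_1^*\!\left(D_1^*+T_i\right)$, hence $\log e\,\lambda^*(f,R_s,N)\to D_{1,\infty}\!\left(D_{1,\infty}+\tfrac1f\right)=\frac{2^{2R_s}}{f^2(2^{2R_s}-1)^2}$.

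Substituting this limit into the bounds finishes the computation. In $\bar{D}_N(f,R_s)$ of \eqref{DNupper} the first two terms, $\frac{N}{f(N+1)^2}$ and $\frac{\log e\,\lambda^* f}{N}$, vanish, while the third converges to $\frac12\sqrt{\frac{1}{f^2}+\frac{4\cdot 2^{2R_s}}{f^2(2^{2R_s}-1)^2}}$; using the algebraic identity $(2^{2R_s}-1)^2+4\cdot 2^{2R_s}=(2^{2R_s}+1)^2$ this equals $\frac{2^{2R_s}+1}{2f(2^{2R_s}-1)}=\frac{1}{2f}+\frac{1}{f(2^{2R_s}-1)}$. For $\underline{D}_N(f,R_s)$ of \eqref{476}, one shows the infimum over $T_0,T_N$ with $T_0+T_N\leq\frac Nf$ may be taken along a sequence with $T_0,T_N$ bounded: for the upper direction simply fix $T_0=T_N$ to a constant, so that $\limsup_N\underline{D}_N$ is at most the target; for the lower direction use that $T^*(f,N)\leq\frac{N}{f(N-1)}$, that the other summands in \eqref{38bb} are nonnegative, and that $t\mapsto t\sqrt{t^2+c}$ is increasing, to lower-bound the dominant term by a quantity with the same limit. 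Then $T^*(f,N)\to\frac1f$, $\frac{T_0^2+T_N^2+2\log e\,\lambda^*}{N}\to0$, and $\frac{N-1}{N}T^*(f,N)\sqrt{T^*(f,N)^2+4\log e\,\lambda^*}\to\frac1f\cdot\frac{2^{2R_s}+1}{f(2^{2R_s}-1)}$, so $\underline{D}_N(f,R_s)\to\frac f2\cdot\frac{2^{2R_s}+1}{f^2(2^{2R_s}-1)}=\frac{1}{2f}+\frac{1}{f(2^{2R_s}-1)}$, matching the upper bound and completing the squeeze.

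I expect the main obstacle to be the rigorous handling of $\lambda^*(f,R_s,N)$ — proving it stays bounded away from $0$ and $\infty$ uniformly in $N$ and passing to the limit inside the implicit equation \eqref{lambda_exist} — together with justifying the interchange of the infimum over $(T_0,T_N)$ with the limit $N\to\infty$ in \eqref{476}, in particular controlling the mild singularity of $D_N^*$ as $T_N\to0$. Once those points are settled, the remaining steps are the elementary simplifications above, and Lemma~\ref{Lemm4}, hence \eqref{underlineDET} with the uniform policy $T_i=\frac1f$ optimal, follows.
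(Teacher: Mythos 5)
Your overall strategy is exactly the paper's: squeeze $D_N(f,R_s)$ between $\underline{D}_N(f,R_s)$ from Lemma~\ref{Lemm2} and $\bar{D}_N(f,R_s)$ from Lemma~\ref{Lemm3}, show that $\lambda^*(f,R_s,N)\log e$ tends to $\frac{2^{2R_s}}{f^2(2^{2R_s}-1)^2}$ (your identity $\log e\,\lambda^*=D_1^*(D_1^*+T_i)$ is a clean way to read this off, and it matches \eqref{70_up}), and then verify that both bounds converge to $\frac{1}{2f}+\frac{1}{f(2^{2R_s}-1)}$; the evaluation of $\bar D_N$ in \eqref{DNupper} is correct, as is the achievability statement that the intervals \eqref{uniform_T} converge to $\frac1f$.

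The genuine gap is in the lower-bound direction, i.e.\ in showing $\liminf_N \underline{D}_N(f,R_s)\geq \frac{1}{2f}+\frac{1}{f(2^{2R_s}-1)}$, where $\underline{D}_N$ still contains an infimum over $(T_0,T_N)$ with $T_0+T_N\leq \frac{N}{f}$, and both $T^*(f,N)$ in \eqref{T_seq} and $\lambda^*(f,R_s,N)$ in \eqref{lambda_exist} depend on that choice. Your proposed mechanism --- using $T^*(f,N)\leq \frac{N}{f(N-1)}$ together with the monotonicity of $t\mapsto t\sqrt{t^2+c}$ --- runs in the wrong direction: an upper bound on $T^*(f,N)$ only yields an \emph{upper} bound on the dominant term of \eqref{38bb}, and when $T_0+T_N$ is of order $\frac{N}{f}$ the quantity $T^*(f,N)$ is near zero, so the dominant term alone cannot be bounded below by the target. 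The missing idea (which the paper supplies) is that near-minimizing $(T_0,T_N)$ must satisfy $T_0+T_N=O(\sqrt{N})$, because otherwise the term $\frac{f}{2}\cdot\frac{T_0^2+T_N^2}{N}$ in \eqref{38bb} diverges; this pins $T^*(f,N)=\frac1f+O(N^{-1/2})$ along minimizers, after which one needs two-sided bounds on $\lambda^*(f,R_s,N)$ that hold uniformly over the remaining feasible $(T_0,T_N)$ (the paper's \eqref{lamm}) before passing to the limit in the implicit equation; this also disposes of the $T_N\to0$ singularity of $D_N^*$ that you flag but do not resolve. A minor additional slip: $z\left(D^{N*}\right)$ is strictly \emph{decreasing} in $\lambda$ (increasing $\lambda$ increases every $D_i^*$ in \eqref{28ab} and hence decreases \eqref{dfrs_cons_a}), not increasing; uniqueness of the solution to \eqref{lambda_exist} survives, but the monotonicity direction matters when you convert inequalities on $z$ into inequalities on $\lambda^*$, as in the paper's derivation of \eqref{lamm}.
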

\begin{proof}
Appendix~\ref{proofLemm4}.
\end{proof}

\begin{lemma}\label{Lemm5}
\begin{equation}\label{46_l5}
   D_{\mathrm{DET}}(R) = \min_{\substack{f>0,R_s\geq 1\colon\\fR_s\leq R}}D_{\mathrm{DET}}(f,R_s).
\end{equation}
\end{lemma}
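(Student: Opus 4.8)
\emph{The plan} is to prove the two inequalities in \eqref{46_l5} separately. The direction $D_{\mathrm{DET}}(R)\le\min_{f,R_s}D_{\mathrm{DET}}(f,R_s)$ is immediate: fix any feasible pair (i.e. $f>0$, $R_s\ge1$, $fR_s\le R$); by \eqref{IDRF_IDFRF}, $D_{\mathrm{DET}}(R)=\limsup_{N\to\infty}\inf_{f',R_s'}D_N(f',R_s')\le\limsup_{N\to\infty}D_N(f,R_s)=D_{\mathrm{DET}}(f,R_s)$, the last step being \eqref{37aa}; taking the infimum over $(f,R_s)$ gives the claim, and the infimum is attained because, by Lemma~\ref{Lemm4}, $D_{\mathrm{DET}}(f,R_s)\to\infty$ as $f\to0$ while $R_s\ge1$ forces $f\le R$. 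The reverse inequality $D_{\mathrm{DET}}(R)\ge\min_{f,R_s}D_{\mathrm{DET}}(f,R_s)$ is where the work lies, and it amounts to justifying the interchange of $\limsup_N$ and $\inf_{f,R_s}$ that the paper flags after \eqref{IDRF_IDFRF}.

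\emph{Reduction of the reverse inequality.} By Lemma~\ref{Lemm2}, $D_N(f,R_s)\ge\underline D_N(f,R_s)$, so $D_{\mathrm{DET}}(R)\ge\limsup_{N\to\infty}\inf_{f,R_s}\underline D_N(f,R_s)$ and it suffices to bound the right-hand side below by $\min_{f,R_s}D_{\mathrm{DET}}(f,R_s)$. The key structural observation is a scaling identity: the change of variables $T_i\mapsto T_i/f$, $D_i\mapsto D_i/f$, $\lambda\mapsto\lambda/f^2$ leaves the rate constraint \eqref{dfrs_cons_a} and the ordering constraint \eqref{min2} invariant, so \eqref{476}--\eqref{38bb} factor as $\underline D_N(f,R_s)=\tfrac1f\,\tilde{\underline D}_N(R_s)$ for a function $\tilde{\underline D}_N$ of $R_s$ and $N$ alone; by Lemma~\ref{Lemm4} (with the same scaling), $\tilde{\underline D}_N(R_s)\to\tilde h(R_s):=\tfrac12+\tfrac1{2^{2R_s}-1}$ pointwise. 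Consequently $\inf_{f,R_s}\underline D_N(f,R_s)=\tfrac1R\inf_{R_s\ge1}R_s\tilde{\underline D}_N(R_s)$ and $\min_{f,R_s}D_{\mathrm{DET}}(f,R_s)=\tfrac1R\min_{R_s\ge1}R_s\tilde h(R_s)$, so everything reduces to the one-variable statement $\limsup_{N\to\infty}\inf_{R_s\ge1}R_s\tilde{\underline D}_N(R_s)\ge\min_{R_s\ge1}R_s\tilde h(R_s)$.

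\emph{Compactification and the key estimate.} Discarding the nonnegative $\lambda^*$-terms in \eqref{38bb} and minimizing the remaining sampling part over $(T_0,T_N)$ yields the crude bound $\tilde{\underline D}_N(R_s)\ge\tfrac{N}{2(N+1)}\ge\tfrac14$, uniformly in $N$ and $R_s$; hence $R_s\tilde{\underline D}_N(R_s)>1\ge\min_{R_s'\ge1}R_s'\tilde h(R_s')$ once $R_s>4$, and since $\tilde{\underline D}_N(1)\to\tilde h(1)=\tfrac56<1$, for all large $N$ the infimum $\inf_{R_s\ge1}R_s\tilde{\underline D}_N(R_s)$ equals $\min_{R_s\in[1,4]}R_s\tilde{\underline D}_N(R_s)$; likewise $\min_{R_s\ge1}R_s\tilde h(R_s)=\min_{R_s\in[1,4]}R_s\tilde h(R_s)$. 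It then remains to show $\tilde{\underline D}_N\to\tilde h$ \emph{uniformly} on the compact interval $[1,4]$; granting this, $\min_{R_s\in[1,4]}R_s\tilde{\underline D}_N(R_s)\to\min_{R_s\in[1,4]}R_s\tilde h(R_s)$ is a genuine limit, and this is the desired reverse inequality. \textbf{The main obstacle is this uniform convergence.} Unwinding \eqref{38bb}, it reduces to showing that the implicit multiplier $\tilde\lambda^*(R_s,N)$ defined through \eqref{lambda_exist} converges uniformly for $R_s\in[1,4]$ as $N\to\infty$ — which I would obtain from the strict monotonicity of $z(\cdot)$ in $\lambda$ together with a uniform-in-$R_s$ version of the finite-$N$ estimates already used to prove Lemma~\ref{Lemm4} — after which the residual algebra in \eqref{38bb} is a continuous function of $(\tilde\lambda^*,R_s)$ and carries over routinely. (If one can instead show $\tilde{\underline D}_N(R_s)\ge\tilde h(R_s)$ for every $N$, e.g. via a monotonicity property of the $N$-sample lower bounds, the uniform-convergence step is bypassed entirely.)
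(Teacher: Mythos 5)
Your overall architecture is sound and is, at bottom, a repackaging of the paper's argument: the easy direction via \eqref{IDRF_IDFRF} and \eqref{37aa} is fine; the scaling identity $\underline D_N(f,R_s)=\tfrac1f\,\underline D_N(1,R_s)$ (and $D_{\mathrm{DET}}(f,R_s)=\tfrac1f\tilde h(R_s)$ with $\tilde h(R_s)=\tfrac12+\tfrac1{2^{2R_s}-1}$) is genuinely correct, since the constraints \eqref{dfrs_cons_a}, \eqref{min2} and the objective are homogeneous under $T_i\mapsto T_i/f$, $D_i\mapsto D_i/f$, $\lambda\mapsto\lambda/f^2$; and your crude bound $\underline D_N(1,R_s)\ge\tfrac{N}{2(N+1)}\ge\tfrac14$, obtained by dropping the $\lambda^*$ terms in \eqref{38bb}, is a clean one-variable analogue of the paper's equicoercivity step \eqref{127}--\eqref{R_compact}. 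In effect you trade the paper's two-variable compactification ($f\in[R/(3+\sqrt3),R]$, $R_s\in[1,3+\sqrt3]$) for a scaling reduction plus compactification in $R_s$ alone, and you trade the fundamental theorem of $\Gamma$-convergence for ``uniform convergence on a compact interval''---which, on a compact set, is the same mechanism as the paper's continuous-convergence condition \eqref{limDN}. That dimension reduction is a nice touch not present in the paper.

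The gap is that the entire difficulty of the lemma---justifying the interchange of $\limsup_N$ with the infimum---is exactly the step you do not prove. Uniform convergence of $\underline D_N(1,\cdot)$ to $\tilde h$ on $[1,4]$ is only asserted (``which I would obtain from \dots a uniform-in-$R_s$ version of the finite-$N$ estimates''), and your reduction of it to ``uniform convergence of the multiplier $\tilde\lambda^*(R_s,N)$'' is too quick: inside $\underline D_N(1,R_s)$ there remains the infimum over the boundary intervals $(T_0,T_N)$, and $\lambda^*$ in \eqref{lambda_exist} depends on $(T_0,T_N)$ both through $T^*(f,N)$ in \eqref{T_seq} and through the $\log(T_0/D_N)$ term in \eqref{dfrs_cons_a}. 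Pinning down its limit is precisely what forces the paper to argue $T_0^*+T_N^*=O(\sqrt N)$ in \eqref{on}, deduce \eqref{T_i_n1}, sandwich $\lambda^*$ via \eqref{lamm}, and then pick $T_0+T_N=o(\sqrt N)$ in \eqref{final_cons} to match with the upper bound \eqref{DNupper}. To finish your proof you would have to redo that whole chain uniformly over $R_s\in[1,4]$ (or, as the paper does in Appendix F, along arbitrary convergent sequences $f_{(N)}\to f$, $R_{s(N)}\to R_s$, followed by equicoercivity and the fundamental theorem of $\Gamma$-convergence); this is plausible, since the bounds \eqref{lamm} vary continuously and monotonically in $R_s$, but it is the substantive content of the paper's proof and is missing from yours. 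Your parenthetical fallback, $\underline D_N(1,R_s)\ge\tilde h(R_s)$ for every finite $N$, is likewise not established and is not obviously true, so as written the proposal stops where the real work of Lemma~\ref{Lemm5} begins.
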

\begin{proof}
Appendix~\ref{proofLemm5}.
\end{proof}

Using Lemma~\ref{Lemm5}, we conclude that \eqref{25a_f} in Theorem~\ref{thm6} holds. It remains to minimize $D_{\mathrm{DET}}(f,R_s)$ in \eqref{25a_f} over feasible $f$ and $R_s$ to prove \eqref{25a_fb}.
\begin{subequations}\label{377}
\begin{align} \label{37a}
    D_{\mathrm{DET}}(R) &= \min_{R_s\geq 1} D_{\mathrm{DET}}\left(\frac{R}{R_s},R_s\right)\\ \label{37b}
    & = D_{\mathrm{DET}}(R,1) \\ \label{37c}
    & = \frac{1}{2R}+\frac{1}{3R} = \frac{5}{6R},
\end{align}
\end{subequations}
where \eqref{37a} holds because $D_{\mathrm{DET}}(f,R_s)$ in \eqref{underlineDET} decreases monotonically in $f$ for any given $R_s\geq 1$, and \eqref{37b} holds because $D_{\mathrm{DET}}\left(\frac{R}{R_s},R_s\right)$ increases monotonically as $R_s$ increases in the range $R_s\geq 1$.
Thus, $D_{\mathrm{DET}}(R)$ is achieved at $f=R$, $R_s = 1$. Note that $\frac{1}{2R}$ in \eqref{37c} comes from the sampling distortion and $\frac{1}{3R}$ comes from the causal IDRF for the discrete-time samples.

\subsection{Proof of Theorem~\ref{Thm5}}\label{IV.D}
From \eqref{377}, we conclude that \eqref{Det} holds. Using Lemma~\ref{Lemm4} and \eqref{25a_fb}, we conclude that the uniform sampling policy with sampling frequency $R$ achieves $D_{\mathrm{DET}}(R)$.

\section{Rate-constrained sampling with delays}\label{V}
In our communication scenario in Section \ref{IA}, the codewords are delivered from the encoder to the decoder without delay, and the distortion constraint \eqref{MSE} penalizes any delay at the encoder or the decoder. While those are realistic assumptions in some scenarios of remote tracking and control, in this section we consider how the achievable distortion-rate tradeoffs are affected if those assumptions are weakened. 
\subsection{Delay at the encoder and the decoder}\label{VA}
In the scenario of encoding the entire process for the purpose of preserving it for future, a large delay is permissible. In the extreme, the encoder may wait until the whole input process $\{W_t\}_{t=0}^T$ is observed before coding, and the decoder is allowed to wait until $T$ before estimating the process. This corresponds to the classical scenario of non-causal (block) compression.  The IDRF for this scenario is given by 
\begin{equation}
\begin{aligned}
&D_{\mathrm{noncausal}}(R)=\\
& \lim_{T\rightarrow\infty}\inf_{\substack{P_{\{\hat W_t\}_{t=0}^T|\{ W_t\}_{t=0}^T}\colon\\ \frac{1}{T}I(\{W_t\}_{t=0}^T;\{\hat W_t\}_{t=0}^T)\leq R}} \mathbb E\left[\frac{1}{T} \int_{0}^T(W_t-\hat W_t)^2 dt\right].
 \end{aligned}
\end{equation}
Berger \cite{26} derived the distortion-rate function for the Wiener process using reverse water-filling over the power spectrum of the process,
\begin{equation} \label{Non}
    D_{\mathrm{noncausal}}(R) = \frac{2\log_2 e}{\pi^2R}\quad \text{bits/s}.
\end{equation}
The ODRF continues to be lower-bounded by the IDRF in this non-causal scenario, $D^{\mathrm{op}}_{\mathrm{noncausal}}~(R)\geq D_{\mathrm{noncausal}}~(R)$ (cf. \eqref{oplarger}). As for the achievability, Berger showed that \eqref{Non} can be achieved in the following sense: given a rate $R\geq 0$, and $\epsilon>0$, there exists a code with rate $R+\epsilon$ that achieves the distortion $D_{\mathrm{noncausal}}(R)+\epsilon$. Berger's coding scheme operates as follows \cite{26}: the Wiener process is divided into successive time intervals of a large enough length $T$ seconds. For each interval, the Karhunen-Lo\`{e}ve (KL) coefficients of the process are calculated, and at most $2^{T(R+\epsilon)}$ codewords are used to jointly encode these coefficients with a resulting MSE per second equal to $D_{\mathrm{noncausal}}(R)+\epsilon$. In parallel with the KL expansion coefficients encoding scheme, an integrating delta modulator is employed to encode each endpoint of the length-$T$ intervals with MSE per second $\epsilon$ using $\epsilon$ bits per second.

Comparing $D_{\mathrm{noncausal}}(R)$ in \eqref{Non} with $D^{\mathrm{op}}(R)$ in \eqref{58}, we see that, surprisingly, the optimal zero-delay policy outperforms the best infinite delay one:
\begin{equation}
    \frac{D^{\mathrm{op}}(R)}{D_{\mathrm{noncausal}}(R)} \approx 0.57.
\end{equation}
This is because in zero-delay causal coding, the timing information is free. Indeed, the decoder knows the codeword-generating time stamps that are stopping times of the filtration generated by the Wiener process. In classical noncausal (block) lossy compression, no encoder and decoder synchronization is assumed, and thus the encoder is tasked with encoding both the values of the Wiener process and the time stamps corresponding to these values. In many operational scenarios of remote tracking and control, the encoder and decoder are naturally synchronized, providing free timing information. Since Berger's distortion-rate function in \eqref{Non} does not take that into account, it cannot adequately characterize the fundamental information-theoretic limits in those scenarios.

\subsection{Delay at the decoder}\label{V.D.}
In the scenario of causal coding where some small delay is tolerated but the data is not recorded for storage, e.g. speech communication, one can leverage both the free timing information and the coding delay to improve distortion-rate tradeoffs. A \emph{one sample look-ahead decoder} waits for the next codeword $U_{\tau_{i+1}}$ before estimating $W_t$, $\tau_i\leq t<\tau_{i+1}$, introducing a maximum average delay of $\mathbb{E}(\tau_{i+1}-\tau_i)=\frac{1}{R}$ at the decoder. As we are about to see, this one sample look-ahead decoder greatly reduces the MSE compared to the ODRF obtained in \eqref{58} under causal estimation.

With the encoding policy in Proposition~\ref{prop2}, the decoder is permitted to estimate $W_t$ at time $t'$, $t\leq t'\leq T$ using not only the codewords received before time $t$, but also the extra codewords received during the time $[t,t']$. In the extreme, $t'=T$, the decoder can jointly use all the codewords and codeword-generating time stamps in time horizon $[0,T]$ to recover the Wiener process. Using Wolf and Ziv's decomposition of MSE in \cite{9}, the ODRF with decoder delay can be decomposed as 
\begin{equation}\label{43}
\begin{aligned}
&D^{\mathrm{op}}_{\mathrm{dec~delay}}(R) =\limsup_{T\rightarrow\infty} \inf_{\substack{\pi_T\in \Pi_T\\ f_T\in F_T\colon\\\eqref{comm_cons} }}\frac{1}{T}\mathbb E\bigg[\sum_{i=0}^{N} \int_{\tau_i}^{\tau_{i+1}}(W_t-\bar W_t)^2\\
&+\left(\bar W_t - \hat W_t\right)^2 dt\biggr],  
\end{aligned}
\end{equation}
where $\bar W_t$ is the MMSE estimator of the process at the encoder using the samples and the times that they were taken: for $t\in [\tau_i,\tau_{i+1})$,
\begin{equation}\label{W_t_bar}
    \bar W_t\triangleq\mathbb E[W_t|\{W_{\tau_j}\}_{j=1}^N,\tau^N] = \mathbb E[W_t|W_{\tau_i},W_{\tau_{i+1}},\tau_i,\tau_{i+1}],
\end{equation}
where \eqref{W_t_bar} holds because $W_t - (W_{\tau_i},W_{\tau_{i+1}}\tau_i,\tau_{i+1})-(\{W_{\tau_j}\}_{j=1}^{i-1},\{W_{\tau_j}\}_{j=i+1}^{N},\{\tau_{j}\}_{j=1}^{i-1},\{\tau_{j}\}_{j=i+1}^{N})$ form a Markov chain in that order. Therefore, given all the noiseless samples, $\bar W_t$ only depends on the previous sample and the next sample. In particular, when the samples are taken under a deterministic sampling policy, $(W_{\tau_i},W_t,W_{\tau_{i+1}})$ is a Gaussian random vector, thus $\bar W_t$ in \eqref{W_t_bar} is the linear interpolation between $W_{\tau_i}$ and $W_{\tau_{i+1}}$. $\hat W_t$ is the MMSE estimator of the process at the decoder using all the received information,
\begin{subequations}\label{one_sp}
\begin{align}
    \hat W_t =& \mathbb E[W_t|U^N,\tau^N]\\
             =& \mathbb E[\mathbb E[W_t|\{W_{\tau_j}\}_{j=1}^N,U^N,\tau^N]|U^N,\tau^N]\\ \label{one_sp_c}
             =& \mathbb E[\bar W_t|U^N,\tau^N],
\end{align}
\end{subequations}
where \eqref{one_sp_c} holds due to the Markov chain $W_t-(\{W_{\tau_j}\}_{j=1}^N,\tau^N)-U^N$ and \eqref{W_t_bar}.
Since the one sample look-ahead decoder only waits until the next codeword $U_{i+1}$ is received at $\tau_{i+1}$, $\hat W_t$ is specified to $\mathbb E[\bar W_t|U^{i+1},\tau^{i+1}]$ for $t\in[\tau_i,\tau_{i+1})$.

We append the one sample look-ahead decoder to the optimal encoding policy in Theorem~\ref{Thm4} and calculate the resulting MSE. Under symmetric threshold sampling policies, the samples are not necessarily Gaussian, and the linear interpolation can be suboptimal. Yet, if in \eqref{43} we substitute for $\bar W_t$ a suboptimal estimate $\frac{W_{\tau_{i+1}}+W_{\tau_i}}{2}$, then the resulting the MSE is equal to $\frac{1}{12R}$, a two-fold improvement over \eqref{58}. We append the one sample look-ahead decoder to the uniform sampling policy in Theorem~\ref{Thm5}, and ignore the potential reduction in quantization distortion brought by the decoder's ability to look ahead by one sample. The resulting sampling distortion is $\frac{1}{T}\mathbb E\biggl[\sum_{i=0}^{N} \int_{\tau_i}^{\tau_{i+1}}(W_t-\bar W_t)^2\biggr] = \frac{1}{6R}$, a $3$-fold improvement over the sampling distortion $\frac{1}{2R}$ \eqref{37c} causally attainable with a uniform sampling policy. Thus, the total MSE is at most $\frac{1}{2R}$, a 1.67-fold improvement over \eqref{Det}.

\subsection{Delay at the channel}\label{channel_section}
Consider the communication scenario in Fig.~\ref{Fig1} with a fixed channel delay between the codeword-generating time stamp and the codeword-delivery time stamp. We show that the optimal coding policy remains the SOI code in Theorem~\ref{Thm4}. Denote the channel delay by $\delta\geq 0$. If the sampling time is $\tau_i$, the delivery time is $\tau_i+\delta$. The encoder and the decoder are clock-synchronized. The decoder knows the delivery time and the fixed delay, thus it knows the sampling time since the channel delay is fixed. The distortion is however measured in real time as in Section~\ref{IA} \eqref{MSE} rather than after a delay as in Sections~\ref{VA}~and~\ref{V.D.}.

The MSE that we aim to minimize under the rate constraint \eqref{comm_cons} is given by
\begin{equation}\label{D_new_channel}
    D_{\mathrm{ch}}(R)=\limsup_{T\rightarrow\infty}\inf_{\substack{\pi_T\in \Pi_T\\ f_T\in F_T\colon\\\eqref{comm_cons} }}\frac{1}{T}\mathbb{E}\left[\sum_{i=0}^{N}\int_{\tau_i+\delta}^{\tau_{i+1}+\delta}(W_t-\hat W_t)^2 dt\right],
\end{equation}
where, similar to \cite{4} and \cite{6}, we use the following MMSE decoding policy,
\begin{align}\label{sub_dec_policy}
    \hat W_t\triangleq\mathbb E[W_t|U^i,\tau^i],~t\in[\tau_i+\delta,\tau_{i+1}+\delta).
\end{align}
Unlike Theorem~\ref{Thm4} where we proved that conditioning on the event $t < \tau_{i+1}$ in the decoding policy \eqref{opt_dec} can be ignored to yield \eqref{sim_opt_dec} without loss of optimality, here we do not delve into the issue of whether ignoring the known event $t < \tau_{i+1} + \delta$ in the conditional expectation \eqref{sub_dec_policy} is optimal.
\begin{proposition}\label{prop3}
In causal coding of the Wiener process with a fixed channel delay and decoding policy \eqref{sub_dec_policy}, the optimal sampling and compressing policy remains the SOI coding scheme in Theorem~\ref{Thm4}, and 
\begin{equation}\label{Dch_57}
    D_{\mathrm{ch}}(R)=\frac{1}{6R}+\delta.
\end{equation}
\end{proposition}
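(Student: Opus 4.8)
The plan is to adapt the converse-plus-achievability argument from the proof of Theorem~\ref{Thm4}, using the substitution $s=t-\delta$ to pull the channel delay inside the integrand, where it splits off as an additive constant $\delta$.

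\emph{Converse.} Fix any encoding policy together with the decoding policy \eqref{sub_dec_policy}. For $t\in[\tau_i+\delta,\tau_{i+1}+\delta)$ the estimate $\hat W_t=\mathbb E[W_t|U^i,\tau^i]$ is measurable with respect to $\sigma(U^i,\tau^i)\subseteq\mathcal F_{\tau_i}$, since $U^i$ is a function of the process stopped at $\tau_i$ and the $\tau_j$, $j\le i$, are stopping times not exceeding $\tau_i$. Hence $\mathbb E[(W_t-\hat W_t)^2]\ge\mathbb E[(W_t-\mathbb E[W_t|\mathcal F_{\tau_i}])^2]=\mathbb E[(W_t-W_{\tau_i})^2]$ for $t\ge\tau_i$, the last equality being the strong Markov (martingale) property already invoked to get \eqref{thm1_dec_d}. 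Combined with $\mathbb E[N]\le\mathbb E[\sum_i\ell_i]$ and the substitution $s=t-\delta$, this lower-bounds $D_{\mathrm{ch}}(R)$ by $\limsup_{T\rightarrow\infty}\inf_{\pi_T}\frac1T\mathbb E\big[\sum_{i=0}^N\int_{\tau_i}^{\tau_{i+1}}(W_{s+\delta}-W_{\tau_i})^2\,ds\big]$, the infimum over sampling policies with $\frac1T\mathbb E[N]\le R$.

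Next I would write $W_{s+\delta}-W_{\tau_i}=(W_s-W_{\tau_i})+(W_{s+\delta}-W_s)$ and expand the square. The indicator $\mathbf 1\{\tau_i\le s<\tau_{i+1}\}$ and $W_s-W_{\tau_i}$ are $\mathcal F_s$-measurable, while the increment $W_{s+\delta}-W_s$ is independent of $\mathcal F_s$ with mean zero and variance $\delta$; so by Tonelli/Fubini the cross term contributes nothing and $\mathbb E\big[\int_{\tau_i}^{\tau_{i+1}}(W_{s+\delta}-W_s)^2\,ds\big]=\delta\,\mathbb E[\tau_{i+1}-\tau_i]$, which telescopes over $i$ to $\delta\,\mathbb E[\tau_{N+1}-\tau_0]=\delta T$. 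What is left, $\limsup_{T\rightarrow\infty}\inf_{\pi_T}\frac1T\mathbb E[\sum_i\int_{\tau_i}^{\tau_{i+1}}(W_s-W_{\tau_i})^2\,ds]$, is precisely the minimal sampling distortion, equal to $\frac1{6R}$ and attained by the threshold policy \eqref{opt_sym}, as shown in the proof of Theorem~\ref{Thm4}. Thus $D_{\mathrm{ch}}(R)\ge\frac1{6R}+\delta$.

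\emph{Achievability and the main obstacle.} Under the SOI scheme the decoder recovers $W_{\tau_i}$ exactly from $(U^i,\tau^i)$, so by the tower property and $\mathbb E[W_t-W_{\tau_i}|\mathcal F_{\tau_i}]=0$ we get $\hat W_t=W_{\tau_i}$ on $[\tau_i+\delta,\tau_{i+1}+\delta)$; running the same computation with the threshold policy \eqref{opt_sym} gives $D_{\mathrm{ch}}^{\mathrm{SOI}}(R)=\frac1{6R}+\delta$, matching the converse and establishing both the optimality of the SOI code and the value \eqref{Dch_57}. The only genuinely technical step is the manipulation of the two random integration limits $\tau_i,\tau_{i+1}$: I expect the work to be in justifying the vanishing of the cross term and the identity $\mathbb E[\int_{\tau_i}^{\tau_{i+1}}(W_{s+\delta}-W_s)^2\,ds]=\delta\,\mathbb E[\tau_{i+1}-\tau_i]$ from $\mathcal F_s$-measurability of $\{\tau_i\le s<\tau_{i+1}\}$, independence of increments, and the finiteness assumption \eqref{si}; the telescoping and the reduction to the Theorem~\ref{Thm4} sampling distortion are then immediate.
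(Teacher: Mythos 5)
Your proposal is correct and follows essentially the same route as the paper's proof: lower-bound the distortion by replacing the decoder \eqref{sub_dec_policy} with the full-information estimate $W_{\tau_i}$ and the rate constraint by $\mathbb E[N]/T\leq R$, reduce the remaining term to the delay-free sampling distortion of Theorem~\ref{Thm4} plus an additive $\delta$, and match it with the SOI scheme. The only difference is bookkeeping: you extract the $\delta$ by the substitution $s=t-\delta$ and orthogonality of the independent increment $W_{s+\delta}-W_s$ to $\mathcal F_s$, whereas the paper splits the shifted integration limits $[\tau_i+\delta,\tau_{i+1}+\delta]$ into pieces whose expectations ($\delta^2/2$ terms cancelling, $\delta(\tau_{i+1}-\tau_i)$ telescoping to $\delta T$) give the same result.
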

\begin{proof}
Appendix~\ref{G}.
\end{proof}

The optimal sampling policy in the fixed-delay scenario coincides with the optimal sampling policy in the delay-free scenario. This differs from the result of \cite{6}, according to which the optimal causal sampling policy for the Wiener process through a channel with an i.i.d. delay $Y_i$ is a symmetric threshold sampling policy,
\begin{align}\label{6_samp_opt}
    \tau_{i+1}=\inf\{t+\tau_i+Y_i:|W_{t+\tau_i+Y_i}-W_{\tau_i}|\geq \beta\},
\end{align}
where $\beta$ is a threshold that depends on the distribution of $Y_i$ and the sampling frequency constraint.
The setting in \cite{6} is different from our setting in this Section~\ref{channel_section}, since the channel is only allowed to serve one sample at a time in a first-in-first-out (FIFO) principle. Because recent samples must wait in a queue before the previous sample is delivered, the optimal encoder in \cite{6} takes a new sample after the previous sample is delivered, whereas in our setting, the encoder may take a new sample after or before the previous sample is delivered. This results in the  policy in \cite{10} attaining a larger MSE in the constant-delay scenario of Proposition~\ref{prop3} than indicated in \eqref{Dch_57}. We also notice that with the random delay, we cannot simply append an SOI compressor to the optimal symmetric threshold policy \eqref{6_samp_opt} to obtain the optimal rate-constrained code. Indeed, the innovation $W_{\tau_{i+1}+Y_i}-W_{\tau_i}$ may not be a binary random variable for all $i=0,1,\dots$ since waiting for the delivery of the previous sample may cause the thresholds not to be hit with equality at the time $\tau_i$.

\section{Rate-constrained event-triggered control}\label{Control_sec}
The SOI coding scheme proposed in Theorem~\ref{Thm4} can also be applied to the following rate-constrained event-triggered control scenario.
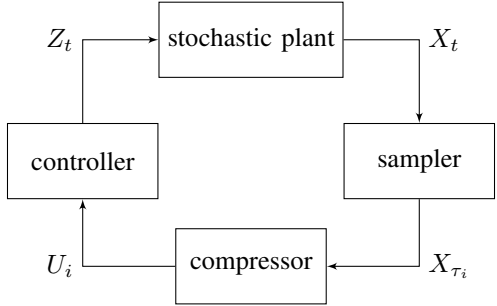
\begin{figure}[h!]
\begin{tikzpicture}[>=latex']
        \tikzset{block/.style= {draw, rectangle, align=center,minimum width=2cm,minimum height=1cm}
        }
        \node [block]  (plant) {stochastic plant};
        \node [block, below right =0.6cm and 0cm of plant] (sampler) {sampler};
        \node [block, below =2cm of plant] (compre) {compressor};
        \node [block, below left =0.6cm and 0cm of plant] (cont) {controller};
        \path[draw,->](plant) -| (sampler) node[midway,right]{$X_t$};
    \path[draw,->](sampler) |- (compre) node[midway,right]{$X_{\tau_i}$};
    \path[draw,->](compre) -|(cont)node[midway,left]{$U_i$};
    \path[draw,->](cont) |-(plant)node[midway,left]{$Z_t$};
    \end{tikzpicture}
    \centering
    \caption{Control system.}
    \label{Fig_cs}
\end{figure}

The stochastic plant evolves according to
\begin{equation}\label{plant}
    dX_t=Z_tdt+dW_t,
\end{equation}
where $W_t$ is the standard Wiener process and $Z_t$ is the control signal generated by the controller. A sampler samples the plant $X_t$ at a sequence of non-decreasing stopping times $\tau_1,\tau_2,\dots$ adapted to the filtration generated by $\{X_t\}_{t=0}^T$. Note that the sampler does not need to know the control signal. At time $\tau_i$, the sampler outputs $X_{\tau_i}$, and the compressor generates codewords $U_i$ based on causally received samples. At time $\tau_i$, the controller uses received codewords $U^i$ to form an impulse control signal $Z_{\tau_i}$. The communication between the compressor and the controller is subject to bits per sec constraint \eqref{comm_cons}. We aim to find the optimal sampling and compressing policies such that the mean-square cost of $X_t$ from target state $0$ is minimized
\begin{equation}\label{plant_MSE}
    \limsup_{T\rightarrow\infty}\frac{1}{T}\mathbb E\left[\int_{0}^TX_t^2 dt\right].
\end{equation}
We restrict our control signal to be the impulse control as in  \cite{1},\cite{Ben}. The impulse control only takes action at the stopping times \eqref{timestamp} decided by the sampling policy, i.e. $Z_t\neq 0$ if and only if $t=\tau_1,\tau_2,\dots$. The impulse control leads to
\begin{equation}\label{imp_ctl}
    X_{\tau_i^+}=X_{\tau_i}+Z_{\tau_i},~i=0,1,\dots
\end{equation}
where $\tau_i^+$ represents the time just after $\tau_i$ \cite{Soh}.
From \eqref{plant} and \eqref{imp_ctl}, we conclude that
\begin{align}\label{X_t_ctl}
    X_t=X_{\tau_i^+}+W_t-W_{\tau_i},~t\in(\tau_i,\tau_{i+1}].
\end{align}
\begin{theorem}\label{thm_contr}
In the rate-constrained event-triggered control system, the jointly optimal sampling and compressing policy that minimizes  \eqref{plant_MSE} is the SOI coding scheme in Theorem~\ref{Thm4}, and the optimal impulse control signal is
\begin{align}\label{opt_ctl}
    Z_{\tau_i}=-(W_{\tau_i}-W_{\tau_{i-1}}), i=0,1,\dots
\end{align}
The minimum mean-square cost \eqref{plant_MSE} is equal to $\frac{1}{6R}$.
\end{theorem}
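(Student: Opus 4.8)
The plan is to reduce the rate-constrained control problem to the causal estimation problem of Theorem~\ref{Thm4}. The bridge is the identity, valid on each interval $(\tau_i,\tau_{i+1}]$, obtained from \eqref{X_t_ctl} by writing $X_{\tau_i^+}=W_{\tau_i}-\hat W_{\tau_i}$ with
\begin{equation}\label{WhatControl}
    \hat W_{\tau_i}\triangleq W_{\tau_i}-X_{\tau_i^+}=-\sum_{j\le i}Z_{\tau_j},
\end{equation}
namely $X_t=W_t-\hat W_{\tau_i}$ for $t\in(\tau_i,\tau_{i+1}]$; consequently the cost \eqref{plant_MSE} equals
\begin{equation}\label{controlAsEstimation}
    \limsup_{T\to\infty}\frac{1}{T}\,\mathbb E\!\left[\sum_{i=0}^{N}\int_{\tau_i}^{\tau_{i+1}}\bigl(W_t-\hat W_{\tau_i}\bigr)^2\,dt\right],
\end{equation}
which is precisely the estimation distortion \eqref{MSE}. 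Two structural observations make the reduction tight. First, because the impulse control produces jumps of $X$ only at the $\tau_j$'s, the jump sizes $Z_{\tau_j}$ are read off the observed path, so by induction over $j$ the sampler's observation filtration $\sigma(\{X_s\}_{s\le t})$ coincides with the Wiener filtration $\sigma(\{W_s\}_{s\le t})$; hence the $\tau_i$ are stopping times of the Wiener filtration, $\sigma(U^i)\subseteq\sigma(\{W_s\}_{s\le\tau_i})$, and the setup of Definition~\ref{DEF2} applies. Second, \eqref{WhatControl} shows $\hat W_{\tau_i}$ is $\sigma(U^i,\tau^i)$-measurable and constant on the interval, i.e.\ it is a special case of the decoding rule \eqref{opt_dec}.

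I would prove the converse by plugging this into the lower bound of Theorem~\ref{Thm4}. Since $\hat W_{\tau_i}$ is $\sigma(U^i,\tau^i)$-measurable, replacing it in \eqref{controlAsEstimation} by the conditional-mean estimator \eqref{opt_dec} can only decrease the objective, and then infimizing over sampling and compressing policies shows \eqref{controlAsEstimation}$\,\ge\,$\eqref{thm1_dec_b}. I would then re-run the chain \eqref{thm1_dec_b}--\eqref{thm1_dec_d} verbatim: relax the bitrate to the sampling-frequency constraint $\tfrac1T\mathbb E[N]\le R$ via \eqref{NL}, then use \cite[Cor.~1.1]{Nian2} and the strong Markov property to reach the minimum sampling distortion at frequency $R$, which is $\tfrac{1}{6R}$ by \cite[Eq.~(20)]{5} (cf.\ \eqref{58}). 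Hence \eqref{plant_MSE}$\,\ge\tfrac{1}{6R}$ for every admissible scheme.

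For achievability I would run exactly the SOI policy of Theorem~\ref{Thm4} in state coordinates: the sampler samples whenever $|X_t|$ first reaches $\sqrt{1/R}$ (a $\sigma(\{X_s\})$-stopping time), so $X_{\tau_i}\in\{\pm\sqrt{1/R}\}$; the compressor sends the $1$-bit sign $U_i$ of $X_{\tau_i}$ without delay; and the controller resets the state via $Z_{\tau_i}=-X_{\tau_i}$, which it can do because the binary codeword conveys $X_{\tau_i}=W_{\tau_i}-W_{\tau_{i-1}}$ exactly. By induction (starting from $X_0=0$ with no impulse at time $0$) this keeps $X_{\tau_i^+}=0$, so $X_t=W_t-W_{\tau_i}$ on $(\tau_i,\tau_{i+1}]$ and, summing over $i$, the cost \eqref{plant_MSE} collapses to the sampling distortion $\limsup_{T\to\infty}\tfrac1T\mathbb E[\sum_{i=0}^N\int_{\tau_i}^{\tau_{i+1}}(W_t-W_{\tau_i})^2dt]=\tfrac{1}{6R}$, matching the converse; the rate is $1$ bit/sample with $\tfrac1T\mathbb E[N]\to R$ by the elementary renewal theorem and $\mathbb E[\tau_{i+1}-\tau_i]=1/R$. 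To see that $Z_{\tau_i}=-X_{\tau_i}$ is the \emph{optimal} impulse given the SOI sampler, I would condition on $\mathcal F_{\tau_i}$ and use the $\pm$-symmetry of the threshold rule to kill the cross term, so that
\begin{equation}\label{perIntervalDecomp}
    \mathbb E\!\left[\int_{\tau_i}^{\tau_{i+1}}\!X_t^2\,dt\right]=\mathbb E\!\left[X_{\tau_i^+}^2(\tau_{i+1}-\tau_i)\right]+\mathbb E\!\left[\int_{\tau_i}^{\tau_{i+1}}\!(W_t-W_{\tau_i})^2\,dt\right],
\end{equation}
whose first term, and hence the total cost, is minimized precisely when $X_{\tau_i^+}=0$, which the controller can realize since it knows $X_{\tau_i}$.

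The step I expect to be the main obstacle is the closed-loop measurability argument: one must verify carefully, by induction over the sampling epochs $\tau_1,\tau_2,\dots$, that the sampler's $X$-filtration really does coincide with the Wiener filtration despite the feedback through the control $Z$, so that the converse machinery of Theorem~\ref{Thm4} transfers without modification; and one must re-examine that machinery---whose steps \eqref{thm1_dec_b}--\eqref{thm1_dec_d} involve integrals indexed by random stopping times---to confirm it still applies to the particular frozen estimator \eqref{WhatControl} (the regularity needed is the one already supplied by \cite{Nian2}). The remaining pieces---the decomposition \eqref{perIntervalDecomp}, checking that the cross term vanishes, and the renewal-theoretic rate computation---are routine.
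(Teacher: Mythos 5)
Your main argument is correct and reaches the same conclusion as the paper, but by a somewhat different converse route. The paper proves the lower bound by substituting \eqref{X_t_ctl} into \eqref{plant_MSE}, expanding the square, arguing that the cross term vanishes (independence of $X_{\tau_i^+}$ from $\int_{\tau_i}^{\tau_{i+1}}(W_t-W_{\tau_i})dt$ together with the zero mean of the latter), and dropping the nonnegative $X_{\tau_i^+}^2$ term to be left with the frequency-constrained sampling distortion. You instead observe that $X_t=W_t-\hat W_{\tau_i}$ with $\hat W_{\tau_i}=-\sum_{j\le i}Z_{\tau_j}$ a $\sigma(U^i,\tau^i)$-measurable, piecewise-constant estimate, so the control cost is literally the distortion of a (generally suboptimal) decoder; replacing it by the MMSE decoder \eqref{opt_dec} and re-running \eqref{thm1_dec_b}--\eqref{thm1_dec_d} gives cost $\ge\frac{1}{6R}$ for every admissible policy. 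This buys something: it sidesteps the cross-term/independence claim for \emph{arbitrary} sampling policies, and your explicit induction showing that the closed-loop $X$-filtration equals the Wiener filtration (so the $\tau_i$ are Wiener stopping times and $\sigma(U^i)\subset\sigma(\{W_s\}_{s\le\tau_i})$) is a point the paper glosses over. Your achievability (SOI threshold on $|X_t|$, 1-bit sign, reset control, $X_{\tau_i^+}=0$, cost collapsing to the sampling distortion $\frac{1}{6R}$) coincides with the paper's, and converse plus achievability already certify all three claims of the theorem, including the optimality of \eqref{opt_ctl}.

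One step you added is, however, false as stated: the per-interval decomposition you invoke ``to see that $Z_{\tau_i}=-X_{\tau_i}$ is optimal given the SOI sampler.'' With the threshold placed on $|X_t|$, conditioning on $X_{\tau_i^+}=x\neq 0$ makes the exit thresholds for the Wiener increment $B_t=W_{\tau_i+t}-W_{\tau_i}$ asymmetric, namely $[-a,b]$ with $a=\sqrt{1/R}+x$, $b=\sqrt{1/R}-x$, and then, using the martingale $B_t^3-3\int_0^tB_s\,ds$,
\begin{equation}
\mathbb E\Bigl[\int_0^{\tau}B_s\,ds\Bigr]=\tfrac{1}{3}\mathbb E\bigl[B_{\tau}^3\bigr]=\tfrac{1}{3}ab(b-a)=-\tfrac{2x}{3}\bigl(\tfrac{1}{R}-x^2\bigr)\neq 0,
\end{equation}
so the cross term equals $-\tfrac{4x^2}{3}(\tfrac{1}{R}-x^2)$ rather than zero; the claimed identity holds only when $x=0$. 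The qualitative conclusion (long-run cost is minimized by $x=0$) is still true, e.g.\ by a renewal--reward computation, and in any case this side argument is dispensable since your converse already bounds \emph{all} schemes by $\frac{1}{6R}$; but as written that display should be deleted or repaired.
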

\begin{proof}
We calculate a lower bound to the MSE in \eqref{plant_MSE} and show that this lower bound can be achieved by the SOI coding scheme. The MSE in \eqref{plant_MSE} is equal to
\begin{subequations}
\begin{align}\label{plant_MSE_a}
    &\limsup_{T\rightarrow\infty}\frac{1}{T}\mathbb E\left[\sum_{i=0}^{N}\int_{\tau_i}^{\tau_{i+1}}(X_{\tau_i^+}+W_t-W_{\tau_i})^2dt\right]\\ \label{plant_MSE_b}
    =&\limsup_{T\rightarrow\infty}\frac{1}{T}\mathbb E\left[\sum_{i=0}^{N}\int_{\tau_i}^{\tau_{i+1}}X_{\tau_i^+}^2dt\right]\\ \label{plant_MSE_c}
    +&\limsup_{T\rightarrow\infty}\frac{1}{T}\mathbb E\left[\sum_{i=0}^{N}\int_{\tau_i}^{\tau_{i+1}}(W_t-W_{\tau_i})^2dt\right]\\ \label{plant_MSE_d}
    +&2\limsup_{T\rightarrow\infty}\mathbb E\left[\sum_{i=0}^NX_{\tau_i^+}\int_{\tau_i}^{\tau_{i+1}}(W_t-W_{\tau_i})dt\right]\\ \label{plant_MSE_e}
    \geq & \limsup_{T\rightarrow\infty}\frac{1}{T}\mathbb E\left[\sum_{i=0}^{N}\int_{\tau_i}^{\tau_{i+1}}(W_t-W_{\tau_i})^2dt\right],
\end{align}
\end{subequations}
where \eqref{plant_MSE_a} is obtained by substituting \eqref{X_t_ctl} into \eqref{plant_MSE}; \eqref{plant_MSE_d} is equal to zero since $X_{\tau_i^+}$ is independent of $\int_{\tau_i}^{\tau_{i+1}}(W_t-W_{\tau_i})dt$ and $\mathbb E\left[\int_{\tau_i}^{\tau_{i+1}}(W_t-W_{\tau_i})dt\right]=0$ for all $i=0,1,\dots$ by the reflection principle of the Wiener process \cite[Chap.3, Thm. A44]{MC}. To achieve the lower bound in \eqref{plant_MSE_e}, we need $X_{\tau_i^+}=0$ for all $i=0,1,\dots$, thus, the optimal impulse control signal is
\begin{equation}
    Z_{\tau_i}=-X_{\tau_i},~i=0,1,\dots,
\end{equation}
which is equal to \eqref{opt_ctl}. Using the arguments in the paragraph below \eqref{NL} in the proof of Theorem~\ref{Thm4}, we can easily verify that the SOI coding scheme achieves the lower bound and is therefore a jointly optimal sampling and compressing policy.
\end{proof}

In contrast, if the sampler samples $\{X_t\}_{t=0}^T$ with a uniform sampling policy that satisfies the rate constraint \eqref{comm_cons} together with some succeeded compressing policy, then we show that the minimum achievable mean-square cost, denoted by $d_{\mathrm{uniform}}$, is lower bounded by
\begin{equation}\label{d_uni_l}
   d_{\mathrm{uniform}}> \frac{1}{2R}.
\end{equation}

To show \eqref{d_uni_l}, we first notice that \eqref{plant_MSE_a}--\eqref{plant_MSE_e} holds under a uniform sampling policy, with the stopping times $\tau_1,\tau_2,\dots$ and random number of samples $N$ replaced by deterministic times $t_1,t_2,\dots$, and the fixed number of samples $n$. We minimize the lower bound \eqref{plant_MSE_e} over all uniform sampling policies that satisfy the following constraint
\begin{equation}
    \frac{n}{T}\leq R,
\end{equation}
and obtain that the minimum of \eqref{plant_MSE_e} is equal to $\frac{1}{2R}$. This lower bound corresponds to the scenario that we ignore the quantization effect due to the compressing policy. The lower bound is not achievable since the controller cannot output the ideally optimal impulse control signal \eqref{opt_ctl} to make $X_{t_i^+}$ zero for all $i=0,1,\dots$ This is because the sample innovation $W_{t_i}-W_{t_{i-1}}$, $i=1,2,\dots$ is a Gaussian random variable that cannot be noiselessly compressed using $1$ bit.

\section{Conclusion}\label{conclusion}
The results in this paper contribute to the rich literature
on optimal scheduling and causal sequential estimation problems
by introducing a transmission rate constraint beyond
the popular sampling frequency constraint. The SOI coding
scheme is optimal for causal estimation of the Wiener
process under an expected rate constraint (Theorem~\ref{Thm4}).
The performance of the SOI coding scheme is much better
than that of the best non-causal code (Section~\ref{VA}). This
underscores the power of free information contained in the
codeword arrival times that is not considered in the standard
setting of non-causal (block) compression. The SOI scheme
 remains optimal even if the channel introduces a fixed delay (Proposition \ref{prop3}).  The key to transmit information via timing is to use process-dependent,
rather than deterministic, sampling time stamps, because the
latter contain zero information. The optimal deterministic
sampling policy is uniform (Theorem~\ref{Thm5}). In either setting,
the best strategy is to transmit lowest possible rate (1-bit
codewords) as frequently as possible (Theorem~\ref{thm6}). This is
a consequence of the real-time distortion constraint \eqref{MSE}. If
a delay is affordable, the MSE can be further reduced with
only one sample look-ahead at the decoder (Section~\ref{V.D.}). The  SOI coding scheme also minimizes the mean-square cost of a stochastic plant driven by the Wiener process, and controlled via impulse control (Theorem~\ref{thm_contr}).

\appendix

\subsection{Proof of Proposition~\ref{prop2}}\label{A}
The objective function in \eqref{1} decomposes in the following way.
\begin{subequations}
\begin{align} \label{52b}
    &\frac{1}{T}  \mathbb E \biggl[\sum_{i=0}^{N} \int_{\tau_i}^{\tau_{i+1}} (W_t - \hat{W}_{\tau_i})^2 dt\biggr]\\ \label{52c}
    =&\frac{1}{T} \mathbb E \biggl[\sum_{i=0}^{N} \int_{\tau_i}^{\tau_{i+1}} (W_t - W_{\tau_i})^2\biggr]+\\ \nonumber &\frac{1}{T} \mathbb E \biggl[\sum_{i=0}^{N}(\tau_{i+1}-\tau_i)(W_{\tau_i}-\hat{W}_{\tau_i})^2 dt\biggr]+\\ \nonumber
    &\frac{1}{T} \mathbb E \biggl[\sum_{i=0}^{N}(W_{\tau_i}-\hat{W}_{\tau_i}) \int_{\tau_i}^{\tau_{i+1}} (W_t - W_{\tau_i}) dt\biggr]\\ \label{10c}
    =&\frac{1}{T} \mathbb E \bigg[\sum_{i=0}^{N} \int_{\tau_i}^{\tau_{i+1}} (W_t - W_{\tau_i})^2\biggr)+\\ \nonumber &\frac{1}{T} \mathbb E \biggl[\sum_{i=0}^{N}(\tau_{i+1}-\tau_i)(W_{\tau_i}-\hat{W}_{\tau_i})^2 dt\biggr],
\end{align}
\end{subequations}
where \eqref{52b} uses the simplified decoding policy \eqref{sim_opt_dec} that is justified in the proof of Theorem~\ref{Thm4}; \eqref{52c} is obtained by substituting $W_t - W_{\tau_i}+W_{\tau_i}- \hat{W}_{\tau_i}$ for the term $W_t - \hat{W}_{\tau_i}$ in \eqref{52b}, and \eqref{10c} holds due to the fact that $\int_{\tau_i}^{\tau_{i+1}} (W_t-W_{\tau_i}) dt$ is orthogonal to $W_{\tau_i}- \hat{W}_{\tau_i}$ for all $i=0,1,2,\dots,N$. Since the encoder only influences the second term in \eqref{10c}, we move the minimization over the encoder $f_T$ in \eqref{1} directly in front of the second term in \eqref{10c}.

To show that $\mathsf f _i$ only encodes $W_{\tau_i}-\hat W_{\tau_{i-1}}$ given $U^{i-1}$ and $\tau^i$, we first recall a well-known fact. Consider the following lossy source coding model in Fig.~\ref{X-Y-Z}, where $X\in\mathcal{X}$ and $Y\in\mathcal{Y}$ are available only at the encoder, $C$ is the common information, $\hat{X}\in \hat{\mathcal X}$ is the reproduction. Encoder $P_{U| X, Y, C}$ and decoder $P_{\hat X| U, C}$ aim to achieve a given distortion $d=\mathbb E\left[\mathsf d(X, \hat X)\right]$, where $\mathsf {d} \colon \mathcal X \times \hat {\mathcal X} \rightarrow \mathbb R^+$ is the distortion measure, subject to a constraint on the cardinality of the alphabet $\mathcal U$ of $U$.  Since  
\begin{equation}
\begin{aligned}
&\mathbb E \left[\mathsf d(X, \hat X) | C = c\right] =\int_{x \in \mathcal X} dP_{X|C = c}(x)\cdot\\ &\int_{U \in \mathcal U} dP_{U | X, C = c}(u)\int_{\hat x \in \hat{\mathcal X}} dP_{\hat X | U, C = c} (\hat x) \mathsf d(x, \hat x),
\end{aligned}
\end{equation}
the knowledge of side information $Y$ is useless at the encoder, i.e. for any encoder-decoder pair $\left(P_{U|X, Y, C},~ P_{\hat X | U, C}\right)$, the pair $\left(P_{U|X, C}, P_{\hat X | U, C}\right)$, where $P_{U|X, C}$ is the marginal of $P_{U|X, Y, C} P_{Y | X, C}$, achieves the same expected distortion. 

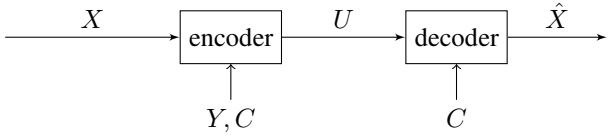
\begin{figure}[h!]
\centering
\tikzstyle{int}=[draw, fill=white!20, minimum size=2em]
\tikzstyle{init} = [pin edge={to-,thin,black}]
\begin{tikzpicture}[node distance=3cm,auto,>=latex']
    \node [int, pin={[init]below:$Y,C$}] (a) {encoder};
    \node (b) [left of=a,node distance=3cm, coordinate] {a};
    \node [int,pin={[init]below:$C$}] (c) [right of=a] {decoder};
    \node [coordinate] (end) [right of=c, node distance=2cm]{};
    \path[->] (b) edge node {$X$} (a);
    \path[->] (a) edge node {$U$} (c);
    \draw[->] (c) edge node {$\hat X$} (end) ;
\end{tikzpicture}
\caption{$Y$ only available at the encoder, $C$ available at both the encoder and the deocder }
\label{X-Y-Z}
\end{figure}
 
In our problem, the infimum of the long-term average MSE in \eqref{10c} corresponds to a causal sampling policy and a sequence of sample distortion allocations $\mathbb E[W_{\tau_{i}}-\hat W_{\tau_{i}}]^2$, $i=1,2,\dots$.  At time $\tau_1$, we take $X = W_{\tau_1}$, $Y = \{W_t\}_{0<t<\tau_1}$, $C = \tau_1$, and $d(X,\hat X)=(X-\hat X)^2$. To achieve a given sample distortion $\mathbb E\left[(W_{\tau_1}-\hat W_{\tau_1})^2\right]$, the random compressing policy needs to only take into account $W_{\tau_1}$ and $\tau_1$. Inductively, at time $\tau_{i}$, the encoder knows $\{W_t\}_{t=0}^{\tau_{i}}$. Both the encoder and the decoder know $U^{i-1}$ and $\tau^{i}$. Since $\hat W_{\tau_{i-1}}$ is known once $U^{i-1}$ and $\tau^{i-1}$ are given, 
\begin{equation}
\begin{aligned}
&\mathbb E[W_{\tau_{i}}-\hat W_{\tau_{i}}]^2=\\
&\mathbb E\left[\left(W_{\tau_{i}}-\hat W_{\tau_{i-1}}-\mathbb E\left[W_{\tau_{i}}-\hat W_{\tau_{i-1}}|U^i,\tau^i\right]\right)^2\right]. \label{48b}
\end{aligned}
\end{equation}
Take $X=W_{\tau_{i}}-\hat W_{\tau_{i-1}}$, $\hat X = \mathbb E[W_{\tau_{i}}-\hat W_{\tau_{i-1}}|U^i,\tau^i]$, $U= U_{i}$, $C= \{U^{i-1},\tau^i\}$, and $Y$ is everything known at the encoder excluding $X$ and $C$. It follows that for the purpose of achieving the sample distortion $\mathbb E[W_{\tau_{i}}-\hat W_{\tau_{i}}]^2$, at time $\tau_i$, the randomized compressing policy needs to only take into account $W_{\tau_{i}}-\hat W_{\tau_{i-1}}$, $U^{i-1}$, and $\tau^i$.

\subsection{Proof of Lemma~\ref{6}}\label{D}
Denote the IDRF for discrete-time samples of the Wiener process
\begin{equation}\label{GM}
    W_{\tau_{i+1}} =W_{\tau_i} + V_{\tau_i},~V_{\tau_i}\sim \mathcal N(0,T_i)
\end{equation}
by $\tilde D_N(R_s)$ \eqref{idfrf_b}. Using the representation of its dual in \cite[Eq. (18)]{10} derived using  a semi-definite programming approach, we represent $\tilde D_N(R_s)$ as
\begin{equation} \label{DRS_GM}
\tilde D_N(R_s)=\inf_{\substack{D_i\geq0,~i=1,\dots,N\colon\\D_{i-1}+T_{i-1}\geq D_i,\quad i=1,2,\cdots,N,\\ 
          \\ \frac{1}{N}\left(\sum_{i=1}^{N}  \frac{1}{2}\log(D_{i-1}+T_{i-1})-\frac{1}{2}\log D_i\right)\leq R_s.}}  \sum_{i=1}^{N}T_iD_i.
\end{equation}

Since the sampling intervals $T^N$ are deterministic, we calculate the summand in \eqref{idfrf_a} as
\begin{equation} \label{trans_T_i}
\mathbb E\biggl[\int_{\tau_{i}}^{\tau_{i+1}}(W_t-W_{\tau_i})^2dt\biggr]
=\mathbb E\biggl[\int_{0}^{T_i} W_t^2dt\biggr] =\frac {T_i^2}{2}.
\end{equation}
Plugging \eqref{DRS_GM} and \eqref{trans_T_i} into \eqref{idfrf}, we can write $D_{\mathrm{DET}}(f,R_s)$ as 
\begin{equation}\label{81_D}
    D_{\mathrm{DET}}(f,R_s) = \limsup_{T\rightarrow\infty}\inf_{\substack{\pi_T\in \Pi^{\mathrm{DET}}_{T}\\ \eqref{f}}} \frac{1}{T}\left(\sum_{i=0}^N\frac{T_i^2}{2}+\tilde D_N(R_s)\right).
\end{equation}
Note that when $T\rightarrow\infty$, the number of samples $N$ must increase no slower than $\sqrt{T}$. Indeed, since the largest sampling interval satisfies 
\begin{equation}
    \max_{i=0,\dots,N} T_i \geq \frac{T}{N+1},
\end{equation}
the summand in \eqref{81_D}
\begin{equation}
    \frac{\max_i T_i^2}{2T} \geq \frac{T}{2(N+1)^2}
\end{equation}
will blow up to infinity if $N$ increases slower than $\sqrt{T}$. Thus, $N\rightarrow\infty$ as $T\rightarrow\infty$. Therefore, we can replace the $\limsup_{T\rightarrow\infty}$ in \eqref{81_D} by $\limsup_{N\rightarrow\infty}$ and obtain \eqref{24}, where we replace $T$ in \eqref{81_D} by $\frac{f}{N}$ as permitted by \eqref{f}, and we replace the minimization constraint \eqref{f} in \eqref{81_D} by its equivalent \eqref{f_equi}.

\subsection{Proof of Lemma~\ref{Lemm2}}\label{proofLemm2}
We split $D_N(f,R_s)$ \eqref{24bb} into the following optimization problems: 
\begin{subequations}
\begin{align}\label{37bb}
& D_N(f,R_s) \triangleq \inf_{\substack{T_0\geq 0,T_N\geq 0 \colon\\ T_0+T_N\leq \frac{N}{f}}} D_N(f,R_s,T_0,T_N),\\  \nonumber
& D_N(f,R_s,T_0,T_N)\triangleq\\ \label{37cc}
&\min_{\substack{T_1,\dots,T_{N-1}\geq 0\colon\\ \frac{1}{N}\sum_{i=1}^{N-1} T_i = \frac{1}{f}-\frac{T_0+T_N}{N}}}\frac{f}{N}\left(\sum_{i=0}^{N}\frac{T_i^2}{2}+ D_N\left(f,R_s,T^N\right)\right),\\ \label{25c}
& D_N\left(f,R_s,T^N\right)\triangleq \min_{\substack{D^N\geq 0\colon\\\eqref{dfrs_cons}}} \sum_{i=1}^{N}T_iD_i.
\end{align}
\end{subequations}
Denote by $\underline{D}_N(f,R_s,T^N)$ the lower bound to $D_N(f,R_s,T^N)$ obtained by deleting the minimization constraint \eqref{min2} in \eqref{25c}, i.e.
\begin{equation}\label{DLfRT}
    \underline{D}_N(f,R_s,T^N) \triangleq \min_{\substack{D^N\geq 0\colon\\\eqref{dfrs_cons_a}}} \sum_{i=1}^{N}T_iD_i,
\end{equation}
Denote by $\underline{D}_N(f,R_s,T_0,T_N)$ the corresponding lower bound to $D_N(f,R_s,T_0,T_N)$ in \eqref{37cc}:
\begin{equation}\label{37cc_l}
\begin{aligned}
& \underline{D}_N(f,R_s,T_0,T_N)\triangleq\\
&\min_{\substack{T_1,\dots,T_{N-1}\geq 0\colon\\ \frac{1}{N}\sum_{i=1}^{N-1} T_i = \frac{1}{f}-\frac{T_0+T_N}{N}}}\frac{f}{N}\left(\sum_{i=0}^{N}\frac{T_i^2}{2}+ \underline{D}_N\left(f,R_s,T^N\right)\right).
\end{aligned}
\end{equation}

We will calculate the corresponding lower bound to $D_N(f,R_s)$:
\begin{equation}\label{37bb_l}
 \underline{D}_N(f,R_s) \triangleq \min_{\substack{T_0\geq 0,T_N\geq 0 \colon\\ T_0+T_N\leq \frac{N}{f}}} \underline{D}_N(f,R_s,T_0,T_N).
\end{equation}
We first show that the optimization problem in the right-hand side of \eqref{DLfRT} is a convex optimization problem that satisfies Slater's condition, i.e. strong duality holds. Then, we solve its Lagrangian dual problem to get the optimal $D_1^*\dots,D_N^*$ in \eqref{28ab} that achieve the minimum in the right-hand side of \eqref{DLfRT}, where $\lambda^*(f,R_s,N)\geq 0$ is the unique solution to \eqref{lambda_exist}.

The objective function $\sum_{i=1}^N T_iD_i$ \eqref{DLfRT} is an affine function in $D^N$. Furthermore, $z\left(D^N\right)$ is a convex function since
\begin{subequations}
\begin{align}
    &\frac{\partial^2z\left(D^N\right)}{\partial D_i^2} = \frac{\log e T_i(2D_i+T_i)}{N(D_i^2+D_iT_i)^2}\geq 0,~\forall i=1,\dots,N-1,\\
    &\frac{\partial^2z\left(D^N\right)}{\partial D_N^2} = \frac{\log e}{ND_N^2}\geq 0,\\
    &\frac{\partial^2z\left(D^N\right)}{\partial D_i \partial D_j} = 0,~\forall i,j = 1,\dots,N.
\end{align}
\end{subequations}
Therefore, the minimization problem in the right-hand side of \eqref{DLfRT} is convex. Notice that $z(D,D,\dots,D)$ decreases from $+\infty$ to $-\infty$ as $D$ increases from $0$ to $\infty$. Thus, there exists $\tilde D\geq 0$ such that Slater's condition is satisfied, i.e. 
\begin{equation}\label{slater}
    z\left(\tilde{D},\tilde{D},\dots,\tilde{D}\right)<2R_s.
\end{equation}
We conclude that 1) the strong duality holds, 2) $\underline{D}(f,R_s,T^N)$ can be obtained via its Lagrangian dual problem, and 3) there must exist an optimal Lagrangian multiplier $\lambda^*(f,R_s,N)\geq 0$ that satisfies the complementary slackness \eqref{lambda_exist} in the Karush-Kuhn-Tucker conditions. Indeed, \eqref{lambda_exist} always has a non-negative solution $\lambda^*(f,R_s,N)$, since as a function of $\lambda^*(f,R_s,N)$, $z\left(D^{N*}\right)$ is continuous and monotonically decreasing from $+\infty$ to $-\infty$ as $\lambda^*(f,R_s,N)$ increases from $0$ to $+\infty$.

Plugging $D^{N*}$ \eqref{28ab} into \eqref{DLfRT}, we obtain $\underline{D}_N\left(f,R_s,T^N\right)$ and proceed to evaluate $\underline{D}_N(f,R_s, T_0, T_N)$ in \eqref{37cc_l}, which is given by
\begin{equation}\label{64_follow}
\underline{D}_N(f,R_s,T_0,T_N)=\min_{\substack{T_1,\dots,T_{N-1} \geq 0 \colon\\ \frac{1}{N}\sum_{i=1}^{N-1} T_i = \frac{1}{f}-\frac{T_0+T_N}{N}}}  g(T_1,\dots,T_{N-1}),
\end{equation}
where
\begin{equation}\label{g}
\begin{aligned}
    &g(T_1,\dots,T_{N-1}) \triangleq \frac{f}{2N}\bigg[T_0^2+T_N^2+2\log e\lambda^*(f,R_s,N)\\ &+\sum_{i=1}^{N-1} T_i\sqrt{T_i^2+4\log e\lambda^*(f,R_s,N)}\Biggr).
\end{aligned}
\end{equation}
We make use of the Schur-convexity of \eqref{g} to calculate $\underline{D}_N(f,R_s,T_0,T_N)$. Recall that if a function $f(x^d)$ is symmetric and its first partial derivative with respect to each $x_i$, $i=1,\dots,d$ exits, then $f(x^d)$ is Schur-convex if and only if
\begin{equation}\label{schur_test}
   (x_i-x_j)\left(\frac{\partial f(x^d)}{\partial x_i}-\frac{\partial f(x^d)}{\partial x_j}\right)\geq 0,~ \forall~i,j=1,\dots,d.
\end{equation}

It is clear that $g(T_1,\dots,T_{N-1})$ is symmetric since it is invariant to the permutations of $T_1,\dots,T_{N-1}$. 
To calculate the partial derivatives of \eqref{g}, we first compute the implicit differentiation $\frac{\partial \lambda^*(f,R_s,N)}{\partial T_i}$ by taking the derivative with respect to $T_i$ on the both sides of \eqref{lambda_exist}, yielding
\begin{equation}\label{lambda_deri}
\begin{aligned}
    &\frac{\partial \lambda^*(f,R_s,N)}{\partial T_i} =\frac{1}{\sqrt{T_i^2+4\log e\lambda^*(f,R_s,N)}}\cdot\\ &\frac{2\lambda^*(f,R_s,N)}{1+\sum_{k=1}^{N-1}\frac{T_k}{\sqrt{T_k^2+4\log e\lambda^*(f,R_s,N)}}}.
\end{aligned}
\end{equation}
Using \eqref{lambda_deri} to compute the first partial derivative, we obtain 
\begin{subequations}\label{fpd}
\begin{align}
   &\frac{\partial g(T_1,\dots,T_{N-1})}{\partial T_i}\\\nonumber =&\frac{f}{2N}\Biggl(2\log e\frac{\partial \lambda^*(f,R_s,N)}{\partial T_i}+\sqrt{T_i^2+4\log e\lambda^*(f,R_s,N)}\\
   +&\frac{T_i^2+2\log eT_i\frac{\partial \lambda^*(f,R_s,N)}{\partial T_i}}{\sqrt{T_i^2+4\log e\lambda^*(f,R_s,N)}}\\+& \sum_{\substack{k=1\\k\neq i}}^N \frac{2\log eT_k\frac{\partial \lambda^*(f,R_s,N)}{\partial T_i}}{\sqrt{T_k^2+4\log e\lambda^*(f,R_s,N)}}\Biggr)\\
   =&\frac{f}{N}\sqrt{T_i^2+4\log e\lambda^*(f,R_s,N)}.
\end{align}
\end{subequations}
Using \eqref{fpd}, we can verify that $g(T_1,\dots,T_{N-1})$ satisfies \eqref{schur_test}: 
\begin{equation}\label{788}
\begin{aligned}
    &(T_i-T_j)\frac{f}{N}\cdot\Biggl(\sqrt{T_i^2+4\log e\lambda^*(f,R_s,N)}\\&-\sqrt{T_j^2+4\log e\lambda^*(f,R_s,N)}\Biggr)\geq 0, 
\end{aligned}
\end{equation}
for all $i,j=1,\dots,N-1$. Therefore, $g(T_1,\dots,T_{N-1})$ is a Schur-convex function. 

Let $x=(x_1,\dots,x_d)\in \mathbb R^d$, $y = (y_1,\dots,y_d)\in \mathbb R^d$ be two non-increasing sequences of real numbers. Recall that $x$ is majorized by $y$ if for each $k=1,\dots,d$, $\sum_{i=1}^k x_i\leq \sum_{i=1}^k y_i$ with equality if $k=d$. For a Schur-convex function $f$, if $x$ is majorized by $y$, then $f(x)\leq f(y)$. In our case, the feasible $T_i$'s must satisfy the minimization constraint of the optimization problem in \eqref{64_follow}.  Any sequence $T_1,\dots,T_{N-1}$ that satisfies the minimization constraint of the optimization problem in \eqref{64_follow} majorizes the sequence in \eqref{T_seq}.  Therefore, the infimum in \eqref{64_follow} is achieved by the sequence $T_1^*,\dots,T_{N-1}^*$ in \eqref{T_seq}.

Plugging $T_1^*,\dots,T_{N-1}^*$ \eqref{T_seq} into \eqref{64_follow}, we obtain
\begin{equation}\label{underDnfrsTT}
\begin{aligned}
   &\underline{D}_N(f,R_s,T_0,T_N) = \frac{f}{2} \Biggl(\frac{T_0^2+T_N^2+2\log e\lambda^*(f,R_s,N)}{N}\\ &+\frac{N-1}{N}T^*(f,N)\sqrt{T^*(f,N)^2+4\log e\lambda^*(f,R_s,N)}\Biggr). 
\end{aligned}
\end{equation}
Plugging \eqref{underDnfrsTT} into the right-hand side of  \eqref{37bb_l} completes the proof.

\subsection{Proof of Lemma~\ref{Lemm3}}\label{proofLemm3}
Plugging \eqref{uniform_T} into \eqref{28ab}, we obtain the corresponding optimal sample distortions,
\begin{subequations}\label{D_f}
\begin{align}\nonumber
    &D^*_1=\dots=D^*_{N-1} =\\\label{D_f_a} &\frac{-\frac{N}{f(N+1)}+\sqrt{\left(\frac{N}{f(N+1)}\right)^2+4\log e\lambda^*(f,R_s,N)}}{2},\\
    &D^*_N  = \frac{f(N+1)}{N}\log e\lambda^*(f,R_s,N),
\end{align}
\end{subequations}
where $\lambda^*(f,R_s,N)$ is defined in Lemma~\ref{Lemm3}.

We first show that the $T^N$ in \eqref{uniform_T} and the corresponding $D^N$ in \eqref{D_f} satisfy the deleted constraint \eqref{min2}, then we can plug $T^N$ \eqref{uniform_T} and $D^N$ \eqref{D_f} as feasible solutions into the minimization problem associated with $D_N(f,R_s)$ in \eqref{24bb} to obtain the upper bound in \eqref{DNupper}.

When $i=2,\dots,N-1$, the deleted constraint \eqref{min2} is satisfied trivially, since $D_{i-1}=D_{i}$ and $T_{i-1}\geq 0$. To prove that the deleted constraint \eqref{min2} also holds at $i=1$ and $N$, we upper bound $\lambda^*(f,R_s,N)$ for every $N > 2$. When 
\begin{equation}
    T_1=\dots=T_{N-1},
\end{equation}
we can rearrange terms in the complementary slackness condition \eqref{lambda_exist} and conclude $x = \lambda^*(f,R_s,N)\log e$ is the unique solution to the following equation,
\begin{equation}\label{HN_X}
    h_N(T_0,T_N,T_1,R_s,x) -x = 0,
\end{equation}
where
\begin{equation}\label{H_N}
\begin{aligned}
&h_N(T_0,T_N,T_1,R_s,x) \triangleq\\
&\frac{T_1^2}{2^{2R_{s}+\frac{2}{N-1}R_{s}-\frac{\log T_0+\log T_N}{N-1}+\frac{\log x}{N-1}}-1}\\
+& \left(\frac{T_1}{2^{2R_{s}+\frac{2}{N-1}R_{s}-\frac{\log T_0+\log T_N}{N-1}+\frac{\log x}{N-1}}-1}\right)^2.
\end{aligned}
\end{equation}
Note that the left-hand side of \eqref{HN_X} monotonically decreases as $x$ increases.

Given $R_s$, plugging \eqref{uniform_T} into the left-hand side of \eqref{HN_X}, we conclude that the $\lambda^*(f,R_s,N)$ in Lemma~\ref{Lemm3} is the unique solution to the following equation, 
\begin{equation}\label{equa1}
    h_N\left(\frac{N}{f(N+1)},\frac{N}{f(N+1)},\frac{N}{f(N+1)},R_s,x\right)-x=0,
\end{equation}
 Plugging
\begin{equation}
    x = \frac{N^2}{2f^2(N+1)^2}
\end{equation}
into \eqref{equa1}, we observe that the left-hand side of \eqref{equa1} is less or equal to $0$ for all $N>2$. Thus, we conclude
\begin{equation}\label{ub_l}
   \lambda^*(f,R_s,N)\log e\leq \frac{N^2}{2f^2(N+1)^2},~\forall~N>2.
\end{equation}
Plugging \eqref{ub_l} into \eqref{D_f}, we obtain
\begin{subequations}\label{D1N}
\begin{align}
&D_1^* \leq \sqrt{\lambda^*(f,R_s,N)\log e}\leq \frac{N}{f(N+1)}, \\
&D_N^* \leq \frac{N}{2f(N+1)},
\end{align}
\end{subequations}
Substituting \eqref{uniform_T} and \eqref{D1N} into \eqref{min2}, we conclude that \eqref{min2} holds for $i=1$ and $i=N$. 

Now, we can plug \eqref{uniform_T} and \eqref{D_f} as feasible solutions into \eqref{24bb} to obtain the right-hand side of \eqref{DNupper}.

\subsection{Proof of Lemma~\ref{Lemm4}}\label{proofLemm4}
From Lemmas~\ref{Lemm2}~and~\ref{Lemm3}, and \eqref{37aa},
\begin{equation}\label{new_display}
    \liminf_{N\rightarrow\infty}\underline{D}_N(f,R_s)\leq D_{\mathrm{DET}}(f,R_s) \leq \limsup_{N\rightarrow\infty}\bar D_N(f,R_s).
\end{equation}
We prove \eqref{underlineDET} by showing that both bounds are equal to the right-hand side of \eqref{underlineDET}.

To compute the lower bound in \eqref{new_display}, we need to understand the behavior of  $T^*(f,N)$, $\lambda^*(f,R_s,N)$ and $T_0^*$, $T_N^*$ as $N$ goes to infinity, where $T_0^*$, $T_N^*$ achieve the minimum of the left-hand side of \eqref{new_display}. $T_0^*$ and $T_N^*$ must increase as
\begin{equation}\label{on}
  T_0^*+T_N^* = O\left(\sqrt{N}\right),  
\end{equation}
or $\frac{{T_0^*}^2+{T_N^*}^2}{N}$ in \eqref{38bb} will blow up to infinity as $N\rightarrow\infty$. Substituting \eqref{on} to \eqref{T_seq}, we obtain
\begin{equation}\label{T_i_n1}
    T^*(f,N)=\frac{1}{f}+O\left(\frac{1}{\sqrt{N}}\right).
\end{equation}

We proceed to compute
\begin{equation}\label{lam}
    \lambda^* \triangleq \lim_{N\rightarrow\infty}\lambda^*(f,R_s,N).
\end{equation}
For given $T_0^*$, $T_N^*$ and $R_s$, $x=\lambda^*(f,R_s,N)\log e$ is the unique solution to \eqref{HN_X} with $T_0$, $T_N$, and $T(N)$ replaced by $T_0^*$, $T_N^*$ and $T^*(f,N)$ in \eqref{T_seq}.
We prove that
\begin{subequations}\label{lamm}
\begin{align}\label{lamm1}
&\lambda^*\log e\geq \frac{1}{2^{2R_s}f^2}, \\\label{lamm2}
&\lambda^*\log e\leq \frac{1}{2f^2}.
\end{align}
\end{subequations}
We substitute \eqref{on} and \eqref{T_i_n1} into the left-hand side of \eqref{HN_X} and take $\lim_{N\rightarrow\infty}$ to conclude that
\begin{equation} \label{up_lo}
   \lim_{N\rightarrow\infty} h_N\left(T_0^*,T_N^*,T^*(f,N),R_s,\frac{1}{2f^2}\right)-\frac{1}{2f^2}\leq 0.
\end{equation}
Using the fact that the left-hand side of \eqref{HN_X} is monotonically decreasing in $x$, we conclude \eqref{lamm1} holds.
To prove \eqref{lamm2}, we similarly compute
\begin{equation}\label{lo_up}
     \lim_{N\rightarrow\infty}h_N\left(T_0^*,T_N^*,T^*(f,N),R_s,\frac{1}{2^{2R_s}f^2}\right)-\frac{1}{2^{2R_s}f^2}\geq 0.
\end{equation}

Via the squeeze theorem, \eqref{lamm} implies
\begin{equation}\label{log_lamb1}
    \lambda^*(f,R_s,N)= O(1).
\end{equation}
Plugging \eqref{on}, \eqref{T_i_n1} and \eqref{log_lamb1} into \eqref{HN_X}, and taking $N\rightarrow\infty$ on both sides of \eqref{HN_X}, we obtain 
\begin{equation}\label{70_up}
\lambda^*\log e=\frac{1}{f^2(2^{2R_s}-1)^2} + \frac{1}{f^2(2^{2R_s}-1)}.
\end{equation}

Plugging \eqref{on}, \eqref{T_i_n1} and \eqref{70_up} into the right-hand side of \eqref{38bb} and taking $\lim_{N\rightarrow\infty}$, we compute
\begin{subequations} \label{up}
\begin{align}\nonumber
     &\lim_{N\rightarrow\infty}
   \underline{D}_N(f,R_s)\\ \label{sum22}
    &= \frac{1}{2f}+\frac{1}{f(2^{2R_s}-1)}+\lim_{N\rightarrow\infty}\inf_{\substack{T_0\geq0,T_N\geq 0\\ T_0+T_N\leq \frac{N}{f}}}
    \frac{f}{2}\left(\frac{T_0^2+T_N^2}{N}\right)\\ \label{97cc}
    &= \frac{1}{2f}+\frac{1}{f(2^{2R_s}-1)},
\end{align}
\end{subequations}
where $0$ is achieved in the last term of \eqref{sum22} by choosing any pair of $T_0, T_N\geq 0$ that satisfies
\begin{equation}\label{final_cons}
   T_0+T_N=o\left(\sqrt{N}\right).
\end{equation}
We choose $T_0$ and $T_N$ in \eqref{uniform_T} that satisfy \eqref{final_cons}, such that together with $T_1,\dots,T_{N-1}$ in \eqref{uniform_T}, the lower bound of $D_{\mathrm{DET}}(f,R_s)$ in \eqref{new_display} is achieved. 

Now, we compute the upper bound in the right-hand side of \eqref{new_display}. $\lambda^*(f,R_s,N)\log e$ in \eqref{DNupper_1} is the unique solution to \eqref{HN_X}. Note that \eqref{70_up} holds for any $T_0$ and $T_N$ that satisfy \eqref{on}. Since $T_0$ and $T_N$ in \eqref{uniform_T} satisfy \eqref{on}, we conclude that the $\lim_{N\rightarrow\infty}$ of $\lambda^*(f,R_s,N)\log e$ in \eqref{DNupper_1} is also equal to \eqref{70_up}. Plugging \eqref{70_up} into the right-hand side of \eqref{DNupper_1} and taking $\limsup_{N\rightarrow\infty}$, we calculate that the upper bound of $D_{\mathrm{DET}}(f,R_s)$ in \eqref{new_display} is equal to \eqref{97cc}. 

Furthermore, we observe that the uniform sampling intervals \eqref{uniform_T} achieving both the upper and the lower bound of $D_{\mathrm{DET}}(f,R_s)$, converge to $\frac{1}{f}$ asymptotically. We conclude that the uniform sampling policy with the sampling interval $\frac{1}{f}$ achieves $D_{\mathrm{DET}}(f,R_s)$.

\subsection{Proof of Lemma~\ref{Lemm5}}\label{proofLemm5}
The max-min inequality and \eqref{IDRF_IDFRF} imply that
\begin{equation}\label{minlim_upper}
D_{\mathrm{DET}}(R)
\leq \min_{\substack{f>0,R_s\geq 1\colon\\fR_s\leq R}}\limsup_{N\rightarrow\infty}  \bar{D}_N(f,R_s).
\end{equation}
On the other hand,
\begin{subequations}\label{minlim_under}
\begin{align}\label{101a}
D_{\mathrm{DET}}(R) &\geq \lim_{N\rightarrow\infty} \inf_{\substack{f>0,R_s\geq 1\colon\\fR_s\leq R}} \underline{D}_N(f,R_s)\\ \label{101b}
& = \inf_{\substack{f>0,R_s\geq 1\colon\\fR_s\leq R}}\lim_{N\rightarrow\infty}\underline{D}_N(f,R_s),
\end{align}
\end{subequations}
where \eqref{101a} is by \eqref{IDRF_IDFRF}, and \eqref{101b} will be proved in the sequel.
Using \eqref{new_display} with both bounds equal to each other, \eqref{minlim_upper} and \eqref{minlim_under}, we complete the proof of Lemma~\ref{Lemm5}.

We proceed to prove \eqref{101b} via the fundamental theorem of $\Gamma$-convergence. Let $\mathcal X$ be a topological space and $G_N:\mathcal{X}\rightarrow[0,+\infty]$, $N=1,2,\dots,$ be a sequence of functions defined on $\mathcal{X}$. 
A sequence of functions $G_N$, $N=1,2,\dots$ $\Gamma$-converges to its $\Gamma$-limit $G\colon\mathcal X\rightarrow[0,+\infty]$ if \cite{29}:

(i) For every $x\in \mathcal{X}$, and for every sequence $x_N\in\mathcal X, N=1,2,\dots$ converging to $x$,
\begin{equation}
    G(x)\leq \liminf_{N\rightarrow\infty} G_N(x_N).
\end{equation}

(ii) For every $x\in \mathcal{X}$, there exists a sequence $x_N\in\mathcal X, N=1,2,\dots$ converging to $x$ such that
\begin{equation}
    G(x)\geq \limsup_{N\rightarrow\infty} G_N(x_N).
\end{equation}

A sequence of functions $G_N$, $N=1,2,\dots$ is equicoercive \cite{29} if there exists a compact set $\mathcal{K}$ that is independent of $N$, such that 
\begin{equation}\label{equicoercive}
    \inf_{x\in \mathcal{X}} G_N(x) = \inf_{x\in \mathcal{K}}G_N(x).
\end{equation}

The fundamental theorem of $\Gamma$-convergence \cite{29} says that if $G_N$ is equicoercive and $\Gamma$-converges to $G\colon\mathcal X\rightarrow [0,+\infty]$, then we have,
\begin{equation}
    \min_{x\in \mathcal{X}}G(x) = \lim_{N\rightarrow\infty} \inf_{x\in \mathcal{X}}G_N(x).
\end{equation}

We will show that for any scalars $f>0$, $R_s\geq 1$ and for any sequences $f_{(N)}\rightarrow f$, $R_{s(N)}\rightarrow R_s$, we have
\begin{equation}\label{limDN}
    \lim_{N\rightarrow\infty} \underline{D}_N(f_{(N)},R_{s(N)}) = D_{\mathrm{DET}}(f,R_s),
\end{equation}
which means in particular that $D_{\mathrm{DET}}(\cdot,\cdot)$ is the $\Gamma$-limit of $\underline{D}_{N}(\cdot,\cdot)$.
We will also prove that $\underline{D}_N(f,R_s)$ is equicoercive, and \eqref{101b} will follow via the fundamental theorem of $\Gamma$-convergence.

We verify that the reasoning in \eqref{on}-\eqref{up} goes through replacing $f$ and $R_s$ by $f_{(N)}$ and $R_{s(N)}$ respectively, hence \eqref{limDN} holds.

It remains to prove that $\underline{D}_N(f,R_s)$ is equicoercive.
Ignoring the two non-negative $\lambda^*(f,R_s,N)$ terms in the right-hand side of \eqref{38bb}, we observe that
\begin{subequations}
\begin{align}\nonumber
    &\underline{D}_N(f,R_s) \\\label{116b}
    \geq& \inf_{\substack{T_0\geq0,T_N\geq 0\\ T_0+T_N\leq \frac{N}{f}}}
    \frac{f}{2} \left(\frac{T_0^2+T_N^2}{N}+\frac{N-1}{N}T^*(f,N)^2\right)\\ \nonumber 
    =& \inf_{\substack{T_0\geq0,T_N\geq 0\\ T_0+T_N\leq \frac{N}{f}}}\frac{1}{2}\Biggl( f\frac{T_0^2+T_N^2}{N}\\\label{116c}
    +&\frac{N}{f(N-1)}\left(1-\frac{f(T_0+T_N)}{N}\right)^2\Biggr),
\end{align}
\end{subequations}
where \eqref{116c} is obtained by plugging \eqref{T_seq} into \eqref{116b}. Denote the objective function in \eqref{116c} by $q(T_0,T_N)$. We prove that $q(T_0,T_N)$ is a Schur-convex function: 1) $q(T_0,T_N)$ is symmetric, since it is invariant to the permutations of $T_0$ and $T_N$; 2) the first-order partial derivatives of $q(T_0,T_N)$ with respect to $T_0$ and $T_N$ are 
\begin{subequations}\label{q_deriv}
\begin{align}
&\frac{\partial q}{\partial T_0} = \frac{f}{N}T_0+\frac{f}{N(N-1)}(T_0+T_N)-\frac{1}{N-1}, \\
&\frac{\partial q}{\partial T_N} = \frac{f}{N}T_N+\frac{f}{N(N-1)}(T_0+T_N)-\frac{1}{N-1},
\end{align}
\end{subequations}
where \eqref{q_deriv} satisfies \eqref{schur_test}. Using the property of Schur-convex functions stated in Lemma~\ref{Lemm2} after \eqref{788}, we know that the minimum of $q(T_0,T_N)$ is achieved by
\begin{equation}\label{a_resp}
    T_0=T_N = a.
\end{equation}
for some
\begin{equation}\label{2a_cons}
    0\leq a\leq \frac{N}{2f}.
\end{equation}
Plugging \eqref{a_resp} into $q(T_0,T_N)$, and minimizing $q(a,a)$ under the constraint \eqref{2a_cons}, we find that the optimal $a$ that minimizes $q(a,a)$ is given by
\begin{equation}\label{a_equal}
    a = \frac{N}{(N+1)f}.
\end{equation}
Plugging \eqref{a_resp} and \eqref{a_equal} into \eqref{116c}, we obtain
\begin{equation}\label{127}
    \underline{D}_N(f,R_s) \geq \frac{N^2}{2f(N+1)^2}.
\end{equation}
On the other hand, plugging \eqref{ub_l} into the right-hand side of \eqref{476}, we obtain
\begin{equation}\label{128}
\bar D_N(f,R_s) \leq \frac{3N}{2f(N+1)^2}+\frac{\sqrt{3}N(N-1)}{2f(N+1)^2}.
\end{equation}
Choosing $f=R$ in \eqref{128}, we conclude that
\begin{equation}\label{129}
    \inf_{\substack{f>0,R_s\geq 1\\ fR_s\leq R}} \underline{D}_N(f,R_s)\leq \frac{3N}{2R(N+1)^2}+\frac{\sqrt{3}N(N-1)}{2R(N+1)^2}.
\end{equation}
For any
\begin{equation}\label{f_exclude}
    f\in\left(0,\frac{R}{3+\sqrt{3}}\right),
\end{equation}
the right-hand side of \eqref{127} is larger than the right-hand side of \eqref{129}, thus $f$ in \eqref{f_exclude} cannot attain the infimum in \eqref{129}. It follows that the infimum is attaned in the following compact set for $f$, 
\begin{equation}\label{f_compact}
f\in\left[\frac{R}{3+\sqrt{3}}, R\right],
\end{equation}
where the upper bound of $f$ is obtained by lower-bounding $R_s$ by $1$. Correspondingly, $R_s$ lies within the following compact set,
\begin{equation}\label{R_compact}
    R_s\in \left[1,3+\sqrt{3}\right],
\end{equation}
Using \eqref{f_compact} and \eqref{R_compact}, we conclude that $\underline D_N(f,R_s)$ is equicoercive.

\subsection{Proof of Proposition~\ref{prop3}}\label{G}
We derive a lower bound to \eqref{D_new_channel}, and show that the lower bound is achieved by the SOI code.
Note that \eqref{D_new_channel} is lower bounded by
\begin{equation}\label{channel_lb_new_d}
    \limsup_{T\rightarrow\infty}\inf_{\substack{\pi_T\in \Pi_T\colon\\\frac{\mathbb E[N]}{T}\leq R }}\frac{1}{T}\mathbb{E}\left[\sum_{i=0}^{N}\int_{\tau_i+\delta}^{\tau_{i+1}+\delta}(W_t-\tilde W_t)^2 dt\right],
\end{equation}
where 
\begin{align}\label{channel_lb_new}
    \tilde W_t~&\triangleq\mathbb E[W_t|\{W_s\}_{s=0}^{\tau_i},\tau^i]\\\label{channel_lb_new_r}
    &=W_{\tau_i},~t\in[\tau_i+\delta,\tau_{i+1}+\delta),
\end{align}
since $\sigma(U^i)\subset\sigma(\{W_s\}_{s=0}^{\tau_i})$ and \eqref{NL}.
Plugging \eqref{channel_lb_new_r} into the lower bound \eqref{channel_lb_new_d}, we obtain the objective function,
\begin{subequations}
\allowdisplaybreaks
\begin{align}\label{62_real}
&\frac{1}{T}\mathbb{E}\left[\sum_{i=0}^{N}\int_{\tau_i}^{\tau_{i+1}}(W_t-W_{\tau_i})^2 dt\right]\\
-&~\frac{1}{T}\mathbb{E}\left[\sum_{i=0}^{N}\int_{\tau_i}^{\tau_{i}+\delta}(W_t-W_{\tau_i})^2 dt\right]\\
+&~\frac{1}{T}\mathbb{E}\left[\sum_{i=0}^{N}\int_{\tau_{i+1}}^{\tau_{i+1}+\delta}(W_t-W_{\tau_i})^2 dt\right]\\
=&~\frac{1}{T}\mathbb{E}\left[\sum_{i=0}^{N}\int_{\tau_i}^{\tau_{i+1}}(W_t-W_{\tau_i})^2 dt\right]\\
-&~\frac{1}{T}\mathbb E\left[\sum_{i=0}^N\frac{\delta^2}{2}\right]\\
+&~\frac{1}{T}\mathbb E\left[\sum_{i=0}^N\frac{\delta^2}{2}+\delta(\tau_{i+1}-\tau_i)\right]\\\label{62_real_end}
=&~  \frac{1}{T}\mathbb{E}\left[\sum_{i=0}^{N}\int_{\tau_i}^{\tau_{i+1}}(W_t-W_{\tau_i})^2 dt\right]+\delta.
\end{align}
\end{subequations}
Note that the first part of \eqref{62_real_end} is equal to \eqref{thm1_dec_d} in the delay-free case, and $\delta$ is a fixed number. Following the arguments in the paragraph below \eqref{thm1_dec_d}, we conclude that the SOI coding scheme achieves \eqref{channel_lb_new_d}.
\end{document}